\newcommand{\fR}{f_R}
\newcommand{\fL}{f_L}
\newcommand{\Pe}{\text{Pe}}
\newcommand{\p}{{\bf p}}
\newcommand{\q}{{\bf q}}
\newcommand{\e}{{\bf e}}
\newcommand{\x}{{\bf x}}
\newcommand{\X}{{\bf X}}
\newcommand{\ud}{\mathrm{d}}
\newcommand{\txi}{\tilde \xi}
\newcommand{\tx}{\tilde \x}
\newcommand{\tr}{\tilde r}
\renewcommand{\tt}{\tilde \theta}
\newcommand{\rhop}{\tilde \rho}
\newcommand{\fp}{\tilde f}
\newcommand{\pp}{\tilde \p}
\newcommand{\bfb}{{\bf b}}
\newcommand{\Dphi}{D_e(\phi)}
\newcommand{\weak}{\rightharpoonup}
\crefname{hypothesis}{Hypothesis}{Hypotheses}
\title{Phase Separation in Systems of Interacting Active Brownian Particles\thanks{
Submitted to the editors \today.
\funding{This work of the first author was partially funded by a Royal Society University Research Fellowship (grant number URF/R1/180040) and a Humboldt Research Fellowship from the Alexander von Humboldt Foundation. The second and the third author gratefully acknowledge support from the German Science Foundation (DFG) through CRC TR 154  ``Mathematical Modelling, Simulation and Optimization Using the Example of Gas Networks". The third author was supported by the Advanced Grant Nonlocal-CPD (Nonlocal PDEs for Complex Particle Dynamics: Phase Transitions, Patterns and Synchronization) of the European Research Council Executive Agency (ERC) under the European Union’s Horizon 2020 research and innovation programme (grant agreement No. 883363).
The fourth author was funded by the Royal Society Award (RGF/EA/181043).}}}
\author{Maria Bruna\thanks{Department of Applied Mathematics and Theoretical Physics, University of Cambridge, Cambridge CB3 0WA, UK 
  (\email{bruna@maths.cam.ac.uk)}.}
\and Martin Burger\thanks{Department Mathematik, Friedrich-Alexander Universit\"at Erlangen-N\"urnberg, Cauerstr. 11, D 91058 Erlangen, Germany 
  (\email{martin.burger@fau.de}).}
\and Antonio Esposito\thanks{Mathematical Institute, University of Oxford, Oxford OX2 6GG, UK (\email{antonio.esposito@maths.ox.ac.uk}).}
\and Simon M. Schulz\thanks{Department of Mathematics, University of Wisconsin-Madison, Van Vleck Hall, 480 Lincoln Dr, Madison, WI 53706, USA, \email{(smschulz2@wisc.edu)}.}}
\begin{document}

\maketitle

\begin{abstract}
The aim of this paper is to discuss the mathematical modeling of Brownian active particle systems, a recently popular paradigmatic system for self-propelled particles. We present four microscopic models with different types of repulsive interactions between particles and their associated macroscopic models, which are formally obtained using different coarse-graining methods. The macroscopic limits are integro-differential equations for the density in phase space (positions and orientations) of the particles and may include nonlinearities in both the diffusive and advective components. In contrast to passive particles, systems of active particles can undergo phase separation without any attractive interactions, a mechanism known as motility-induced phase separation (MIPS). We explore the onset of such a transition for each model in the parameter space of occupied volume fraction and P\'eclet number via a linear stability analysis and numerical simulations at both the microscopic and macroscopic levels. We establish that one of the models, namely the mean-field model which assumes long-range repulsive interactions, cannot explain the emergence of MIPS. In contrast, MIPS is observed for the remaining three models that assume short-range interactions that localize the interaction terms in space. 
\end{abstract}

\begin{keywords}
  self-propelled particles, phase separation, excluded-volume interactions, stability analysis
\end{keywords}

\begin{AMS}
  35Q84, 35R09, 35B35, 60J70, 35C20
\end{AMS}

\section{Introduction}

Active matter systems consisting of many interacting self-propelled particles occur in many applications ranging from synthetic self-propelled colloids \cite{Buttinoni:2013de}, microtubules \cite{Sumino.2012}, and bacterial suspensions \cite{Berg:1993ug}, to large-scale systems such as fish schools, bird flocks, and collective robotics \cite{Vicsek:2012gp}. 

Individual-based models of active matter can be broadly divided into velocity-jump processes and active Brownian walks. 
Velocity-jump processes consist of a sequence of runs and reorientations at randomly distributed times, when a new velocity is chosen \cite{Othmer:1988kx}. One of the best-known examples in nature is the run-and-tumble of E. coli bacteria, which are flagellated bacteria that move in roughly straight lines and constant velocity interrupted by sudden reorientations \cite{Berg:1993ug}. In contrast, active Brownian particles (ABPs) are used to model particles that change their orientation gradually by Brownian motion. The prototypical experimental systems are synthetic colloids (often with energy transfer based on photoactivity) with an extensive range of prospective applications (cf. \cite{shields2017evolution,zottl2016emergent}).
However, ABPs can also be seen as a model system for a broader class of self-propelled particles including velocity-jump processes, as the latter can be shown to reduce to ABPs after a suitable time- and space-rescaling in some cases \cite{Cates:2013ia}.
As a result, ABPs are natural model systems studied frequently in condensed matter physics, which incorporates typical pattern formation effects \cite{Bialke:2013gw,Buttinoni:2013de,Cates:2014tr,Redner.2013,Romanczuk:2012iz,Solon:2015hz,Speck:2015um,Stenhammar:2015ex,Wittkowski:2014dt}. A (single) ABP in a two-dimensional space evolves according
\begin{align} \label{sde_single}
	\ud \X = \sqrt{2 D_T} \ud {\bf W} + v_0 \e(\Theta) \ud t,\qquad 
	\ud \Theta = \sqrt{2 D_R} \ud W,
\end{align}
where $\X \in \mathbb R^2$ and $\Theta \in [0, 2\pi)$ denote the position and orientation respectively of the particle, $D_T$ and $D_R$ are the translational and rotational diffusion coefficients, respectively, $v_0$ is the \emph{self-propulsion} speed, $\e(\theta) = (\cos\theta, \sin \theta)$ is the orientation vector, and $\bf W$ and $W$ are independent two- and one-dimensional Wiener processes, respectively. 

A central difference between active Brownian particles and their passive counterparts is that, since ABPs consume energy to move, active matter systems are inherently out-of-equilibrium.
Because of this, ABPs display much richer behavior. One of the most intriguing ones is the presence of phase separation despite only having purely repulsive interactions, and the emergence of macroscopic behaviour traditionally associated with passive systems with attractive or attractive-repulsive interactions. This mechanism is known as \emph{Motility-induced phase-separation} (MIPS), as it is due to the interplay between self-propulsion and excluded-volume interactions \cite{Cates:2014tr}.

Active systems have been widely studied by simulation and phenomenological models in the recent years, in particular in the (theoretical) physics community. Most work is based on extensive simulations at the microscopic particle level. However, those are computationally very expensive, particularly in the regime where MIPS appears (which corresponds to high concentrations of particles, for which interactions must be computed, and high self-propulsion speeds). Macroscopic models help to explore the phase plane better and get a deeper understanding of the mechanisms. Some of such models derived in the literature are \cite{Bialke:2013gw,Cates:2014tr, Speck:2015um,Tailleur:2008kd,Wittkowski:2014dt}.
Let us also mention the kinetic model of the interface between dilute and dense regions with absorption-evaporation dynamics in \cite{Redner.2013}.   

Here we consider different approaches to obtain macroscopic models, review existing ones from the physics literature, and also provide new systematic derivations. We bring them all under the same mathematical framework and study the MIPS via the stability of associated differential operators.
This allows us to compare and contrast what is the effect of specific microscopic properties but also crucially, what the implications of different assumptions made in the upscaling/coarse-graining are. In particular, we present two derivations of a Brownian active model \cref{sde_single} coupled with volume-exclusion interactions resulting in novel macroscopic models for active particles. To our knowledge, these models are the first ones to incorporate nonlinear cross-diffusion terms. These reduce to an existing model in the literature when the diffusive operator is linearised around the homogeneous equilibrium. 

Mathematically speaking, the resulting models are $2d -1$-dimensional, where $d$ is the physical dimensional. Throughout this work, we will use the term one- or two-dimensional to refer to models in one or two \emph{physical} dimensions, respectively. In the one-dimensional case, the resulting model is like a two-species reaction-diffusion-advection system, with two species of left- and right-biased particles. In the two-dimensional case, the resulting models are integro-differential equations for the density in phase space (space of position and orientation). These can also be seen as infinite-dimensional reaction-diffusion-advection models, with species parameterized by their orientation.

The structure of the paper is as follows. In the remainder of this section, we summarize the microscopic models considered and the resulting macroscopic models. \cref{sec:derivation} is concerned with the derivation of the four models using a variety of methods: a mean-field approximation, a phenomenological near-equilibrium approximation, and the method of matched asymptotic expansions. Rescaled models in terms of only two non-dimensional parameters, the occupied volume fraction $\phi$ and the P\'eclet number $\Pe$ are given in \cref{sec:rescale}. The presence of MIPS is considered via a stability analysis of the homogeneous stationary state in \cref{sec:stability}, again using different methods: the entropy dissipation, analytic linear stability analysis of the symmetrized differential operators, and numerical stability analysis of the original full operators.  Finally, in \cref{sec:numerics} we present some two-dimensional numerical examples of the patterns associated with MIPS, obtained from both the stochastic microscopic models and the macroscopic models.
The rigorous analysis of one of the local macroscopic models, displaying a nonlinearity in the advection term but linear diffusion in space and orientation is addressed in \cite{BruBurEspSch-an-phen21}. The analysis of the two macroscopic models that, in addition to the nonlinearity in the advection term, display nonlinear cross-diffusion-like terms will be the subject of future work.

\subsection{Microscopic models} \label{sec:micro_models}
We consider two alternative microscopic descriptions for particles in two spatial dimensions. The first is a set of $N$ ABPs as in \cref{sde_single} coupled with a short-range repulsion between particles:
\begin{subequations}\label{sde_all}
\begin{align}
\label{sde_x}
	\ud \X_i &= \sqrt{2 D_T} \ud {\bf W}_i - \chi\sum_{j\ne i} \nabla u((\X_i-\X_j)/\ell)\ud t + v_0 \e(\Theta_i) \ud t,\\
	\label{sde_angle}
	\ud \Theta_i &= \sqrt{2 D_R} \ud W_i,
\end{align}
\end{subequations}
where ${\bf W}_i$ and $W_i$ are independent two- and one-dimensional Wiener processes, respectively, for $1\le i \le N$. In \cref{sde_x}  $\chi$ and $\ell$ represent the strength and the range of the interaction potential $u$, respectively. 
Here $D_T$ and $D_R$ are the translational and rotational diffusion coefficients respectively, and $\e(\theta) = (\cos \theta, \sin \theta)$ is the self-propulsion direction. Angles can take values in $[0, 2\pi)$ and we consider periodic boundary conditions in both positions and angles.

We consider isotropic particles (so, in particular, $u$ is radially symmetric and a function of $\|\X_i-\X_j\|$), and the angle $\Theta_i$ only represents an internal variable that determines the direction of the drift term. Of course, there are possible extensions considering elongated particles and for $\Theta_i$ to represent the angle of the main axis of the particle. One particular model we will consider is hard-core interactions, for which $u$ is a hard-core potential, $u_\text{HS} (r) = + \infty$ for $r<1$, and 0 otherwise.

The second microscopic model is based on a simple exclusion process on a two-dimensional lattice. The angles $\Theta_i$ are still defined continuously and undergo a Brownian motion \cref{sde_angle}, while a discrete process, namely an asymmetric simple exclusion process, governs the positions $\X_i$. Jumps in position are only carried out in a finite set of orientations $\e_j$ with rates depending on $\e(\Theta_i)\cdot \e_j$ (the angle biases the random walk towards the direction $\e(\Theta_i)$). In this case, the interactions between particles are modeled as follows: if the destination lattice site is already occupied, the jump is aborted. If we consider a regular lattice with distance $\ell$ between sites, this model can be seen as the lattice-based version of Brownian hard-core particles ($u_\text{HS}$) of diameter $\ell$. 

\subsection{Macroscopic models} \label{sec:macro}

Let us  denote by $f(\x,\theta,t)$ the macroscopic density of particles with angle $\theta$, and the corresponding \emph{space density}
\begin{align}\label{local_rho}
	\rho(\x,t) &:= \int_0^{2\pi} f(\x,\theta,t) ~\ud\theta,
\end{align}
at position $\x$ in the two-dimensional torus $\Omega := \mathbb T^2 = [0,1]^2$.
Another relevant quantity is the \emph{polarisation} (also known as orientational order parameter)
\begin{align}\label{polarisation}
	\p(\x,t) := \int_0^{2\pi} \e(\theta) f(\x,\theta,t) ~\ud\theta.
\end{align}

In the next section we will sketch the derivation of the following models:
\begin{description}
\item[Model 1:] Nonlocal model for soft repulsive particles via a mean-field approximation
\begin{equation*} \label{model1}
\partial_t f + v_0\nabla \cdot ( f \e(\theta)) = D_T \Delta f + D_R \partial_{\theta}^2 f  + \nabla \cdot ( f \nabla (u \ast \rho)).
\end{equation*}

\item[Model 2:] Local model for soft repulsive particles from \cite{Bialke:2013gw, Speck:2015um}
\begin{align*} 
\partial_t f + v_0\nabla \cdot (  f (1-\phi \rho) \e(\theta)) &= D_{T}(1-\phi)^2 \Delta f + D_R \partial_{\theta}^2 f,  
\label{model2}
\end{align*}
where $\phi\in[0,1)$ is an effective occupied fraction for soft interacting particles.
\item[Model 3:] Hard-sphere particles model derived via matched asymptotic expansions
\begin{align*} 
\partial_t f + v_0\nabla \cdot \left[ f (1-\phi \rho) \e(\theta) + \phi \p f\right] &\!=\! D_T \nabla \cdot \left[ (1- \phi \rho) \nabla f + 3 \phi f \nabla  \rho \right]  + D_R \partial_{\theta}^2 f , 
\label{model3}
\end{align*}
where $\phi$, given in \cref{phi_hs}, is again a measure of the occupied fraction in the system.
\item[Model 4:] Active simple exclusion process via a mean-field approximation
\begin{align*}
\partial_t f + v_0\nabla \cdot [ f (1- \phi \rho) \e(\theta)] &= D_T \nabla \cdot ( (1- \phi \rho) \nabla f + \phi f \nabla \rho)  + D_R \partial_{\theta}^2 f,
\label{model4}
\end{align*}
where $\phi \in[0,1)$ is the occupied fraction in the system.
\end{description}  

Above and throughout the paper, $\nabla$, $\nabla \cdot$, and $\Delta$ denote the gradient, divergence, and Laplacian in the spatial variables $\x$, respectively. When the operators are taken in other variables, these are explicitly included in the subscript. All models are considered in the three-dimensional torus $\Upsilon := \Omega \times [0, 2\pi)$ with $|\Omega| = 1$. We impose the normalization conditions
$$
\int_{\Upsilon} f(\x,\theta, t)  \ud \theta \ud \x = 1, \qquad  \int_\Omega \rho(\x, t) \ud \x = 1.
$$

\section{Derivation of the macroscopic models} \label{sec:derivation}

Our starting point for all the continuous or off-lattice models in position (Models 1,2 and 3 above) is to define the joint probability density for $N$ particles evolving according to \cref{sde_x}-\cref{sde_angle}, $F_N(\vec \xi, t)$ with $\vec \xi = (\xi_1, \dots, \xi_N)$ and $\xi_i = (\x_i, \theta_i)$. By using the Chapman--Kolmogorov equation, see e.g., \cite[Chapter 3]{erban_chapman_2020}, this is given by
\begin{equation} \label{N_eq}
	\partial_t F_N(\vec \xi, t) \! = \! \sum_{i=1}^N \nabla_{\x_i} \cdot \left[ D_T \nabla_{\x_i} F_N -v_0 \e(\theta_i) F_N + \chi\nabla_{\x_i} U_\ell(\x_1, \dots, \x_N) F_N \right] + D_R \partial^2_{\theta_i} F_N,
\end{equation}
where 
$$U_\ell(\x_1,\dots, \x_N) = \sum_{1\le i < j \le N} u((\x_i -\x_j)/\ell).
$$

The domain of definition of one particle's coordinates is $\xi = (\x, \theta) \in \Upsilon$, with periodic boundary conditions in both angle and space.
The goal is to obtain a macroscopic model for the one-particle density $f(\xi, t)$, which we can describe by picking the first particle since all particles are identical, i.e.
\begin{align}
	f(\xi_1,t) = \int_{\Upsilon^{N-1}} F_N(\vec \xi) \ud \xi_2 \dots \ud \xi_N.
\end{align}
To this end, keeping in mind all the particles are indistinguishable, we integrate \cref{N_eq} with respect to $\xi_2, \dots, \xi_N$. Using periodicity, all the terms for $i\ge 2$ in the right-hand side of \cref{N_eq} vanish, and we are left with:
\begin{equation} \label{1_eq}
	\partial_t f(\xi_1, t) = \nabla_{\x_1} \cdot \left[ D_T \nabla_{\x_1} f -v_0 \e(\theta_1) f + {\bf G}(\xi_1,t) \right] + D_R  \partial^2_{\theta_1 }f,
\end{equation}
with 
\begin{align} \label{interaction_G}
\begin{aligned}
	{\bf G}(\xi_1,t) &= \chi \int_{\Upsilon^{N-1}} F_N(\vec \xi, t) \sum_{i=2}^N \nabla_{\x_1} u((\x_1-\x_i)/\ell) \ud \xi_2, \dots, \ud \xi_N \\
	&= \chi (N-1) \int_{\Upsilon} F_2(\xi_1,\xi_2,t) \nabla_{\x_1} u((\x_1-\x_2)/\ell) \ud \xi_2,
	\end{aligned}
\end{align}
where $F_2$ is the two-particle density
$$
F_2(\xi_1,\xi_2,t) = \int_{\Upsilon^{N-2}} F_N(\vec \xi,t) \ud \xi_3 \dots \ud \xi_N.
$$
\subsection{Mean-field model} \label{sec:model1}

The mean-field scaling corresponds to $\chi = 1/N$ and $\ell = O(1)$ so that we have a weak and long range interaction in the limit of $N\to \infty$. Here we only give a heuristic derivation of such limit, which has been proven rigorously for passive Brownian particles \cite{Jabin:2017fb} (corresponding to setting $D_R = 0$ in model \cref{sde_all}). The result is a propagation of chaos for suitable conditions on the potential $u$, which implies the factorisation of the second marginal as $N\to \infty$, $P_2(\x_1,\x_2, t) \weak p(\x_1, t) p(\x_2, t)$.
For our active Brownian model, since the interaction in \cref{interaction_G} is independent of the angles, one expects an analogous result 
\begin{equation}
	\label{mfa}
	F_2 (\xi_1, \xi_2, t) = f(\xi_1,t) f(\xi_2, t).
\end{equation}
Inserting \cref{mfa} into \cref{interaction_G}, equation \cref{1_eq} reads
\begin{subequations}
	\label{mfa_final}
\begin{equation}
	\label{1_mfa}
	\partial_t f(\xi_1, t) = \nabla_{\x_1} \cdot \left[ D_T \nabla_{\x_1} f -v_0 \e(\theta_1) f +  f \nabla_{\x_1} \mathcal U \right] + D_R \partial^2_{\theta_1} f,
\end{equation}
with the interaction term
\begin{equation} \label{U_functional}
	\mathcal U(f) = \frac{N-1}{N} \int_\Upsilon f(\xi_2,t) u((\x_1-\x_2)/\ell) \ud \xi_2 \to  \int_\Omega \rho(\x_2,t) u((\x_1-\x_2)/\ell) \ud \x_2,
\end{equation}
\end{subequations}
as $N \to \infty$.

We assume $u$ to be a purely repulsive potential, being monotonically decreasing with respect to the radius $r$.  Moreover, we also require integrability at the origin, that is, $u(r) = O(r^{-2+\delta})$ for $\delta>0$ as $r\to 0$ (in two dimensions).   
Below we shall assume the condition
\begin{equation}\label{condition_u_1}
	 \iint  \nabla_\x g \cdot \nabla_\x {\cal U}(g)~\ud \x ~\ud\theta \geq 0,
\end{equation}
for arbitrary sufficiently smooth nonnegative functions $g$, 
which can be used to show exponential decay of the solution $f$ to the homogeneous steady state $f_*$ by means of the logarithmic Sobolev inequality (see \cref{sec:stabmodel1}). The condition is satisfied if $u$ is a concave potential, but also in the formal asymptotic limit of $\ell \rightarrow 0$, where we find ${\cal U}(f) = c f$, thus
\begin{equation} \label{nonlinear-term_mfa}
	\int \nabla_{\x_1} \rho(\x_1) \cdot \nabla_{\x_1} {\cal U}(f)~\ud \x_1 = c \int |\nabla_{\x_1} \rho(\x_1) |^2 \ud \x_1 \int u(\|\tx\|)  \ud \tx \ge 0.
\end{equation}

\subsection{Phenomenological model} \label{sec:model2}

In \cite{Bialke:2013gw, Speck:2015um} they derive an equation similar to \cref{mfa_final} but with a nonlinearity in the advection term. As we will see in \cref{sec:stability}, this additional term makes a big difference in the stability of the models. Here we give the main idea behind this model, and leave the details for \cref{sec:model2_SM}. The starting point is \cref{1_eq}-\cref{interaction_G} but with short-range and strong repulsive interactions, namely, $\chi = 1$ and $\ell = \epsilon \ll 1$. As a result, the interaction term $\bf G$ reads
\begin{align}
	{\bf G}(\xi_1,t) = (N-1) \int_{\Omega} f_2(\xi_1,\x_2,t) \nabla_{\x_1} u((\x_1-\x_2)/\epsilon) \ud \x_2,
\end{align}
where $f_2(\xi_1,\x_2,t) := \int F_2 ~\ud \theta_2$ is the two-body probability density to find another particle at $\x_2$ (with arbitrary orientation) together with the tagged particle at $\x_1$ with orientation $\theta_1$. They proceed with a decomposition of the force along $\e(\theta_1)$ and its perpendicular direction, which they assume to be parallel to $\nabla_{\x_1} f$ at leading order
\begin{align}
\mathbf{G} (\xi_1,t) =G_\e \e(\theta_1) +\delta \mathbf{G} \approx G_\e \e(\theta_1)  + G_{\|} \nabla_{\x_1} f.
	\label{interaction_model2}
\end{align}
The coefficient $G_\e$ is approximated as $G_\e = (N-1) \zeta f(\xi_1,t) \rho(\x_1,t)$ with $\zeta$ constant assuming that the system is homogeneous and neglecting the time-dependence of the pair correlation function, while $G_{\|}$ is taken to be a function of $\phi$ only \cite{Bialke:2013gw}. 

Inserting \cref{interaction_model2} into \cref{1_eq} one obtains 
\begin{subequations}\label{mod2}
\begin{align} \label{eq_physics}
	\partial_t f = \nabla_{\x_1} \cdot \left[D_e \nabla_{\x_1} f - v_e \e(\theta_1) f  \right] + D_R \partial_{\theta_1}^2 f,
\end{align}
with effective diffusion $D_e = D_T + G_{\|}$ and effective speed $v_e = v_0 - (N-1) \rho \zeta$. 
In \cite{Bialke:2013gw} they assume that both $D_e$ and $v_e$ are constants, taking $\rho \equiv 1$ uniform. They note that $D_e$ corresponds to the self-diffusion coefficient in a passive suspension ($v_0 = 0$). In \cite{Speck:2015um} they consider $v_e = v_e (\rho)$. 

The theory in \cite{Bialke:2013gw, Speck:2015um} does not provide expressions for $G_{\|}$ and $\zeta$; these are instead measured from simulations. Here we define $\phi$ such that $v_0\phi = (N-1)\zeta$; the constant $\phi$ can be seen as an effective occupied area, noting also that $\zeta$ scales like the area of the interaction $\epsilon^2$ and $v_0$ (assuming that the correlation function grows linearly with $v_0$, which is reasonable as the larger the speed, the larger correlation lengths). Therefore, we can write the effective diffusion coefficient and speed as 
\begin{equation}\label{model2_params}
D_e = \Dphi, \qquad 
v_e = v_0(1 - \phi \rho), \qquad 0 \le \phi < 1.
\end{equation}
For our subsequent analysis, we will use $\Dphi =D_T(1-\phi)^2$. 
The form of $\Dphi$ is chosen such that $D_e'(0) = -2 D_T$ as the self-diffusion coefficient for hard-spheres \cite{Hanna:1982gi} and $D_e(1) = 0$ (for the maximum packing density). Combining  \cref{eq_physics} and \cref{model2_params} and integrating in $\theta$ one finds
\begin{equation}
	\partial_t \rho + \nabla \cdot ( v_e \p) = \Dphi \Delta \rho.
\end{equation}
\end{subequations}

The one-dimensional version of \cref{mod2} coincides with a crowded version of the Goldstein--Taylor model \cite{goldstein1951diffusion,taylor1922diffusion} for the densities of  left- and right-moving particles, $\fL(x,t)$ and $\fR(x,t)$ respectively,
\begin{align} \label{GT_model}
\begin{aligned}
\partial_t \fR + v_0\partial_x [\fR (1-\phi \rho)] &= D_T\partial_{xx} \fR + k (\fL - \fR), \\
\partial_t \fL - v_0\partial_x [\fL (1-\phi \rho)] &= D_T\partial_{xx} \fL + k (\fR - \fL).
\end{aligned}
\end{align}
Comparing this with the two-dimensional version \cref{mod2}, we find that here $\Dphi \equiv D_T$ and $k = 2D_R/\pi^2$.\footnote{This can be seen from discretising $\partial_\theta^2 f$ using centred differences with a grid of spacing $\pi$: $D_R\partial_\theta^2 \fR \approx D_R(\fL + \fL - 2\fR)/\pi^2 = (2D_R/\pi^2)(\fL - \fR)$.} We outline the derivation of \cref{GT_model} in \cref{sec:GTmodel}.
A rigorous well-posedness theory for \cref{mod2,GT_model} is provided in \cite{BruBurEspSch-an-phen21}.

\subsection{Active Brownian particles model}
\label{sec:model4}

In this section, we consider a systematic derivation of a macroscopic model for active Brownian particles in the case of very short-range repulsive interactions. In particular, instead of the mean-field scaling ($\chi=1/N, \ell =1$) we consider strong ($\chi=1$) but short-ranged ($\ell = \epsilon \ll 1$) interactions, and an asymptotic expansion for $N \epsilon^2$ small. This approach works well for singular potentials for which the mean-field approach breaks down (see \cref{nonlinear-term_mfa}), the most extreme of cases being the hard-sphere potential.  

To this end, in this section we consider the following hard-core interaction potential: $u(r) = +\infty$ for $r<1$, 0 otherwise, so that particles are hard spheres of diameter $\epsilon$. Therefore, we will have to consider the boundary conditions $|\X_i(t)-\X_j(t)|\ge \epsilon$ for all $j\ne i$ at all times.

When dealing with a hard-core potential of range $\epsilon$, the domain of definition of \cref{N_eq} depends on $\epsilon$:
\begin{equation}
	\Upsilon_\epsilon^N = \Omega_\epsilon^N \times [0,2\pi)^N,
\qquad
\Omega_\epsilon^N = \left \{ \vec x \in \Omega^N \ : \   |\x_i - \x_j| \ge \epsilon, \forall i \ne j \right \}.
\end{equation}
Then in addition to the periodic boundary conditions on the external boundaries $\partial \Upsilon^N$, we have to take into account no-flux boundary conditions on the internal boundaries of $\partial \Upsilon_\epsilon^N$. For $N=2$, the problem looks as follows:
\begin{align} \label{F2_eq}
	\partial_t F_2 (\xi_1,\xi_2,t) &= \nabla_{\xi_1} \cdot \left[ \bar D \nabla_{\xi_1} F_2 - s(\theta_1) F_2 \right] + \nabla_{\xi_2} \cdot \left[ \bar D \nabla_{\xi_2} F_2 - s(\theta_2) F_2 \right],
\end{align}
where $\bar D = \text{diag}(D_T, D_T, D_R)$ and $s(\theta) = v_0(\cos \theta, \sin \theta, 0)$, 
and boundary condition
\begin{align} \label{bc_2}
	\left[ D_T \nabla_{\x_1} F_2 - v_0 \e(\theta_1) F_2 \right]\cdot {\bf n}_1 + \left[ D_T \nabla_{\x_2} F_2 - v_0 \e(\theta_2) F_2 \right]\cdot {\bf n}_2 = 0
	\end{align}
on $|\x_1 - \x_2| = \epsilon$, where ${\bf n}_2 = - {\bf n}_1$ are the normal vectors on the collision surface. 

We want to obtain the equation for the one-particle density $f(\xi_1,t)$, which for hard spheres is given by
$$
f(\xi_1,t) = \int_0^{2\pi} \ud \theta_2 \int_{\Omega(\x_1) } F_2(\xi_1,\xi_2,t) \ud \x_2,
$$
where $\Omega(\x_1)$ is the spatial domain available to the second particle, $\x_2$, when the first particle is at $\x_1$ (perforated domain). 
 Integrating \cref{F2_eq} with respect to $\xi_2$ we obtain
 \begin{align} \label{integrated_eq_hs}
 \begin{aligned}
 	\partial_t f & = \nabla_{\xi_1} \cdot \left[ \bar D \nabla_{\xi_1} f - s(\theta_1) f \right] 
 	+ \int_0^{2\pi} \ud \theta_2 \int_{\partial B_\epsilon(\x_1)} \left[D_T \nabla_{\x_2} F_2 - v_0 \e(\theta_2) F_2 \right] \cdot {\bf n}_2 \ud S_{\x_2} \\
 	&\qquad + \int_0^{2\pi} \ud \theta_2 \int_{\partial B_\epsilon(\x_1)} \left[v_0 \e(\theta_1) F_2 - 2D_T \nabla_{\x_1} F_2 - D_T \nabla_{\x_2} F_2 \right] \cdot {\bf n}_2 \ud S_{\x_2}\\
 	& = \nabla_{\xi_1} \cdot \left[ \bar D \nabla_{\xi_1} f - s(\theta_1) f \right] 
 	- D_T \int_0^{2\pi} \ud \theta_2 \int_{\partial B_\epsilon(\x_1)} \left[\nabla_{\x_1} F_2 + \nabla_{\x_2} F_2  \right] \cdot {\bf n}_2 \ud S_{\x_2},
 	\end{aligned}
\end{align}
where $\partial B_\epsilon(\x_1) = \{ |\x_1-\x_2| = \epsilon \}$. The last term in the first line comes from applying the divergence theorem on the $\nabla_{\xi_2}\cdot $ term in \cref{F2_eq}, while the term in the second line comes from applying the Reynolds theorem on the $\nabla_{\xi_1}\cdot $ term in \cref{F2_eq}. The equality in the last line comes using the boundary condition \cref{bc_2}. 

We then need an expression for the two-particle density $F_2$ when two hard-sphere particles are in contact. Clearly, in this case, the approximation \eqref{mfa} is not suitable. We seek an approximation via matched asymptotic expansions instead.

Given the hard sphere potential, we suppose that when two particles are far apart ($|\x_1 - \x_2| \gg \epsilon$) they are independent, whereas when they are close to each other ($|\x_1 - \x_2| \sim \epsilon$) they are correlated. We designate these two regions of configuration space the outer region and inner region, respectively. In the outer region we define $F_\text{out}(\xi_1,\xi_2, t) = F_2(\xi_1,\xi_2, t)$. By independence we have that\footnote{Independence only tells us that $F_\text{out}(\xi_1,\xi_2,t) \sim q(\xi_1,t)q(\xi_2,t)$ for some function $q$, but the normalization condition on $F_2$ implies $f = q + O(\epsilon)$.}
\begin{equation} \label{outer}
F_\text{out} (\xi_1,\xi_2, t) = f(\xi_1,t) f(\xi_2,t) + \epsilon F_\text{out}^{(1)} (\xi_1,\xi_2, t) + O(\epsilon^2),
\end{equation}
for some function $F_\text{out}^{(1)}$. In the inner region, we set $\xi_1 = \txi_1$, and $\xi_2 = \txi_1 + \text{diag}(\epsilon, \epsilon, 1) \txi$, or $\x_2 = \tx_1 + \epsilon \tx$, $\theta_2 = \tilde \theta_1 + \tilde \theta$ and define $\tilde F(\txi_1,\txi, t) = F_2(\xi_1,\xi_2, t) $. 
Rewriting \cref{F2_eq} in terms of the inner coordinates gives 
\begin{multline}
\label{P2_inner} 
\epsilon^2 \partial_t \tilde F
=   2 D_T \Delta_{\tx} \tilde F + \epsilon v_0  \nabla_{\tx} \cdot \left[ \widehat \e(\tt_1,\tt) \tilde F \right ] -  2\epsilon D_T \nabla_{\tx_1}  \cdot   \nabla_{\tx}  \tilde F \\ 
 + \epsilon^2 D_T\Delta_{\tx_1} \, \tilde F - \epsilon ^2  v_0 \nabla_{\tx_1} \cdot \left[ \e(\tt_1)  \tilde F \right ]
+ \epsilon^2 D_R \left( \partial_{\tt_1}^2 \tilde F - 2\partial_{\tt_1} \partial_{\tt} \tilde F + 2 \partial_{\tt}^2 \tilde F \right),
\end{multline}
where $\widehat \e(\tt_1,\tt) = \e(\tt_1) -  \e(\tt_1 + \tt) $. The boundary condition \cref{bc_2} when two particles are in contact becomes in inner coordinates
\begin{equation}\label{bc2_inner}
	2 D_T ~ \tx \cdot  \nabla_{\tx} \tilde F = \epsilon \tx \cdot \left [ D_T \nabla_{\tx_1} \tilde F - v_0 \widehat \e(\tt_1, \tt) \tilde F \right], \qquad \text{on} \qquad |\tx| = 1.
\end{equation}
The inner solution $\tilde F$ must match with the outer solution $F_\text{out}$ as $\|\tx \| \to \infty$.  Expanding $F_\text{out}$ in terms of the inner variables gives (omitting the time variable for ease of notation)
\begin{align}
\label{P2bc_match}
\begin{aligned}
F_\text{out} ( \xi_1, \xi_2) &\sim  f(\txi_1) f(\tx_1 + \epsilon \tx, \tt_1 + \tt) + \epsilon F_\text{out}^{(1)}(\tx_1, \tt_1, \tx_1 + \epsilon \tx, \tt_1 + \tt)  \\
& \sim  f(\txi_1) f(\tx_1, \tt_1 + \tt) + \epsilon f(\txi_1) \tx \cdot \nabla_{\tx_1} f(\tx_1, \tt_1 + \tt)\\
&\quad + \epsilon  F_\text{out}^{(1)}(\tx_1, \tt_1, \tx_1, \tt_1 + \tt) \\
& \sim  f f^+ + \epsilon f \tx \cdot \nabla_{\tx_1} f^+ + \epsilon  F_\text{out}^{(1)}(\tx_1, \tt_1, \tx_1, \tt_1 + \tt),
\end{aligned}
\end{align}
where $f \equiv f(\txi_1,t)$ and $f^+ = f(\tx_1, \tt_1 + \tt,t)$.

We look for a solution of \cref{P2_inner,bc2_inner} matching with \cref{P2bc_match}  as $\|\tilde {\bf x} \|\rightarrow \infty$ of the form
$\tilde F  \sim \tilde F^{(0)} + \epsilon\tilde F^{(1)} +  \cdots$.
The leading-order inner problem is 
\begin{align} \label{hs_so1}
\begin{aligned}
0 &= 2 D_T\Delta_{\tx}   \tilde F^{(0)},  \\
0&= 2 D_T ~ \tx \cdot  \nabla_{\tx} \tilde F^{(0)}, \quad |\tx| = 1\\
\tilde F^{(0)} &\sim f f^+ \quad \textrm{as} \quad |\tx|\to \infty,
\end{aligned}
\end{align}
with solution $F^{(0)} =  f f^+$.

The $O(\epsilon)$ problem reads
\begin{align} \label{hs_order1}
\begin{aligned}
0&= \Delta_{\tx}  \tilde F^{(1)}\\
\tx \cdot  \nabla_{\tx} \tilde F^{(1)} &= \tx \cdot {\bf A}(\tx_1, \tt_1, \tt), \qquad |\tx| = 1\\
\tilde F^{(1)} &\sim  \tx \cdot {\bf B}(\tx_1, \tt_1, \tt)  \qquad \textrm{as} \qquad |\tx|\sim \infty,
\end{aligned}
\end{align}
where
\begin{align*}
	{\bf A}(\tx_1, \tt_1, \tt) &= \frac{1}{2D_T} \left[ D_T \nabla_{\tx_1}(f f^+) - v_0 \widehat \e(\tt_1,\tt) f f^+ \right],\\
	{\bf B}(\tx_1, \tt_1, \tt) &= f\nabla_{\tx_1} f^+. 
\end{align*}
The solution of \cref{hs_order1} is
\begin{equation} 
\label{hs_order1_sol}
\tilde F^{(1)} = \tx \cdot {\bf A} + \left( \tx + \frac{\tx}{|\tx|^2} \right) ({\bf B} - {\bf A}).
\end{equation}

Now we consider the integral term in \cref{integrated_eq_hs}, 
\begin{equation}
	\label{integral_hs}
	\mathcal I = - D_T \int_0^{2\pi} \ud \theta_2 \int_{\partial B_\epsilon(\x_1)} \left[\nabla_{\x_1} F_2 + \nabla_{\x_2} F_2  \right] \cdot {\bf n}_2 \ud S_{\x_2}.
\end{equation}
Since the domain of integration of $\mathcal I$ is in the inner region, we use the inner region solution to compute it. Rewriting $\mathcal I$ in terms of the inner variables and using the factorization at leading order 
with \cref{hs_order1_sol} we find
\begin{align} \label{integral_result}
\begin{aligned}
	\mathcal I &= \frac{\pi}{2} \epsilon^2 \nabla_{\tx_1} \cdot \int_0^{2\pi} \left[ 3 D_T f \nabla_{\tx_1} f^+ - D_T f^+ \nabla_{\tx_1} f + v_0\widehat \e(\tt_1, \tt) f f^+ \right] \ud \tt \\
	&= \frac{\pi}{2} \epsilon^2 \nabla_{\tx_1} \cdot  \left \{ 3 D_T f \nabla_{\tx_1} \rho - D_T \rho \nabla_{\tx_1} f + v_0 f [\e(\tt_1) \rho - \p ]\right \},
	\end{aligned}
\end{align}
where $\rho(\tx_1,t)$ and $\p(\tx_1,t)$ are given in \cref{local_rho} and \cref{polarisation} respectively.  Combining \cref{integrated_eq_hs} and \cref{integral_result} we arrive at
  \begin{align} \label{eq_hs_final}
 	\partial_t f(\xi_1,t) & = \nabla_{\x_1} \cdot \left[ D_T \nabla_{\xi_1} f - v_0 \e(\theta_1) f \right] 
 	+ \mathcal I + D_R \partial_{\theta_1}^2 f.
\end{align}
This is the equation for $N=2$. For a general $N$, the tagged particle (first particle) will have $N-1$ inner regions, so the term $\mathcal I$ in \cref{eq_hs_final} will be premultiplied by $N-1$, leading to 
\begin{align}
    \begin{aligned}
	\partial_t f = &\ \nabla_{\x_1} \cdot \left[ D_T \nabla_{\x_1} f - v_0 \e(\theta_1) f  \right]  + D_R \partial_{\theta_1}^2 f \\
	& + (N-1)\frac{\epsilon^2\pi}{2} \nabla_{\x_1} \cdot  \left \{ 3 D_T f \nabla_{\x_1} \rho - D_T \rho \nabla_{\x_1} f + v_0 f [\e(\theta_1) \rho - \p ]\right \}  
    \end{aligned}
\end{align}
We introduce the dimensionless quantity 
\begin{equation}\label{phi_hs}
	\phi = (N-1)\frac{\epsilon^2 \pi}{2},
\end{equation}
noting that for large $N$ it is approximately twice the occupied area of $N$ hard spheres of diameter $\epsilon$. 
Therefore we can think of $\phi$ as a rescaled occupied area. Using $\phi$ we can rewrite the resulting hard-spheres model as
\begin{equation}\label{model42}
\partial_t f + v_0 \nabla \cdot \left[  f (1-\phi \rho) \e(\theta) + \phi \p f\right] = D_T \nabla \cdot \left[ (1- \phi \rho) \nabla f + 3 \phi f \nabla \rho \right]  + D_R \partial_{\theta}^2 f.
\end{equation}
Integrating \cref{model42} with respect to $\theta$ we arrive at
\begin{equation}\label{model4_rho}
\partial_t \rho + v_0 \nabla \cdot  \p  = D_T \nabla \cdot \left[ (1+ 2\phi \rho) \nabla \rho \right],
\end{equation}
while multiplying \cref{model42} by $\e(\theta)$ and integrating we arrive at
\begin{equation}
\label{model4_p}
		\partial_t \p + v_0 \nabla \cdot \left[ (1-\phi \rho) {\bf P}  + \phi \p \otimes \p \right]  =  D_T\nabla \cdot \left[ (1-\phi \rho) \nabla \p + 3\phi \p \otimes \nabla \rho \right] - D_R \p,
\end{equation}
where $\bf P$ is the second moment
$$
{\bf P} =\int_0^{2\pi} f \e(\theta) \otimes  \e(\theta) \, \ud \theta.
$$
Therefore we do not obtain a closed model for $\rho$ and $\p$ and therefore must solve for $f$ to obtain the evolution of the first two moments. An alternative would be to write a closure for $\bf P$ or drop it altogether as done in \cite{Bialke:2013gw}. We note that the effective diffusion coefficient in \cref{model4_rho} is consistent with the collective diffusion coefficient for hard spheres, $D_c(\rho)  = 1 + 2\phi \rho + O(\phi^2)$ \cite{bruna2012excluded}.

\subsection{Active simple exclusion lattice model}
\label{sec:model3}

We now present our model of an active exclusion process on a lattice. We consider a jump process on a two-dimensional lattice, where jumps are only carried out in a finite set of discrete orientations $\e_j$, $j=1,\ldots,m$. The paradigmatic example is, of course a rectangular lattice, but essentially we only need the symmetry conditions 
\begin{equation}
    \sum_{j=1}^m \e_j = 0,   \qquad  \sum_{j=1}^m \e_j \otimes \e_j = c I,
\end{equation}
for some constant $c  > 0$. Here we consider a two-dimensional regular square lattice with spacing $\epsilon$, and $|\e_j|=1$ such that $c = 2$ (this is independent of the dimension). Therefore, given a lattice site $\x$, its neighbouring sites are given by $\x+\epsilon \e_j$, $j=1,\ldots,4$. Note our choice of lattice spacing $\epsilon$, which coincides with the diameter in our hard-spheres model in the previous section. This emphasises that, in both cases, the interactions are local and of range equal to the distance between two particles at contact. 

The set of orientations for the $N$ particles in the two-dimensional model is continuous and described, as in the other models, by an angle $\theta$ evolving according to Brownian motion \cref{sde_angle}. The jump rates to neighbouring sites are depending on the local relative orientations $\e(\theta) \cdot \e_j$, namely $$ \pi_j(\theta) = \alpha_\epsilon \exp(\beta_\epsilon \e(\theta) \cdot \e_j).$$
We assume that $\beta_\epsilon \sim \epsilon$ and $\alpha_\epsilon \sim \epsilon^{-2}$, the latter being mainly time rescaling. As usual in simple exclusion models, the jump is only executed if the target site $\x+\epsilon\e_j$ is empty. Note that this corresponds to an asymmetric simple exclusion process (ASEP), with the bias being a function of orientation instead of position as in the standard ASEP for passive particles. A similar microscopic model with slightly different rates $\pi_j(\theta)$ is considered by Erignoux in \cite{Erignoux:2016un}. Instead of a continuous Brownian motion in angle, Erignoux considers a Glauber jump process that models alignment in angle depending on its neighbors. 

Denoting by $f_\epsilon(\x,\theta,t)$ the probability density for finding a particle at site $\x$ with orientation $\theta$ at time $t$, it evolves according to the master equation
\begin{subequations} \label{lattice_der}
\begin{equation}
	\partial_t f_\epsilon(\x,\theta,t) = \sum_j \pi_j(\theta) \left[ Q_\epsilon(\x-\epsilon\e_j,\theta,\x,t) -  Q_\epsilon(\x,\theta,\x+\epsilon\e_j,t)\right] + D_R \partial_{\theta\theta} f_\epsilon(\x,\theta,t),
\end{equation}
where $Q_\epsilon(\x,\theta,\x+\epsilon\e_j,t)$ is the probability density to find a particle at $\x$ with orientation $\theta$ and no particle at all at $\x+\epsilon\e_j$. 
Using a simple mean-field closure assumption
\begin{equation} \label{MFA_lattice}
	Q_\epsilon(\x,\theta,\x+\epsilon\e_j,t) = f_\epsilon(\x,\theta,t) (1-\rho_\epsilon(\x + \epsilon \e_j,t)), \quad \rho_\epsilon(\x,t) = \int_0^{2
\pi} f_\epsilon(\x,\theta,t)~\ud\theta
\end{equation}
we then obtain a closed system. 
\end{subequations}

The last step is to take the hydrodynamic limit $N\to \infty, \epsilon \to 0$ while keeping the occupied fraction $\phi := N \epsilon^2$ finite. This converts the master equation for the  discrete number density $f_\epsilon$ into a PDE for a continuous probability density  $f(\x,\theta,t)$ defined for all $\x \in \Omega$, which can be approximated as 
$$
f(\x_i,\theta,t) \approx \frac{1}{N} \frac{f_\epsilon(\x_i,\theta,t)}{\epsilon^2},
$$
where $\x_i$ is the centre of the $i$th compartment.
Using a standard asymptotic expansion as $\epsilon \rightarrow 0$ while keeping $N \epsilon^d = \phi$ finite in \cref{lattice_der}, we obtain
\begin{equation} \label{lattice_final}
    \partial_t f + \nabla \cdot \left[ v_0 \e(\theta) (1- \phi \rho)f \right] = \nabla \cdot \left[ D_T ((1- \phi \rho) \nabla f + \phi f \nabla \rho) \right]
    + D_R \partial_{\theta}^2 f,
\end{equation}
with 
$$v_0 = c \lim_{\epsilon \downarrow 0} \epsilon \alpha_\epsilon \beta_\epsilon, \qquad D_T = \frac{c}2
\lim_{\epsilon \downarrow 0} \epsilon^2 \alpha_\epsilon.$$
For a square lattice, we then have $\alpha_\epsilon = D_T /\epsilon^2$ and $\beta_\epsilon = v_0 \epsilon/(2 D_T)$. Integrating \cref{lattice_final} with respect to $\theta$ yields
\begin{equation}
\partial_t \rho + v_0\nabla \cdot \left[  (1- \phi \rho) \p \right] = D_T \Delta \rho.
\end{equation}

We briefly outline the differences between \cref{lattice_final} and the hydrodynamic limit obtained by \cite{Erignoux:2016un}. The result in \cite{Erignoux:2016un} is rigorous, building on the multi-type exclusion model introduced by Quastel \cite{Quastel:1992iv}; in particular it does not rely on the mean-field approximation  \cref{MFA_lattice}. A slight disadvantage is that, like in Quastel's result, the resulting PDE depends on a non-explicit density-dependent self-diffusion coefficient. A related active lattice gas model was considered in \cite{kourbane2018exact}, which takes a more straightforward explicit form thanks to modifying the exclusion rule by allowing neighboring particles to diffuse by swapping their positions. The result is linear diffusion in their hydrodynamic limit (see Eqs. (3,4) in \cite{kourbane2018exact}), which agrees with the crowded Goldstein--Taylor model \cref{GT_model} presented above.

\subsection{Rescaled version of the macroscopic models} \label{sec:rescale}
In what follows, it is convenient to rescale the macroscopic models to reduce the number of parameters. 
We rescale the densities in each model with the parameter $\phi$, which measures how crowded the system is, although due to the different nature of the models, it varies slightly for each of them. For soft-interacting particles (models 1 and 2), \cref{sde_all} leads to an area fraction of the order
$$
\phi \sim N \chi  \ell^2,
$$
where we recall that $|\Omega|=1$ and $\ell$ is the range of the interaction potential. In the mean-field limit (Model 1), $\chi = 1/N$, $\ell= 1$ such as $\phi = 1$. In Model 2, we have $\phi \sim N \zeta/v_0$ (we expect it to be of order $N\epsilon^2$ using that the range of the potential is $\ell = \epsilon$). This is also the case in the hard-exclusion Models 3 and 4, where $\phi \sim N \epsilon^2$ (in Model 3, it is $\phi  = (N-1)\epsilon^2 \pi/2$, in Model 4 it is exactly $\phi = N\epsilon^2$.

We rescale time $t = T \hat t$ and space $\x = L \hat \x$ with $T = D_R^{-1}$ and $L = \sqrt{D_T/D_R}$ such that in the rescaled system the diffusion coefficients are both equal to one. Note that $\phi$ remains unchanged as it is dimensionless. We define the rescaled velocity or P\'eclet number $\Pe = v_0/\sqrt{D_R D_T}$, as well as $\hat u(\hat \x) =  u(\x)/\sqrt{D_R D_T}$, $\hat D_e(\phi) = \Dphi/ D_T$. Finally, we introduce the mass density $\hat f(\hat \x, \theta, \hat t) = \phi f(\x,\theta, t)$, and similarly for $\rho$ and $\p$. Note that this implies a mass rescaling $\int_\Upsilon \hat f \ud \xi = \int_\Omega \hat \rho \ud \x = \phi$.  Inserting these in the four models of the previous section, and dropping the hats, we obtain:
\begin{align}
 \label{model1_c} \tag{M1}
\partial_t f + \Pe\nabla \cdot ( f \e(\theta) ) &= \Delta f + \partial_{\theta}^2 f  + \nabla \cdot ( f \nabla (u \ast \rho )),\\
\label{model2_c} \tag{M2}
\partial_t f + \Pe\nabla \cdot (  f (1- \rho) \e(\theta) ) &= \Dphi \Delta f +  \partial_{\theta}^2 f,  \\
\label{model3_c} \tag{M3}
\partial_t f + \Pe\nabla \cdot \left[ f (1- \rho) \e(\theta) + \p f\right] &= \nabla \cdot \left[ (1- \rho) \nabla f + 3 f \nabla  \rho \right]  + \partial_{\theta}^2 f,  \\
\label{model4_c} \tag{M4}
\partial_t f + \Pe\nabla \cdot [ f (1- \rho) \e(\theta) ] &= \nabla \cdot ( (1- \rho) \nabla f + f \nabla \rho)  + \partial_{\theta}^2 f.
\end{align}

\section{Stability and Instability of Homogeneous Stationary States} \label{sec:stability}

In the following, we investigate the stability, respectively instability, of homogeneous stationary states to understand the possible onset of phase separation. The obvious first step is to verify the existence of homogeneous stationary states, which is possible due to the periodic boundary conditions:
\begin{lemma}
The homogeneous state $f_*(\x,\theta) = \frac{\phi}{2\pi}$ with mass $\phi \in [0,1]$ is a stationary state of the two-dimensional models \cref{model1_c}, \cref{model2_c}, \cref{model3_c}, and \cref{model4_c}.
\end{lemma}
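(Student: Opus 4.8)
The plan is to verify directly that the constant function $f_*(\x,\theta) = \phi/(2\pi)$ annihilates the right-hand side of each evolution equation, by substituting it in and checking that every spatial and angular derivative, together with the divergence of the advective flux, vanishes. The guiding observation is that $f_*$ is constant in \emph{both} $\x$ and $\theta$, so the associated space density \cref{local_rho} is $\rho_* = \int_0^{2\pi} f_* \,\ud\theta = \phi$, also constant, and the polarisation \cref{polarisation} is $\p_* = \int_0^{2\pi} \e(\theta) f_* \,\ud\theta = \frac{\phi}{2\pi}\int_0^{2\pi}\e(\theta)\,\ud\theta = \mathbf 0$, since $\int_0^{2\pi}\cos\theta\,\ud\theta = \int_0^{2\pi}\sin\theta\,\ud\theta = 0$. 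These two facts, $\nabla f_* = 0$, $\partial_\theta^2 f_* = 0$, $\nabla\rho_* = 0$, and $\p_* = 0$, will do essentially all the work.

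First I would handle the diffusive and angular terms, which are immediate: $\Delta f_* = 0$, $\partial_\theta^2 f_* = 0$, and $\Delta\rho_* = 0$, so the pure diffusion contributions in all four models vanish. The cross-diffusion terms in \cref{model3_c} and \cref{model4_c}, namely $\nabla\cdot[(1-\rho)\nabla f + c\, f\nabla\rho]$ with $c\in\{1,3\}$, vanish because each summand contains a factor $\nabla f_* = 0$ or $\nabla\rho_* = 0$. Next I would address the advective fluxes. For \cref{model1_c}, the self-propulsion flux is $\Pe\,\nabla\cdot(f_*\e(\theta))$; since $f_*$ is constant in $\x$ and $\e(\theta)$ depends only on $\theta$, this spatial divergence is zero. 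The nonlocal term $\nabla\cdot(f_*\nabla(u\ast\rho_*))$ vanishes because $u\ast\rho_*$ is a convolution of the kernel with a \emph{constant} density, hence itself constant in $\x$, so its gradient is zero (equivalently the outer factor $f_*$ pulls out and $\nabla\cdot\nabla(u\ast\rho_*)$ acts on a constant). For \cref{model2_c} and \cref{model4_c}, the flux $\Pe\,\nabla\cdot(f_*(1-\rho_*)\e(\theta))$ is again the spatial divergence of a field that is constant in $\x$ times $\e(\theta)$, so it vanishes.

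The one term requiring a moment's care is the extra advective piece $\Pe\,\nabla\cdot(\p f)$ appearing in \cref{model3_c}: here I would use $\p_* = \mathbf 0$ computed above, so $\p_* f_* = \mathbf 0$ and its divergence is zero. I expect no genuine obstacle anywhere in this argument; the only point that deserves explicit comment rather than a one-line dismissal is the nonlocal convolution term in \cref{model1_c}, where one must note that convolving with a constant yields a constant (using periodicity on $\Omega = \mathbb T^2$ and $\int_\Omega \ud\x = 1$), so that its gradient vanishes. Assembling these observations, the right-hand side of each model evaluates to zero on $f_*$, while $\partial_t f_* = 0$ trivially since $f_*$ is time-independent; hence $f_*$ is a stationary state of all four models, for every $\phi\in[0,1]$, which completes the proof.
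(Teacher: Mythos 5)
Your proof is correct and is precisely the direct verification the paper has in mind: the lemma is stated without proof as an obvious consequence of the periodic boundary conditions, and your substitution of the constant $f_*$ (with $\rho_*=\phi$, $\p_*=\mathbf 0$, and the convolution of $u$ with a constant being constant) is exactly the intended argument. No gaps.
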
 

As we shall see, the homogeneous state for Model 1 is fully stable under generic conditions, which somehow rules out this mean-field model for a possible description of phase separation effects. We study the remaining models via a linear stability analysis. In \cref{sec:1dinstability} we consider the one-dimensional version of Model 2, for which we can solve the linear stability explicitly by Fourier series. Below we tackle the two-dimensional Models 2 to 4 by solving explicitly the eigenvalue problems associated with their symmetric operators and numerical simulations. 

Throughout this section, we shall only consider perturbations with a mean value of zero, since the homogeneous stationary states are a one-parameter family of the mass (mean value) of the solution. Thus they cannot be fully stable in terms of constant perturbations.

\subsection{Stability of  Model 1} 
\label{sec:stabmodel1}

In the case of the mean-field model \cref{model1_c} we can establish nonlinear stability of the homogeneous solution for arbitrary mass. We note that
$$ \frac{d}{dt} \int_\Upsilon  f \log \frac{f}{f_*} ~\ud \xi  = - \int_\Upsilon  \left[ - \Pe\nabla f \cdot \e(\theta) + \frac{|\nabla f|^2}f + \frac{|\partial_\theta f|^2}f + \nabla f \cdot \nabla {\cal U}(f) \right]~\ud \xi, $$
where recall that $\mathcal U = u\ast \rho$.
Using the periodic boundary conditions, 
$$ \int_\Upsilon  \nabla f \cdot \e(\theta) ~\ud \xi  =  \int_\Upsilon  \nabla \cdot( f \e(\theta) ) ~\ud \xi = 0, $$
and the standard assumptions on repulsive forces (see \cref{sec:model1})
$$ \int_\Upsilon \nabla f \cdot \nabla {\cal U}(f))~\ud \xi  \geq 0,$$ 
we arrive at
$$ \frac{d}{dt} \int_\Upsilon  f \log \frac{f}{f_*} ~\ud \xi  \leq  - \int_\Upsilon \frac{|\nabla_\xi f|^2}f~\ud \xi. $$
The logarithmic Sobolev inequality, cf.~\cite[Corollary 1.1]{DolEstKowLoss_log_Sob}, implies exponential decay of $f$ to $f_*$ as it gives
\[
\frac{d}{dt} \int_\Upsilon  f \log \frac{f}{f_*} ~\ud \xi\le -c \int_\Upsilon  f \log \frac{f}{f_*} ~\ud \xi,
\]
where $c>0$ is a constant. Convergence to the stationary state is then a consequence of the Gronwall lemma and the Csis\'{z}ar--Kullback inequality $\int f\log(f/f_*)\ud\xi\ge(1/2)\|f-f_*\|_{L^1}^2$. We refer the reader to \cite{MarVil_EquilFP} for more details.

\subsection{Linear Stability of the two-dimensional Models 2, 3, and 4}

We consider the linear stability of the two-dimensional Models 2 to 4, \cref{model2_c}-\cref{model4_c}. The remark below will be useful to this end.
\begin{remark}[Antisymmetric part] \label{rem:antisymmetric} Stability of the symmetric operator implies stability of the full operator, but the converse is not true. To see this, suppose that $\partial_t f =  L(f)$, with $L =  L^S +  L^{AS}$ the symmetric and antisymmetric parts, respectively. If all the eigenvalues of $L^S$ are negative, then (in the $L^2$ scalar product with formal adjoints)
$$
0 \ge \langle  L^S (f), f \rangle  =  \langle  L^S (f), f \rangle +  \langle L^{AS} (f), f \rangle = \langle \partial_t f, f \rangle =  \frac{d}{dt} \frac{1}{2} |f|^2,
$$
using that $\langle L^{AS} (f), f \rangle = \langle f,  (L^{AS})^* (f) \rangle = - \langle f, L^{AS} (f) \rangle$, so any perturbation will decay over time. In other words, $L^{AS}$ cannot change the sign of the full operator as its eigenvalues are purely imaginary. 
Conversely, pure imaginary eigenvalues of $L^{AS}$ can move the spectrum of $L$ towards the negative real part.
\end{remark}

To this end, we consider the linearisation of the equations around the homogeneous state $f_*=\frac{\phi}{2\pi}$, $\rho_* = \phi$. 
We insert $f = f_* + \delta \fp$, $\rho = \rho^* + \delta \rhop$, and $\p = {\bf 0} + \delta \pp$ with $\rhop = \int_0^{2 \pi} \fp \ud\theta$, $\pp = \int_0^{2 \pi} \e(\theta) \fp \ud\theta$, and $\delta \ll 1$ into the equations. The resulting linearised problems are
all of the form, for $i = 2, 3, 4$,
\begin{equation}
	\label{general_form_stability}
	\partial_t \fp + (1-\phi) \Pe \nabla \cdot ( \fp \e(\theta) ) = \alpha_i(\phi)  \Delta \fp + \partial_{\theta}^2 \fp + \phi \nabla \cdot ({\cal L}_i(\fp)),
\end{equation}
with $\alpha_2(\phi) = \Dphi$, $\alpha_3(\phi) = \alpha_4(\phi) = 1-\phi$ and ${\cal L}:H^1(\Upsilon) \rightarrow L^2(\Omega)$ a linear and bounded operator, given by
\begin{equation} \label{operator_L_models}
{\cal L}_2 =  \frac{\Pe}{2\pi}\rhop \e(\theta) ,\qquad {\cal L}_3 =   \frac{\Pe}{2\pi}\left(\rhop \e(\theta)  -  \pp \right) + \frac{3}{2\pi} \nabla \rhop, \qquad {\cal L}_4 = \frac{\Pe}{2\pi}\rhop \e(\theta) + \frac{1}{2\pi} \nabla \rhop.
\end{equation}

\begin{lemma}\label{lem:skew-symmetric-operator}
The operator 
$$ {\cal T} :H^1(\Upsilon) \rightarrow L^2(\Upsilon), \quad f \mapsto  \nabla \cdot (f \e(\theta) ) = \e(\theta) \cdot \nabla f, $$
is skew-symmetric.
\end{lemma}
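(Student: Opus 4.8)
The plan is to verify skew-symmetry directly from the definition, that is, to establish the bilinear identity
$$\langle \mathcal{T} f, g \rangle_{L^2(\Upsilon)} = - \langle f, \mathcal{T} g \rangle_{L^2(\Upsilon)} \qquad \text{for all } f, g \in H^1(\Upsilon),$$
by a single integration by parts in the spatial variable. This is the sense of skew-symmetry (formal adjoint in the $L^2$ scalar product) needed for \cref{rem:antisymmetric}.

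First I would write the pairing out explicitly. Since $\e(\theta)$ depends only on the orientation and not on $\x$, the divergence collapses to $\mathcal{T} f = \e(\theta) \cdot \nabla f$, so that
$$\langle \mathcal{T} f, g \rangle = \int_0^{2\pi} \! \int_\Omega \big(\e(\theta) \cdot \nabla f(\x,\theta)\big)\, g(\x,\theta) \, \ud \x \, \ud \theta.$$
The key structural observation is that for each fixed $\theta$ the vector $\e(\theta)$ is a constant coefficient in the spatial integration; it therefore commutes freely with $\nabla$ and with $\int_\Omega \ud \x$, and no $\theta$-derivatives ever appear.

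Next, for fixed $\theta$ I would integrate by parts in $\x$ over the torus $\Omega = \mathbb{T}^2$. Because $\Omega$ carries periodic boundary conditions, there is no physical boundary and the boundary term vanishes, leaving
$$\int_\Omega \big(\e(\theta) \cdot \nabla f\big)\, g \, \ud \x = - \int_\Omega f \,\big(\e(\theta) \cdot \nabla g\big) \, \ud \x.$$
Integrating this identity over $\theta \in [0,2\pi)$ and recognizing $\e(\theta) \cdot \nabla g = \mathcal{T} g$ yields exactly $\langle \mathcal{T} f, g \rangle = - \langle f, \mathcal{T} g \rangle$, which is the asserted skew-symmetry.

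The only point requiring care — and the main, albeit minor, obstacle — is justifying the integration by parts at the $H^1$ regularity level. I would first prove the identity for smooth periodic functions $f, g \in C^\infty(\Upsilon)$, where integration by parts on the torus is standard and the absence of a boundary rules out boundary contributions, and then pass to all of $H^1(\Upsilon)$ by density, invoking continuity of both sides of the bilinear identity in the $H^1 \times L^2$ (respectively $L^2 \times H^1$) topology, together with the boundedness of $\mathcal{T} : H^1(\Upsilon) \to L^2(\Upsilon)$. Since the angular integration enters only through Fubini and no $\partial_\theta$ terms are generated, the computation is insensitive to the product structure of $\Upsilon$, and the argument is otherwise entirely routine.
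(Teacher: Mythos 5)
Your proposal is correct and follows essentially the same route as the paper: a single integration by parts in $\x$ over the periodic domain, using that $\e(\theta)$ is independent of $\x$ so that no boundary terms or angular derivatives arise. The additional density argument to justify the computation at the $H^1$ level is a harmless elaboration of what the paper leaves implicit.
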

\begin{proof}
	\begin{align*}
\int_\Upsilon \nabla \cdot (   f \e(\theta) ) h ~\ud \x ~\ud \theta  = - \int_\Upsilon  f \e(\theta) \cdot \nabla h ~\ud \x ~\ud \theta = - \int_\Upsilon  f   \nabla \cdot (h \e(\theta) ) ~\ud \x ~\ud \theta.
\end{align*}
\end{proof}

Using \cref{rem:antisymmetric}, we obtain the following result. 	
\begin{theorem} \label{th:stable}
For $\phi$ sufficiently small, the homogeneous stationary state is linearly stable.
\end{theorem}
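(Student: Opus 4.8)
The plan is to prove \cref{th:stable} through \cref{rem:antisymmetric}: it suffices to show that, for $\phi$ small, the \emph{symmetric} part $L^S$ of the linearised operator $L$ defined by \cref{general_form_stability} has strictly negative spectrum on the space of mean-zero perturbations. Since the antisymmetric part contributes nothing to the quadratic form, one has $\langle L\fp,\fp\rangle = \langle L^S\fp,\fp\rangle$, so I would work directly with this form and never decompose $L$ explicitly. Testing \cref{general_form_stability} against $\fp$ in $L^2(\Upsilon)$ and integrating by parts (using periodicity), the advection term drops out by \cref{lem:skew-symmetric-operator}, and the two diffusion terms produce negative Dirichlet energies, leaving
\[
\langle L\fp,\fp\rangle = -\alpha_i(\phi)\int_\Upsilon |\nabla_\x \fp|^2\,\ud\xi - \int_\Upsilon |\partial_\theta \fp|^2\,\ud\xi - \phi\int_\Upsilon {\cal L}_i(\fp)\cdot\nabla_\x\fp\,\ud\xi.
\]

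Next I would estimate the perturbation. Writing $D := \int_\Upsilon(|\nabla_\x\fp|^2 + |\partial_\theta\fp|^2)\,\ud\xi$ for the full Dirichlet energy on $\Upsilon=\mathbb T^3$, the Poincar\'e inequality for mean-zero functions gives $\|\fp\|_{L^2}^2\le D$ (the first nonzero eigenvalue of the Laplacian on $\Upsilon$ equals $1$, attained by the mode $e^{i\theta}$), hence $\|\fp\|_{H^1}^2\le 2D$. Since ${\cal L}_i\colon H^1(\Upsilon)\to L^2$ is bounded with a constant $C$ independent of $\phi$, Cauchy--Schwarz yields $|\phi\int_\Upsilon{\cal L}_i(\fp)\cdot\nabla_\x\fp|\le \phi C\|\fp\|_{H^1}\|\nabla_\x\fp\|_{L^2}\le \sqrt2\,C\,\phi\,D$. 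For $\phi\le \tfrac12$ one has $\alpha_i(\phi)\in\{(1-\phi)^2,\,1-\phi\}\ge \tfrac14$, so the diffusion terms are bounded above by $-\tfrac14 D$. Combining,
\[
\langle L^S\fp,\fp\rangle = \langle L\fp,\fp\rangle \le \Bigl(-\tfrac14 + \sqrt2\,C\,\phi\Bigr)D \le \Bigl(-\tfrac14 + \sqrt2\,C\,\phi\Bigr)\|\fp\|_{L^2}^2,
\]
where the last step uses Poincar\'e once more. Choosing $\phi < \min\{\tfrac12,\,1/(4\sqrt2\,C)\}$ makes the right-hand side strictly negative for every mean-zero $\fp\neq 0$.

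Finally, since the leading part $\alpha_i(\phi)\Delta + \partial_\theta^2$ is uniformly elliptic on the compact manifold $\Upsilon$ and the perturbation $\phi\nabla\cdot{\cal L}_i$ is of lower (first) order, $L^S$ is self-adjoint with compact resolvent; the coercivity estimate above then forces all its eigenvalues to be strictly negative (bounded above by $-\tfrac14+\sqrt2\,C\,\phi<0$). \cref{rem:antisymmetric} then delivers linear stability of the full operator. The one delicate point is the perturbation for Models 3 and 4: because ${\cal L}_3,{\cal L}_4$ contain the first-order contribution $\nabla\rhop$, the perturbation pairs gradient against gradient and is therefore of the \emph{same order} as the stabilising diffusion, rather than a harmless zeroth-order term. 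Controlling it genuinely requires both the explicit prefactor $\phi$ and the uniformity of the bound $C$; this smallness of $\phi$ is precisely what is exploited, and is what restricts the conclusion to $\phi$ sufficiently small.
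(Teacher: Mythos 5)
Your argument is correct and follows essentially the same route as the paper's proof: reduce to the symmetric part via \cref{rem:antisymmetric} and \cref{lem:skew-symmetric-operator}, then control the $\phi\nabla\cdot\mathcal{L}_i$ perturbation by the coercivity of the diffusion on mean-zero functions using the $H^1\to L^2$ boundedness of $\mathcal{L}_i$. The only difference is that you make the paper's one-line closing estimate quantitative (explicit Poincar\'e constant, explicit smallness threshold for $\phi$), which is a welcome but not substantively different elaboration.
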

\begin{proof}
For $\phi = 0$, $\alpha_i(0) =1$ for $i = 2, 3, 4$ and the linearized equation \cref{general_form_stability} reduces to
$$\partial_t \fp +   \Pe \nabla \cdot ( \fp \e(\theta) ) = \Delta \fp + \partial_{\theta}^2 \fp.$$ Hence, for $\phi = 0$, the symmetric part of \cref{general_form_stability} reduces to the three-dimensional heat equation, which is linearly stable. Therefore, following \cref{rem:antisymmetric} and \cref{lem:skew-symmetric-operator}, \cref{general_form_stability} is linearly stable for $\phi= 0$. For $\phi$ small enough, any unstable contribution from the symmetric part of $\mathcal L_i$ will still be controlled by the Laplacian part due to the mapping properties of $\mathcal L_i$ and the coercivity of the negative Laplacian on the subspace of functions with zero mean value. 
\end{proof}

In \cref{sec:lin_inst_symm,sec:num_instability}, we determine what ``small enough'' $\phi$ in \cref{th:stable} means in terms of $\Pe$ for each model.

\subsubsection{Linear Instability of the symmetric operator} \label{sec:lin_inst_symm}

We consider the general form \cref{general_form_stability} of the linearised Models 2, 3, and 4 and investigate the instability of the symmetrized operator in a unified way for $\Pe$ or $\phi$ sufficiently large. 
Inserting a  perturbation of the form $\tilde f (\x, \theta, t) = e^{\lambda t} \hat f(\x,\theta)$ into \cref{general_form_stability} we obtain:
\begin{equation} \label{def_L}
	\lambda \hat f = L_i (\hat f):= - (1-\phi) \Pe \nabla \cdot ( \hat f \e(\theta) )+ \alpha_i(\phi)  \Delta \hat f + \partial_{\theta}^2 \hat f + \phi \nabla \cdot ({\cal L}_i(\hat f)),
\end{equation}
so $\lambda$ corresponds to the eigenvalues of the linear operator $L_i$. 
Following the previous discussion, any region of instability of $L_i$ in parameter space must be contained in the region of instability of its associated symmetric operator $L^S_i = (L_i + L_i^*)/2$, where $L_i^*$ is the formal adjoint operator. Using \cref{lem:skew-symmetric-operator} and \cref{operator_L_models}, all the terms in \cref{def_L} except $\hat \rho \e(\theta) $ (contained in ${\cal L}_i$) are either self-adjoint or skew-symmetric. Therefore, in order to obtain $L^S_i$ we only need to compute the adjoint of this term. 
\begin{lemma}
	The adjoint of the operator $f \mapsto \nabla \cdot (\rho \e(\theta) )$ is $ f \mapsto -\nabla \cdot \p$.
\end{lemma}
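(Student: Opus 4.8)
The plan is to compute the $L^2(\Upsilon)$ pairing of $\nabla\cdot(\rho\,\e(\theta))$ against an arbitrary smooth periodic test function $g$, integrate by parts in the spatial variable, and read off the adjoint from the result. First I would observe that the density $\rho = \int_0^{2\pi} f\,\ud\theta$ depends only on $\x$, while $\e(\theta)$ depends only on $\theta$, so that $\nabla\cdot(\rho\,\e(\theta)) = \e(\theta)\cdot\nabla\rho$; this removes any ambiguity about how the divergence acts and makes the subsequent manipulation transparent.

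Next I would write the pairing as an iterated integral over $\Omega$ and $[0,2\pi)$ and carry the inner $\theta$-integral onto the factor $\e(\theta)\,g$. Introducing $\q(\x) := \int_0^{2\pi}\e(\theta)\,g(\x,\theta)\,\ud\theta$, the polarisation associated with $g$, this collapses the pairing to $\int_\Omega \nabla\rho\cdot\q\,\ud\x$. An integration by parts in $\x$—with no boundary contribution, thanks to the periodic boundary conditions on $\Omega = \mathbb{T}^2$—turns this into $-\int_\Omega \rho\,(\nabla\cdot\q)\,\ud\x$.

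Finally I would unfold $\rho = \int_0^{2\pi} f\,\ud\theta$ and recognise the resulting expression as the $L^2(\Upsilon)$ pairing of $f$ against $-\nabla\cdot\q$; this step is legitimate precisely because $\nabla\cdot\q$ is independent of $\theta$. This identifies the adjoint as $g\mapsto -\nabla\cdot\q$, which in the paper's generic input notation reads $f\mapsto -\nabla\cdot\p$, as claimed.

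There is no genuine obstacle here: the whole argument is a single integration by parts. The one point requiring care is the bookkeeping of roles. The $\rho$ appearing in the operator is the density of the \emph{input}, whereas the $\p$ appearing in the adjoint is the polarisation of the \emph{adjoint's} input, so the two test functions must be kept distinct (hence the auxiliary symbol $\q$) until the final expression is reached. One should also check that the objects involved lie in the appropriate spaces for the pairing and the integration by parts to be valid, which follows from the boundedness of $\mathcal{L}_i : H^1(\Upsilon)\to L^2(\Omega)$ together with density of smooth periodic functions; these verifications are routine and do not affect the identity itself.
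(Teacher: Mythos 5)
Your argument is correct and is essentially the paper's own proof: both compute the $L^2(\Upsilon)$ pairing against a test function, unfold $\rho$ as a $\theta$-integral of the input, integrate by parts in $\x$ using periodicity, and swap the order of integration to identify the adjoint as minus the divergence of the test function's polarisation. The additional remarks on keeping the two test functions' roles distinct and on the function spaces are sensible but do not change the route.
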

\begin{proof}
	\begin{align*}
\int_\Upsilon \nabla\cdot (   \rho \e(\theta) ) h ~\ud \x \ud \theta
&= \int_\Upsilon \nabla \cdot \left[  \int_0^{2\pi} f(\x,\theta') \ud\theta' \e(\theta) \right] h ~\ud \x \ud\theta \\
&= - \int_\Upsilon \left[\int_0^{2\pi}  \nabla \cdot (h \e(\theta) ) \ud \theta\right] f(\x,\theta')  ~\ud \x \ud \theta'.
\end{align*}
\end{proof}

Thus, we arrive at the following eigenvalue problem for $i = 2, 3, 4$
\begin{equation}\label{op_comptact}
	L_{i}^S \hat f = \nabla \cdot \left[ \alpha_i \nabla \hat f + \frac{\beta_i}{2 \pi} ( \hat \rho \e(\theta) - \hat \p ) + \frac{\gamma_i}{2 \pi} \nabla \hat \rho \right] + \partial_\theta^2 \hat f = \lambda \hat f,
\end{equation}
where $\alpha_i$ are as in \cref{general_form_stability} and $\beta_2 = \beta_4 = \phi \Pe/2$, $\beta_3 = \phi \Pe$, $\gamma_2 = 0, \gamma_3 = 3\phi$, and $\gamma_4 = \phi$.
In order to proceed, we first ignore the term $\Delta \hat f$ in \cref{op_comptact} and consider the auxiliary eigenvalue problem
\begin{equation}
	\label{eig_reduced}
	 \nabla \cdot \left[\beta_i ( \hat \rho \e(\theta) - \hat \p )/(2 \pi) + \gamma_i \nabla \hat \rho/(2 \pi) \right] + \partial_\theta^2 \hat f = \Lambda \hat f.
\end{equation}
We look for a solution of the form $\hat f(\x,\theta) = a(\x) + \bfb(\x)\cdot \e(\theta) $. Inserting this in \cref{eig_reduced} we find
$$
\nabla \cdot [\beta_i (a \e(\theta)  -  \bfb/2) +  \gamma_i \nabla a] - \bfb \cdot \e(\theta) = \Lambda (a + \bfb \cdot \e(\theta) ),
$$
which implies
$$
\gamma_i \Delta a -  \frac{\beta_i}{2} \nabla \cdot \bfb = \Lambda a,\qquad \beta_i  \nabla a  = (1+\Lambda) \bfb.
$$
Therefore
\begin{equation} \label{Lambda_def}
\Delta a =  \mu a \qquad \text{with} \qquad \mu := \frac{2 \Lambda(1+ \Lambda)}{2(1+\Lambda)\gamma_i -  \beta_i^2}.	
\end{equation}
Since $\bfb$ is proportional to $\nabla a$, it is an eigenfunction of the Laplacian as well, so overall
$ \Delta \hat f =  \mu \hat f.$
Imposing periodicity on $\Omega = [0,1]^2$ (it is already periodic in angle by construction) we have that $\mu = -4\pi^2 (m^2 + n^2)$ for $m,n \in \mathbb N$.

Finally, we take $\hat f$ such that it is an eigenfunction of \cref{eig_reduced} and consider \cref{op_comptact}.
Then it follows that
\begin{equation*}
	L_i^S \hat f = \alpha_i \mu \hat f + \Lambda \hat f,
\end{equation*}
so the eigenvalue of \cref{op_comptact} is given by
$	\lambda =  \alpha_i \mu + \Lambda.$
Imposing $\lambda >0$ 
leads to the condition
$$
\frac{\beta_i^2}{2} > (\alpha_i + \gamma_i)(1 -  \mu \alpha_i) \ge (\alpha_i + \gamma_i)(1 +  4\pi^2 \alpha_i),
$$
where in the last inequality we have already discarded the case $\mu = 0$ as this gives $\Lambda = 0, -1$ (therefore leading to $\lambda \le 0$). Substituting in the values of the constants $\alpha_i, \beta_i, \gamma_i$, we arrive at $\lambda>0$ iff $\Pe > \Pe_i(\phi)$ for $i = 2, 3, 4$ with
\begin{align} \label{boundary}
	\begin{aligned}
		\Pe_2(\phi) &= 2 (1-\phi) \sqrt{2 + 8\pi^2 (1-\phi)^2  }/\phi,\\
		\Pe_3(\phi) &= \sqrt{2 (1+2 \phi)(1+4\pi^2(1-\phi))}/\phi,\\
		\Pe_4(\phi) &= 2 \sqrt{2 + 8\pi^2 (1-\phi)}/\phi.
	\end{aligned}
\end{align}
In \cref{fig:dispersion} we plot the three curves for the onset of linear instability of the symmetrized operator for each of the three models. For comparison, we also plot the boundary of stability for the symmetric part of the one-dimensional Model 2 (see \cref{sec:1dinstability}).
\def \scl {1.0}
\begin{figure}[htb]
\begin{center}
\psfrag{Pe}[][][\scl][-90]{$\Pe$}
\psfrag{phi}[][][\scl]{$\phi$}
\psfrag{M2}[l][][\scl]{Model 2}
\psfrag{M3}[l][][\scl]{Model 3}
\psfrag{M4}[l][][\scl]{Model 4}
\psfrag{M21}[l][][\scl]{Model 2 (1D)}
\includegraphics[width = 0.5\textwidth]{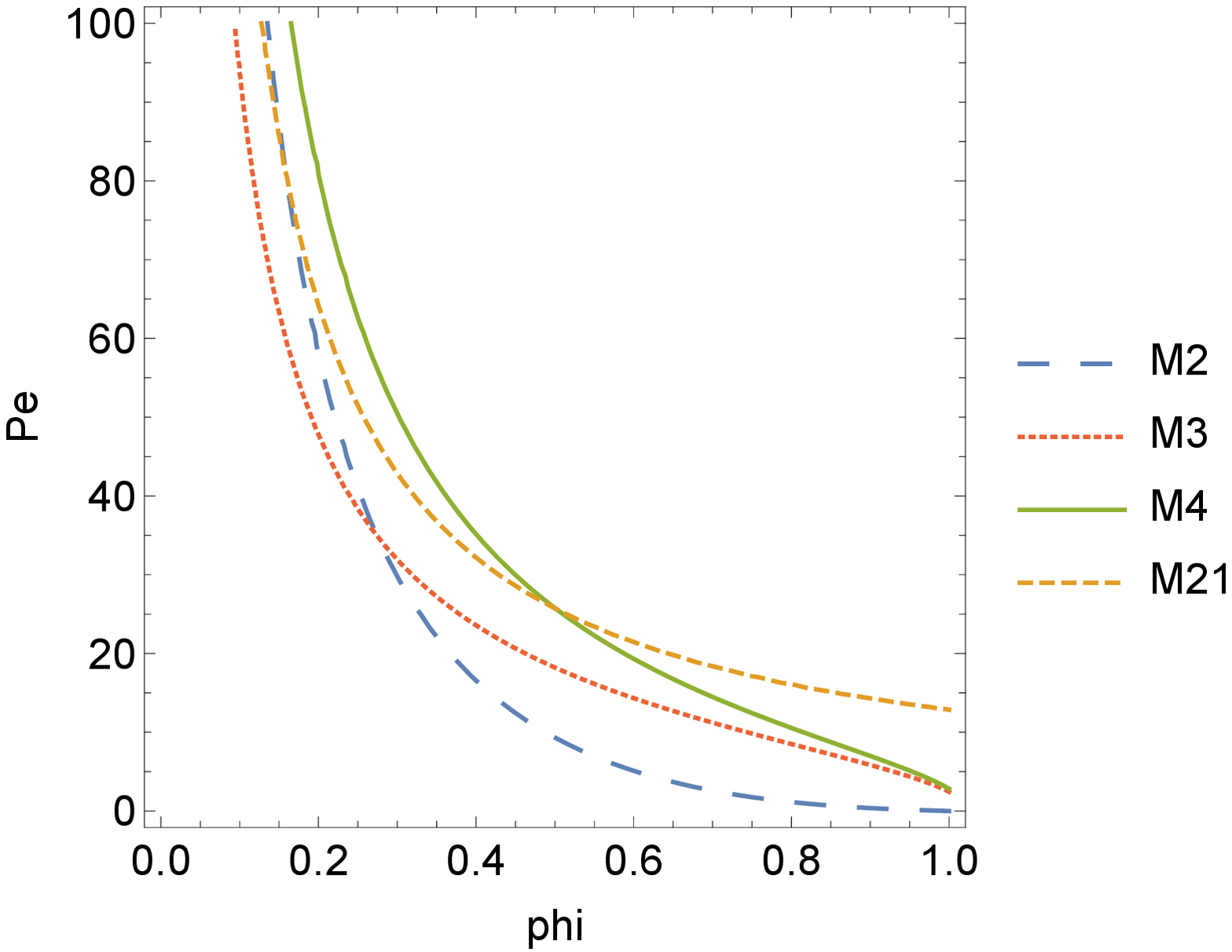}
\end{center}
  \caption{Curves $\Pe(\phi)$ in \cref{boundary} indicate the boundary between linear stability and instability of the symmetrised operators $L_i^S, i = 2, 3, 4$, given by \cref{boundary}. We use $\Dphi = (1-\phi)^2$ in Model 2. The area above the curves corresponds to the operator $L_i^S$ having a positive eigenvalue. The curve $\Pe(\phi)$ corresponding to the one-dimensional version of Model 2 is given in \cref{eig_1dsym}.
  }
 \label{fig:dispersion}
\end{figure}

\cref{fig:dispersion} gives regions in $\Pe-\phi$ space, below the curves, for which the homogeneous stationary state of Models 2, 3, and 4 is stable, given the stability of the associated symmetric problems, and therefore no phase separation can occur. In this section, we use numerical simulations to establish the converse, that is, the regions where unstable modes leading to motility-induced phase separation exist.

\subsubsection{Numerical scheme for the macroscopic PDE models}

In order to investigate the instability of the models by numerical simulations, we consider a first-order finite-volume scheme for the two-dimensional models \cref{model2_c}, \cref{model3_c}, and \cref{model4_c} based on \cite{Carrillo:2017uq, Schmidtchen:2020wy}. We note that these are integro-differential equations due to the terms involving $\rho$ and $\p$. 

For the numerical scheme, it is convenient to rewrite the models in the following form
\begin{equation}
	\partial_t f = - \nabla_\xi \cdot \left (  M {\bf U} \right), 
\end{equation}
where $ M$ is a $3\times 3$ mobility matrix and ${\bf U} = (U^x, U^y, U^\theta)$ is the vector of velocities in each of the coordinate directions. 
We note that equations \cref{model2_c}-\cref{model4_c} are not of gradient-flow form with respect to the usual $2$-Wasserstein distance and therefore $\bf U$ cannot be written as the gradient in $\xi$ of a potential function. This is due to the self-propulsion term $\e(\theta) $ being $\theta$-dependent. However, as in \cite{Kruk.2021} we still make use of the gradient-flow structure of the angular subflow (and the spatial subflow in the case of Model 4) and employ structure-preserving schemes developed for gradient flow structures in our numerical scheme. In particular, the angular velocity in all cases is $U^\theta = - \partial_\theta \log f$. For Model 2 \cref{model2_c} we have $M = f I_3$ and  the spatial velocity
\begin{equation}
	\label{spatial_flux_model2}
(U^x, U^y) = \Pe (1-\rho) \e(\theta) - \Dphi \nabla \log f. 
\end{equation}
For Model 3 \cref{model3_c} we choose $M = \text{diag}(f(1-\rho), f(1-\rho), f)$ and spatial velocity
$$
(U^x, U^y) = \Pe \left[ \e(\theta) + \frac{\p}{1-\rho} \right] -  \nabla \left[ \log f - 3 \log(1-\rho)\right]. 
$$
Finally, for  Model 4 \cref{model4_c} we use again $M = \text{diag}(f(1-\rho), f(1-\rho), f)$, and a spatial velocity given by a gradient 
$$
(U^x, U^y) = \nabla \left[ v_0 E_\theta - \log f + \log(1-\rho)\right] = -\nabla \zeta,
$$
where $E_\theta = \cos\theta x + \sin \theta y$ and $\zeta$ is the velocity potential. Thus we rewrite \cref{model4_c} as $\partial_t f =  \nabla \cdot [f(1-\rho) \nabla \zeta] + \partial_\theta( f \partial_\theta \log f)$, and use the finite-volume scheme for gradient flows of \cite{Carrillo:2017uq} in each of the two subflows of our equation.

We discretise the phase space $\Upsilon = [0, 1]^2 \times [0, 2\pi]$ into $N_x \times N_y \times N_\theta$ uniform finite volume cells $C_{i,j,k}$ of volume $\Delta x \Delta y \Delta \theta$, where $\Delta x = 1/N_x, \Delta y = 1/N_y$, and $\Delta \theta = 2\pi/N_\theta$. The cell centres are $(x_i, y_j, \theta_k) = (i\Delta x, j \Delta y, k \Delta \theta)$, $i = 0, \dots, N_x-1, j = 0, \dots N_y-1, k = 0, \dots, N_\theta - 1$. The periodic boundary conditions imply that $x_{N_x+i} = x_i$, $y_{N_y+j} = y_j$, and $\theta_{N_\theta+k} = \theta_k$. Finally, the time interval $[0,T]$ is discretised by $t_n=n\Delta t$, for $n= 0, \dots,\lceil T/\Delta t \rceil$. 

We define the cell averages
$$
f_{i,j,k}(t) = \frac{1}{\Delta x \Delta y \Delta \theta} \iiint_{C_{i,j,k}} f(x,y,\theta, t) \, \ud x \ud y \ud \theta
$$
and the finite-volume scheme
\begin{equation}\label{FV_semi}
	\frac{d}{dt}f_{i,j,k} = - \frac{F^x_{i+1/2,j,k}-F^x_{i-1/2,j,k}}{\Delta x} - \frac{F^y_{i, j+1/2,k}-F^y_{i, j-1/2,k}}{\Delta y} - \frac{F^\theta_{i,j,k+1/2}-F^\theta_{i,j,k-1/2}}{\Delta \theta},
\end{equation}
for $i = 0, \dots, N_x-1, j = 0, \dots N_y-1, k = 0, \dots, N_\theta - 1$. We approximate the flux $F^x$ 
at the cell interfaces by the numerical upwind flux
\begin{equation}
	F^x_{i+1/2,j,k} = (U_{i+1/2,j,k}^x)^+ f_{i,j,k} + (U_{i+1/2,j,k}^x)^- f_{i+1,j,k}, 
\end{equation}
using $(\cdot)^+ = \max(\cdot, 0)$ and $(\cdot)^- = \min(\cdot, 0)$, and similarly for $F^y$ and $F^\theta$. The velocities $U^x, U^y, U^\theta$ are approximated by centred differences, e.g., for the angular velocity (common in all models)
$$
U_{i,j,k+1/2}^\theta = -\frac{\log f_{i,j,k+1} - \log f_{i,j,k}}{\Delta \theta}.
$$
In the velocities of Models 2 and 3 that are not of gradient type, we use a first-order interpolation of the densities to evaluate them at the cell interfaces. For example, the $x$-velocity in \cref{spatial_flux_model2} is approximated as
$$
U_{i+1/2,j,k}^x = \Pe \, \e(k \Delta \theta) \left( 1- \frac{\rho_{i,j} + \rho_{i+1,j}}{2}\right) - \Dphi \frac{\log f_{i+1,j,k} - \log f_{i,j,k}}{\Delta x},
$$
where $\rho_{i,j} = \Delta \theta \sum_{k=0}^{N_\theta-1} f_{i,j,k}$.
Finally, we discretise the system of ODEs \cref{FV_semi} by the forward Euler method with an adaptive time-stepping satisfying the CFL condition \cite{Kruk.2021}
\begin{equation}\label{timestep-constraint}
	\Delta t \le \min \left \{ \frac{\Delta x}{6a}, \frac{\Delta y}{6b}, \frac{\Delta \theta}{6c} \right \}
\end{equation}
with $a = \max |U_{i+1/2,j,k}^x|$, $b = \max |U_{i,j+1/2,k}^y|$, and $c = \max |U_{i,j,k+1/2}^\theta|$. In \cite{Kruk.2021} they derive this CFL condition for a nonlocal model of active particles, and show it leads to a positivity-preserving numerical scheme. A key difference is that their scheme is second-order in phase space as they use a positivity-preserving piecewise linear reconstruction of the density at the interfaces, rather than using the values at the centre of the cells of either side as we do here. However, in our numerical tests (in which we use a maximum timestep of $\Delta t = 10^{-5}$) we observe \cref{timestep-constraint} to also be sufficient to preserve positivity in our scheme provided that the mobility matrix $M$ remains positive semi-definite. We found this to be always true for Models 2 and 4, whereas for Model 3 we modify the mobility in the numerical scheme to be $\tilde M_{ij} := \max (M_{ij}, 0)$. This is due to the equation for $\rho$ in Model 3 not having a maximum principle. Indeed, testing \cref{model3_c} with the negative part of either $\rho$ or $1 - \rho$ does not readily yield sign preservation. In contrast, equations \cref{model2_c} and \cref{model4_c} do readily give that $\rho \le 1$ and $f\ge 0$. 

\subsubsection{Numerical study of the instability of full operator} \label{sec:num_instability}

We solve the macroscopic models \cref{model2_c}-\cref{model4_c} using the finite-volume scheme described above with an initial perturbation around the homogeneous stationary state. In particular, we choose the first eigenfunction of the corresponding symmetric problem as the initial perturbation (as it is the worst case), which from \cref{sec:lin_inst_symm} is given by $\tilde f_0 := a(\x) + \bfb(\x)\cdot \e(\theta) $ with $a(\x) \equiv a(x) = \cos(2\pi x)$ and $\bfb(\x) \equiv (b(x), 0)$, $b(x) = \beta_i a'(x)/(1+ \Lambda)$ and $\Lambda$ given from \cref{Lambda_def} with $\mu = - 4 \pi^2$. The initial condition is then
$f_0(\xi) = f_* + \delta \tilde f_0,
$
with $\delta = 0.01$. We solve the three models with this initial condition (note that $\tilde f_0$ changes for each model through the coefficients $\beta_i$ and $\Lambda$) and final time $T = 0.2$, and for multiple combinations of the two remaining dimensionless parameters $\Pe$ and $\phi$. We use the following norm to study the growth of the perturbation over time and determine the stability of the homogeneous state:
\begin{equation} \label{perturbation_norm}
	e(t) := \Delta x \Delta y \Delta \theta \left(\sum_{i=0}^{N_x-1} \sum_{j=0}^{N_y-1} \sum_{k=0}^{N_\theta-1} |f_{i,j,k}(t) - f_*|^2 \right)^{1/2}.
\end{equation}

\cref{fig:perturbation} shows three sample outputs of the perturbation norm with three different scenarios for Model 4. The left panel corresponds to a set of parameters for which $f_*$ is linearly or exponentially stable since $e'(0) < 0$. In contrast, the right-most panel shows an example of instability since the perturbation is growing over time (and eventually converging to a non-trivial state, although we do not cover the long-time dynamics in this study). The middle panel shows an in-between case whereby the homogeneous state $f_*$ is linearly unstable (since $e'(0)>0$, so the perturbation initially grows in time), but eventually, the perturbation norm starts to decay, and the system goes back towards $f_*$. In the latter case, $f_*$ is Lyapunov asymptotically stable (for every $\epsilon > 0$, we can find an initial perturbation norm $e(0) = \delta$ such that $e(t) < \epsilon$ for all times $t>0$, and $\lim_{t\to \infty} e(t) = 0$). 

\def \scl {1.0}
\begin{figure}[htb]
\begin{center}
\psfrag{f}[][][\scl][-90]{$e$}
\psfrag{t}[][][\scl]{$t$}
\includegraphics[width = 0.32\textwidth]{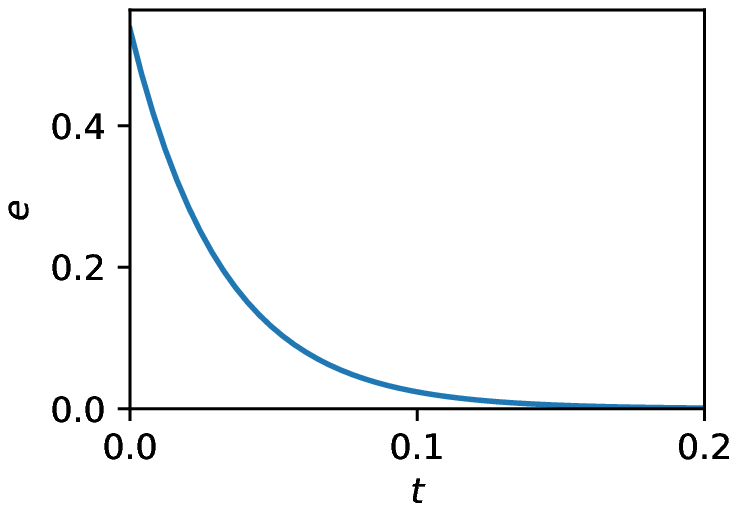}
\includegraphics[width = 0.32\textwidth]{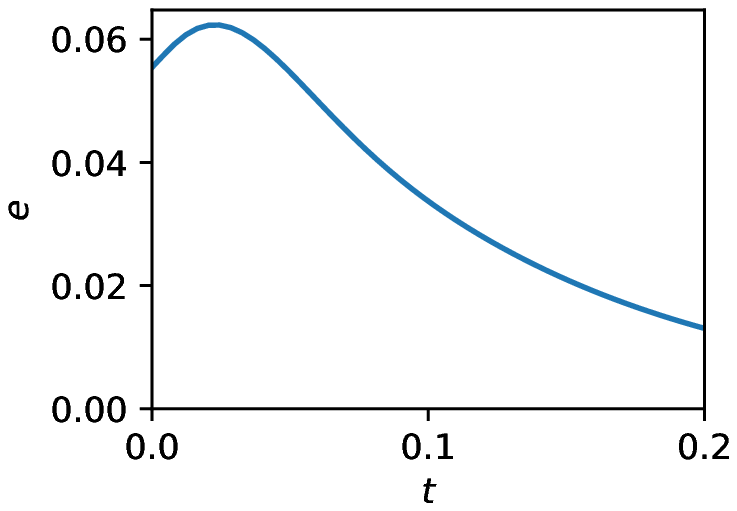}
\psfrag{f}[][][\scl][-90]{$e$}
\psfrag{t}[][][\scl]{$t$}
\includegraphics[width = 0.32\textwidth]{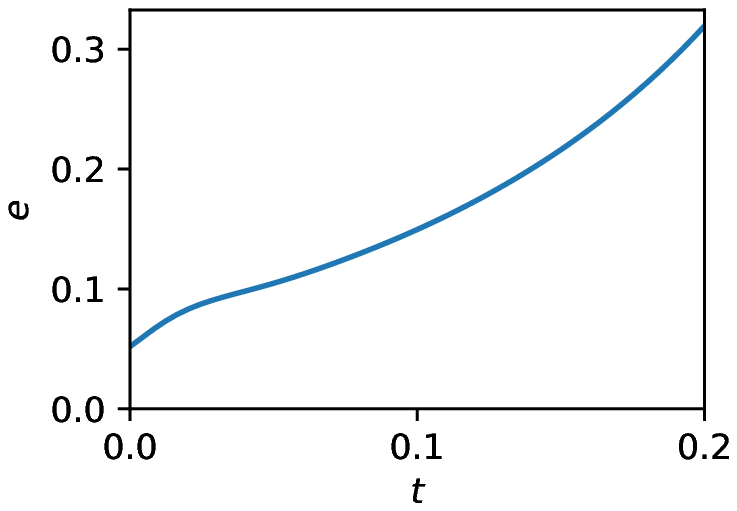}
\end{center}
\caption{Evolution of the perturbation norm $e(t)$  \cref{perturbation_norm} divided by $\Delta x \Delta y \Delta \theta$ as a function of time corresponding to Model 4 \cref{model4_c} with initial condition $f_0(\xi) = f_* + \delta \tilde f_0$. (a) Linearly stable case with $\phi = 0.25$ and $\Pe = 1$. (b) Lyapunov stable case with $\phi = 0.6$ and $\Pe = 30$. (c) Unstable case with $\phi = 0.6$ and $\Pe = 50$.}
 \label{fig:perturbation}
\end{figure}

We next do a swap in parameter space for $\phi = 0, 0.05, \dots, 0.95$ and $\Pe = 1, 5, 10, \dots, 100$, and classify each case as
\begin{itemize}
	\item linearly stable: if $e(\Delta t) - e(0) < 0$, 
	\item Lyapunov stable: if $e(\Delta t) - e(0) > 0$ but $e(T) - e(0) < 0$,
	\item unstable: if $e(T) - e(0) > 0$.
\end{itemize}
We plot the result for each of the three models in \cref{fig:fulldispersion}. For reference, for each case, we also plot the dispersion relation corresponding to the symmetric operator $L_i^S$ (solid black curves) and observe a good consistency; that is, below the curves, there are only stable points as per \cref{rem:antisymmetric} (the only exception being two green dots in Model 3 for which $e'(0) \gtrsim 0$ but within the numerical scheme set tolerance).

\begin{figure}[htb]
\begin{center}
\includegraphics[width = 0.32\textwidth]{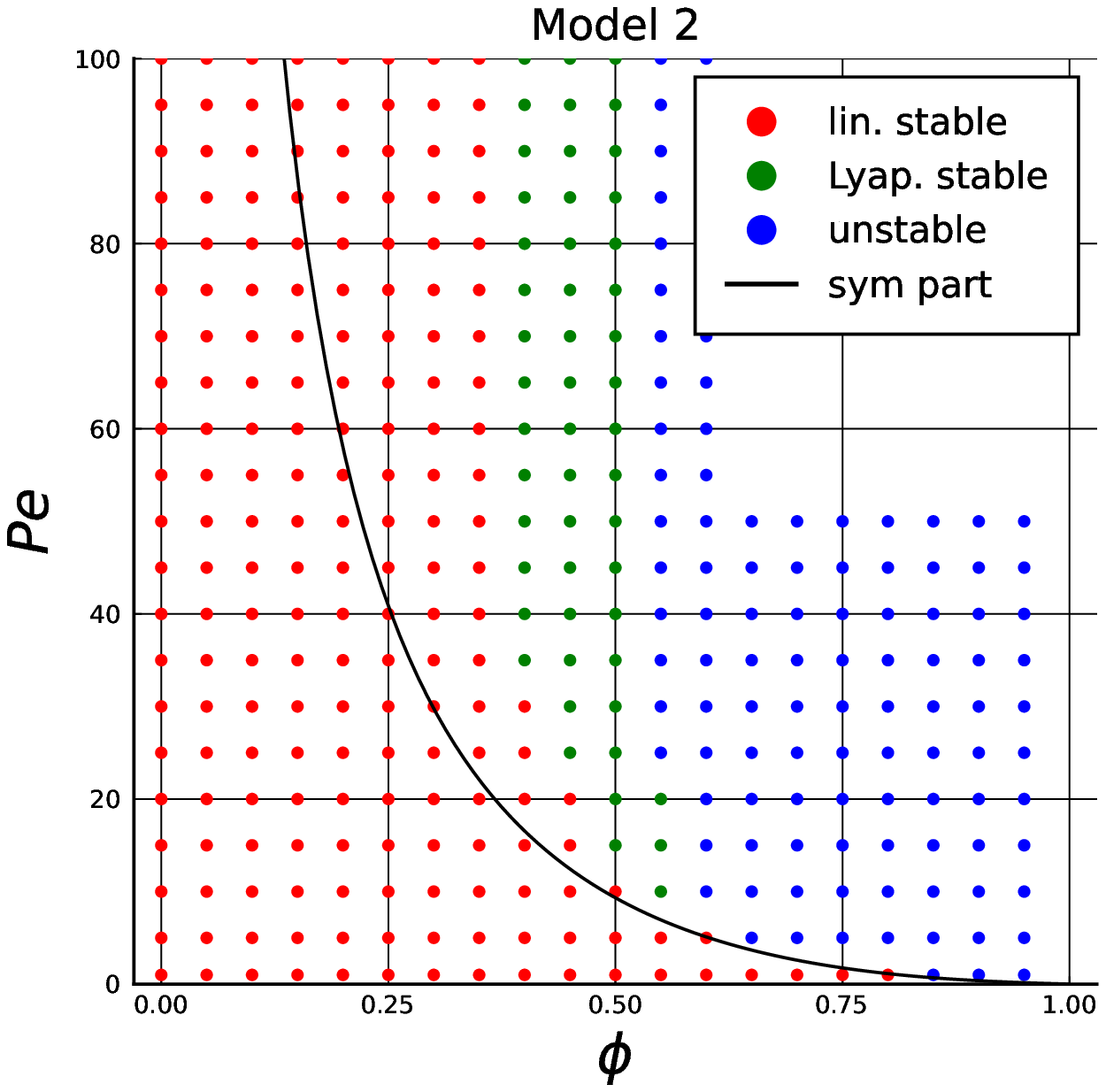}
\includegraphics[width = 0.32\textwidth]{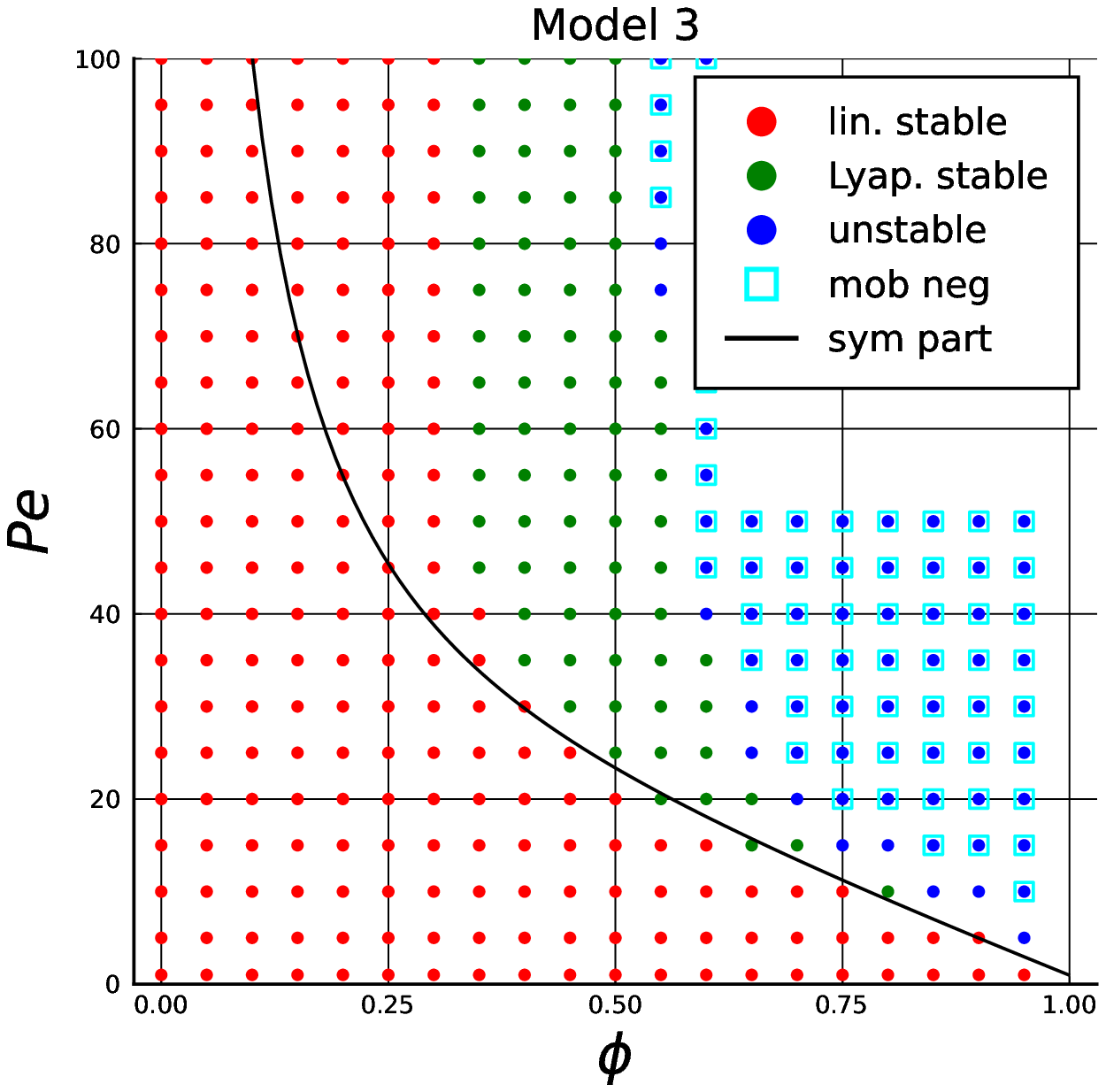}
\includegraphics[width = 0.32\textwidth]{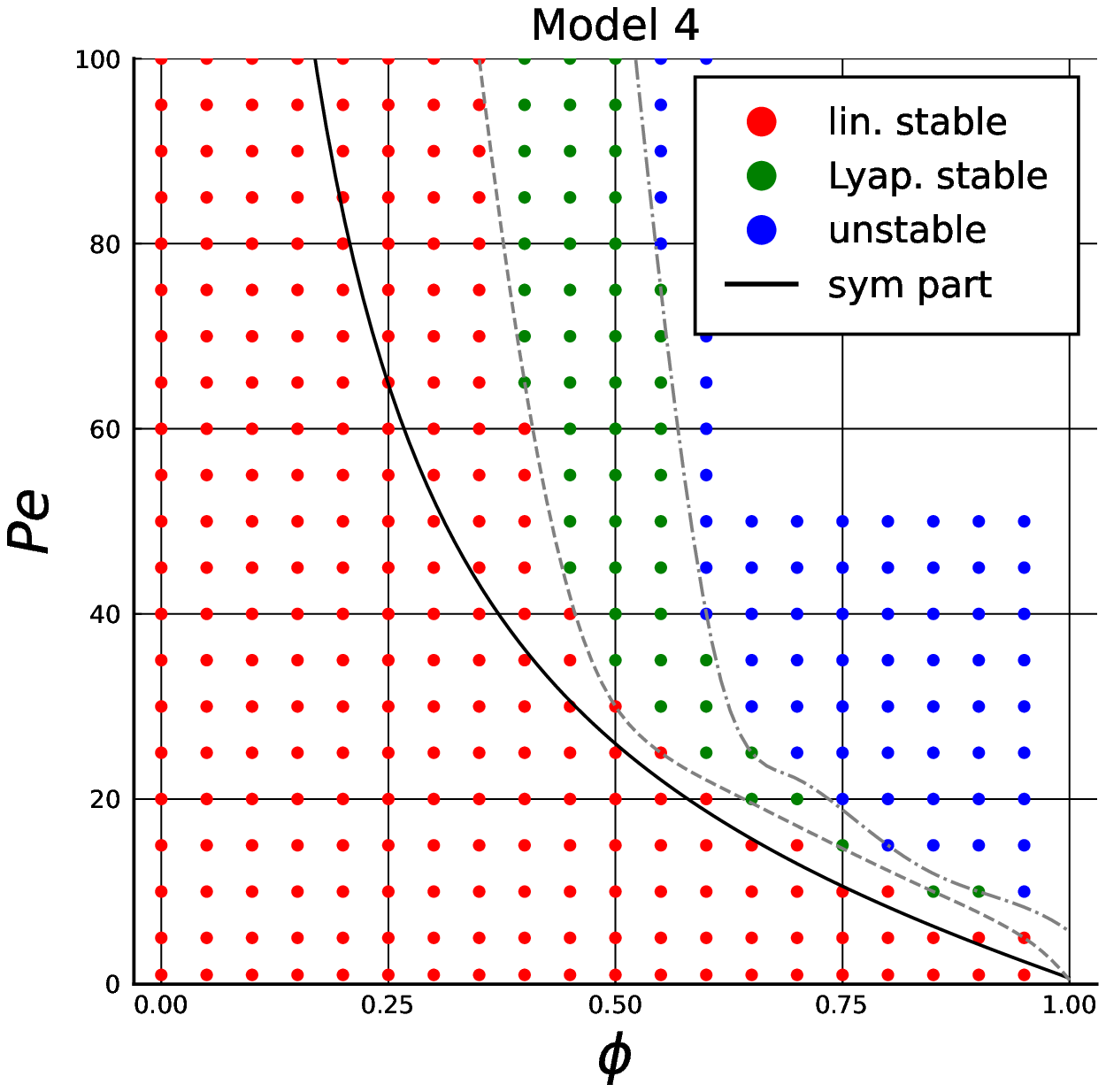}
\end{center}
  \caption{Stability of the full operator obtained via numerical simulations. Solid curves indicate boundary between linear stability and instability of the symmetrised operators $L_i^S, i = 2, 3, 4$, given by \cref{boundary} (same curves as in \cref{fig:dispersion}). In the right plot (Model 4), the grey dashed and dot-dashed lines indicate the boundaries between the three regions of stability.}
 \label{fig:fulldispersion}
\end{figure}

\section{Numerical examples} \label{sec:numerics}

This section shows numerical examples of phase separation in Models 2 to 4. In particular, using the phase diagrams in \cref{fig:fulldispersion}, we choose combinations of $\phi$ and $\Pe$ for which we know the homogeneous equilibrium is unstable and look at the convergence to non-homogeneous phase-segregated states starting from a random perturbation.

\subsection{Macroscopic phase separation}
\cref{fig:model4_2dunstable} shows an example of instability in Model 4 starting from a two-dimensional perturbation in space (a multiple of the second eigenfunction of the associated symmetric problem, corresponding to $\mu = -8 \pi ^2 $ in \cref{Lambda_def}, first column $t=0$). We observe how first the original perturbation gets amplified, so we see a two-dimensional pattern (second column $t= 0.2$ in  \cref{fig:model4_2dunstable}). However, as time progresses the symmetry is lost and the pattern evolves into a one-dimensional pattern (in space) (third and forth columns in \cref{fig:model4_2dunstable}). We also show the \emph{mean angular direction} 
\begin{equation}
\label{mean_direction}
\q (\x, t) := \frac{\p(\x,t)}{\rho(\x,t)}.	
\end{equation}
The arrows in the stream plots represent the direction of $\q$, while the color indicates the magnitude $|\q|$. We note that $|\q| \le 1$, with $|\q|$ higher the more ordered the system is ($|\q| = 0$ for randomly aligned particles and $|\q| = 1$ for perfectly aligned particles). In the middle row of \cref{fig:model4_2dunstable}, we observe that the alignment is maximal in the boundaries between the low- and high-density regions, and that particles there point towards the high-density region. This produces a ``locking in'' mechanism by which particles in the denser region cannot escape. Finally, in the third row of \cref{fig:model4_2dunstable} we show the first component of the mobility averaged over angles, $m_{11} := \int \tilde M_{11} \ud \theta$, which for Model 4 is equal to $m_{11} = \rho(1 - \rho)$. As expected, the mobility is lowest in the dense region and highest in the interface between the dilute and dense regions. 
\begin{figure}[htb]
\begin{center}
\includegraphics[width = \textwidth]{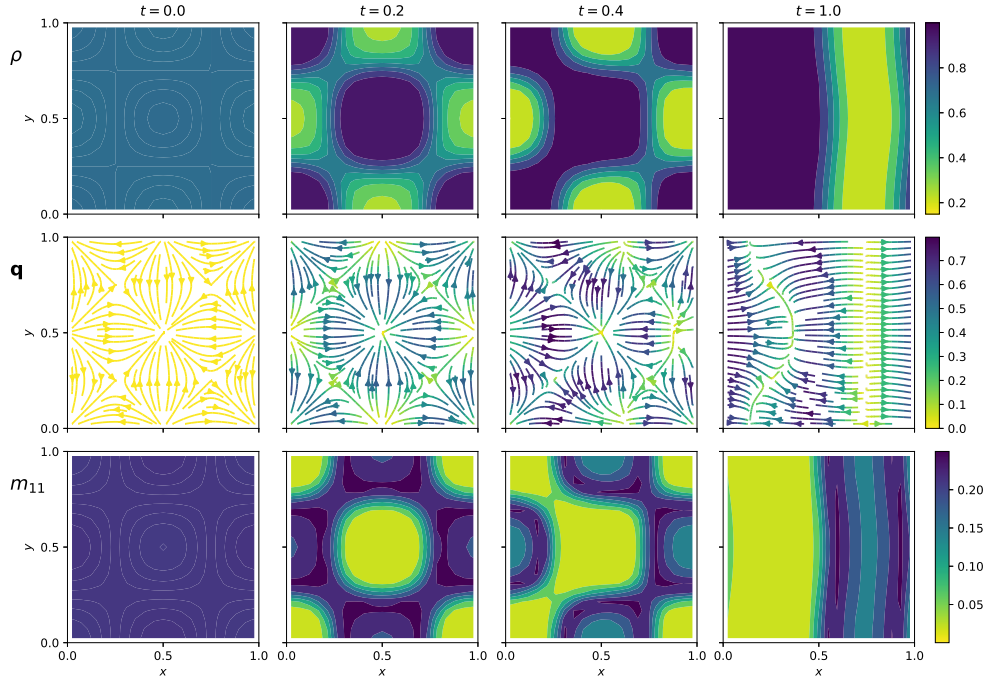}
\end{center}
  \caption{Example of a two-dimensional pattern in Model 4 \cref{model4_c} corresponding to $\phi = 0.7$ and $\Pe = 40$ starting from a perturbation symmetric in $x$ and $y$. First row corresponds to the density $\rho(\x,t)$, the second shows streamplots of the rescaled polarisation $\p(\x,t)/\rho(\x,t)$ and the last row shows the mobility $M$.}
 \label{fig:model4_2dunstable}
\end{figure}

In \cref{fig:model4_general_unstable} we consider the same model and parameters as in \cref{fig:model4_2dunstable} except that the initial condition is different: instead of taking the symmetric initial condition we consider a random perturbation of the form $\tilde f (\xi) = \sin(a_1 \pi x) \sin(a_2 \pi y) \sin (a_3 \theta/2)$ where $a_i$ are randomly picked integers between 1 and 10. We observe how the initial random orientations evolve to an ordered pattern (middle row in \cref{fig:model4_general_unstable}), showing again the locking in mechanism towards high-dense regions. In this instance, the pattern at $t=1$ is the complement of a circular blob, which stays empty (top row, right-most plot in \cref{fig:model4_general_unstable}). 
\begin{figure}[htb]
\begin{center}
\includegraphics[width = \textwidth]{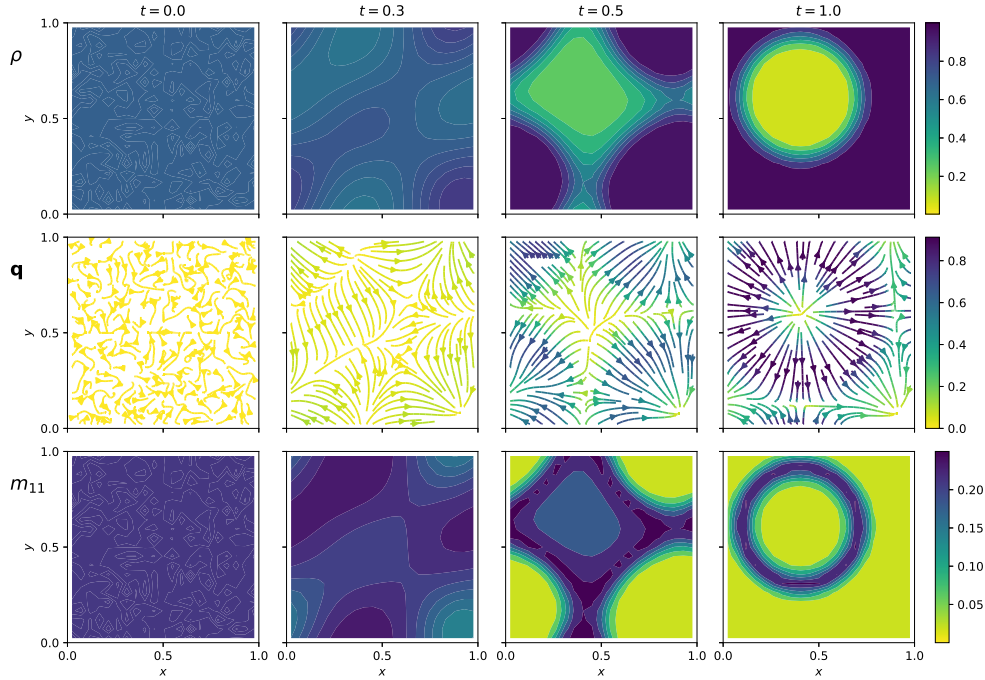}
\end{center}
  \caption{Two-dimensional pattern in Model 4 \cref{model4_c} corresponding to $\phi = 0.7$ and $\Pe = 40$ starting from a random initial perturbation.}
 \label{fig:model4_general_unstable}
\end{figure}

We observe similar patterns in Models 2 and 3, with Model 2 showing patterns for lower values of $\phi$ and $\Pe$ as expected from the dispersion diagram (see \cref{fig:fulldispersion}). We plot examples of such patterns for Models 2 and 3 in \cref{fig:model2_unstable,fig:model3_unstable} respectively. We show more of such examples in \cref{sec:A_PDEsims}.  
\begin{figure}[htb]
\begin{center}
\includegraphics[width = \textwidth]{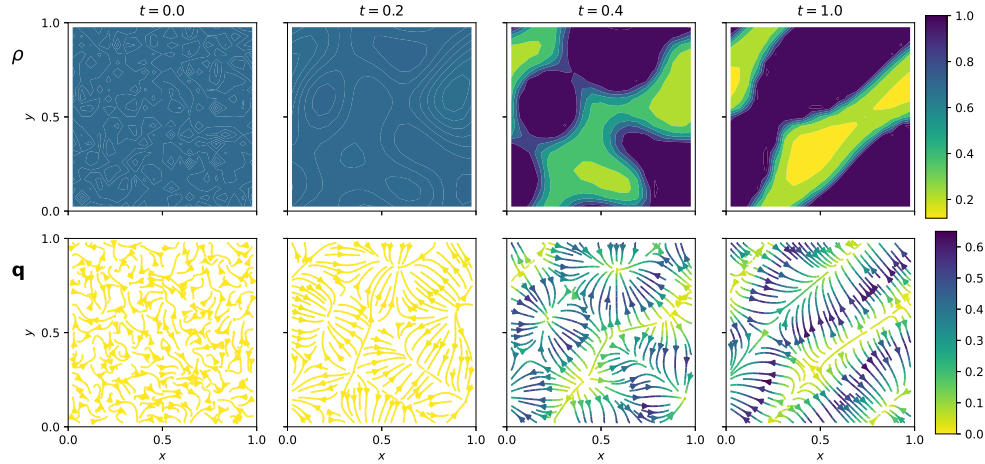}
\end{center}
  \caption{Time-evolution of Model 2 \cref{model2_c} with $\phi = 0.7$, $\Pe = 20$ and final time $T = 1.0$, starting from a random initial perturbation.}
 \label{fig:model2_unstable}
\end{figure}

\begin{figure}[htb]
\begin{center}
\includegraphics[width = \textwidth]{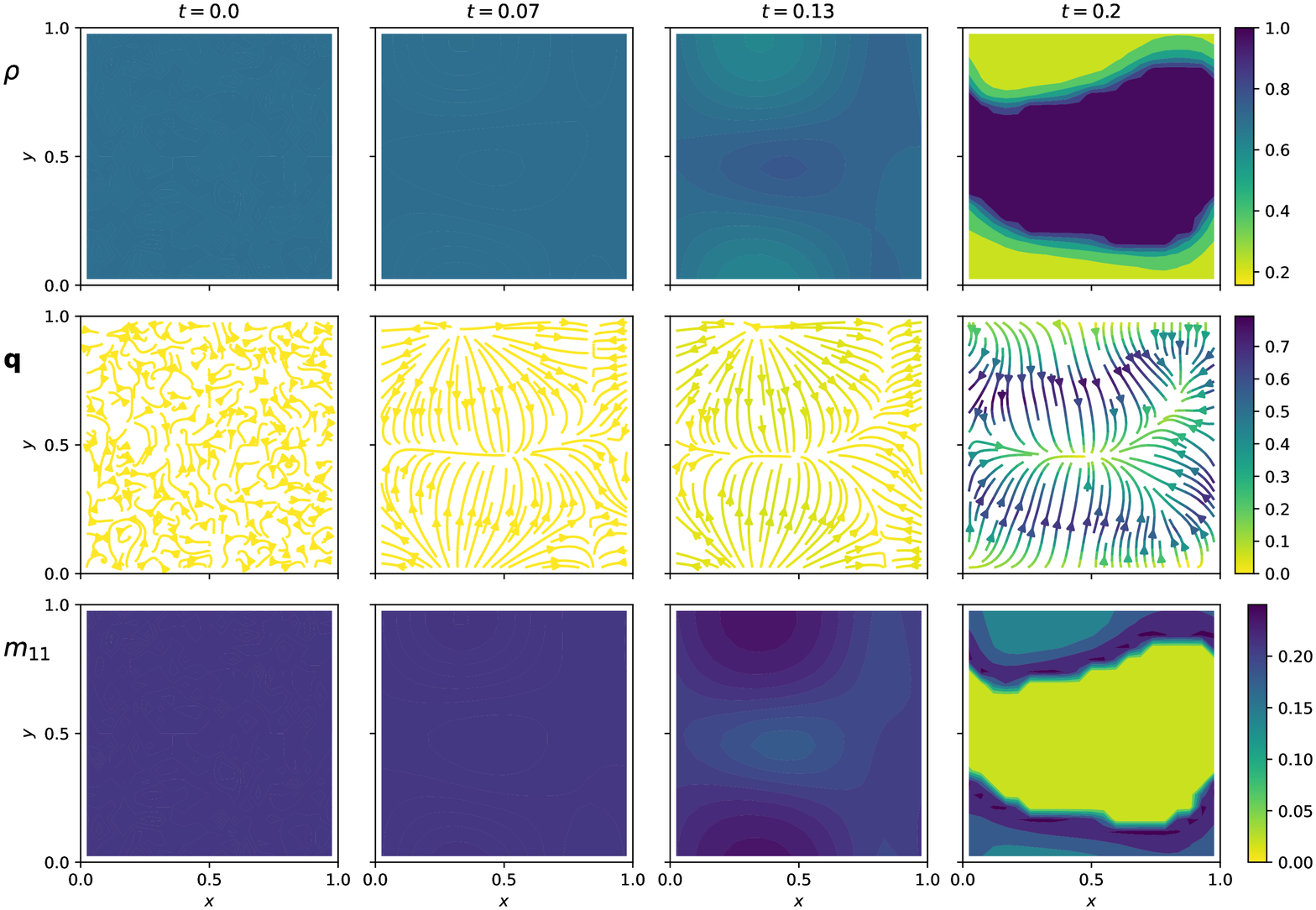}
\end{center}
  \caption{Time-evolution of Model 3 \cref{model3_c} with $\phi = 0.7$, $\Pe = 40$ and final time $T = 0.2$, starting from a random initial perturbation.}
 \label{fig:model3_unstable}
\end{figure}

\subsection{Comparison with the microscopic system}

In this section, we compare the results obtained from the PDE analysis with stochastic simulations of the microscopic models. We focus on Model 4, whose microscopic dynamics correspond to a discrete jump process in position and a continuous Brownian motion in angle (see \cref{sec:model3}). 

We use $N = 500$ particles in a square periodic lattice with spacing $\epsilon$ such that $N \epsilon^2 = \phi$ for every given $\phi$. Simulations are performed using the agent-based modelling package Agents.jl \cite{Agents.jl} in Julia. The particles are initialized at $t=0$ uniformly distributed in angle and the lattice (while satisfying the constraint of only one particle per lattice site). Then the system is evolved using a fixed timestep $\Delta t = 10^{-4}$. \cref{fig:model4_ABM} shows two examples of the system at $T=1$ for different $(\phi, \Pe)$ pairs. The particles are represented by triangles of length $\epsilon$ pointing towards $\e(\Theta_i)$. MIPS is clearly present in both cases, with two distinct regions: a dilute one with almost zero density and a dense region with particles in close packing. We observe a strong particle alignment in the boundary between two regions, with the polarisation $\p$ pointing towards the dense region. We also show the absolute value of the mean orientation $\q$ in \cref{mean_direction} in each lattice site as a colormap. Using this, it is easy to see the random orientation in the bulk of the dense region. More examples showing MIPS and its emergence for increasing $\phi$ or $\Pe$ are  shown in \cref{fig:ABM4_Tf=1.0_v0=100.0,fig:ABM4_Tf=1.0_phi=0.6}.
\begin{figure}
\begin{minipage}[c]{0.45\linewidth}
\centering	
	\includegraphics[height = \textwidth]{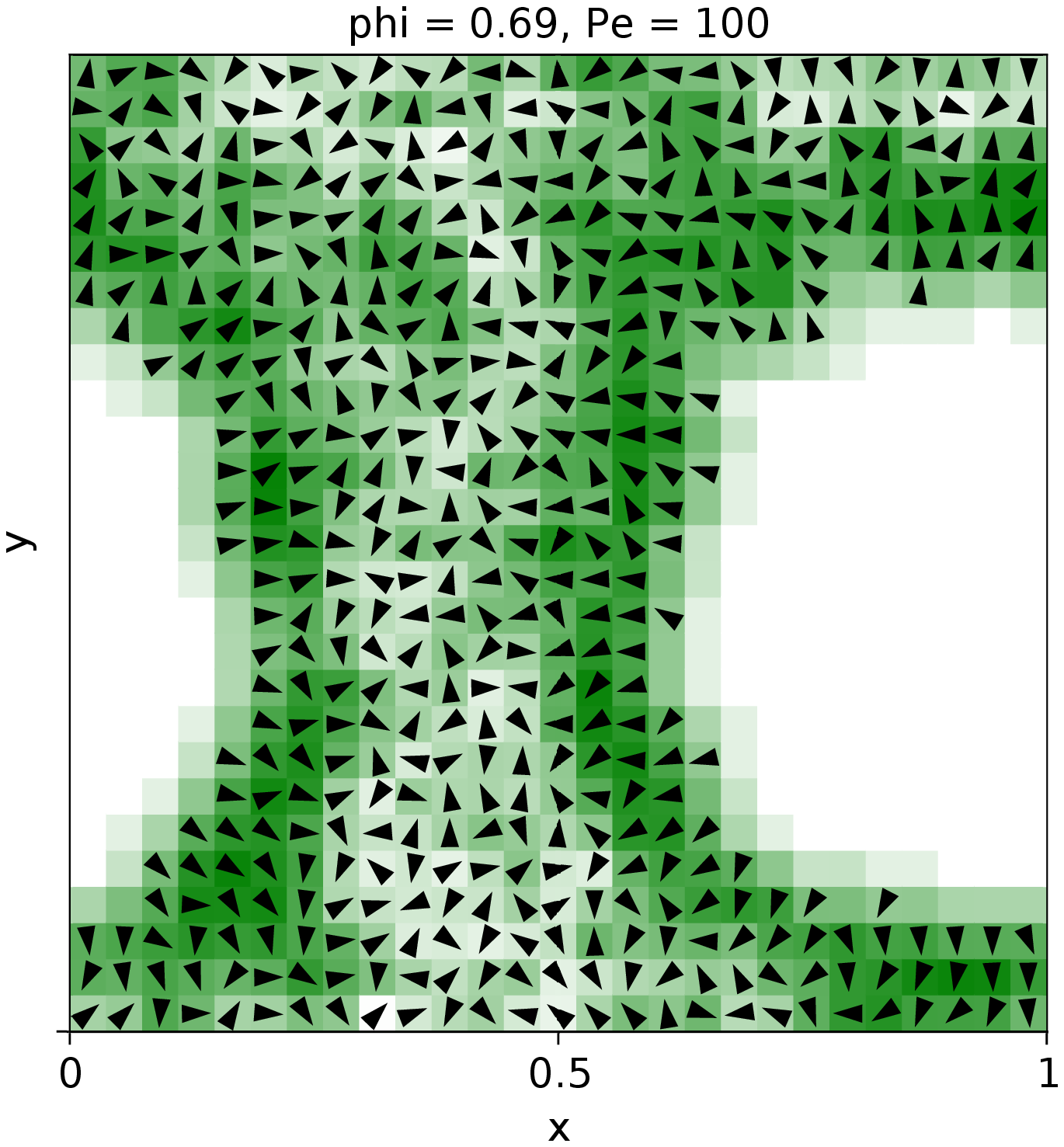}
\end{minipage}
\quad
\begin{minipage}[c]{0.475\linewidth}
\centering	
	\includegraphics[height = \textwidth]{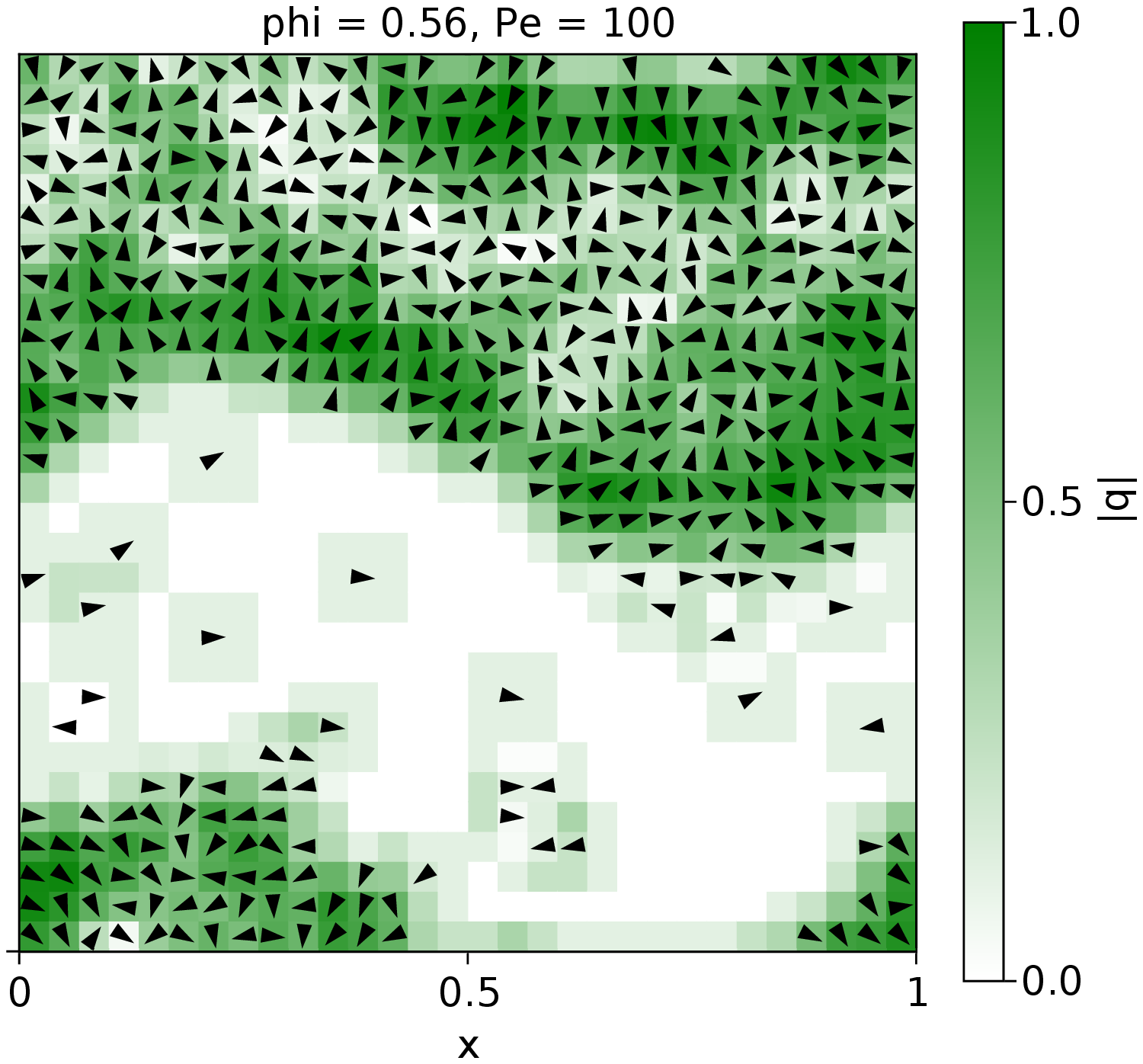}
\end{minipage}
	\caption{Snapshot of an output of Model 4 at $T = 1$ for $N = 500$ particles. (Left) $\phi = 0.69$ and $\Pe =  100$. (Right) $\phi = 0.56$ and $\Pe = 100$. Each triangle represents one particle, pointing in the direction of its orientation $\Theta_i$. The heatmap corresponds to the absolute value of mean-direction $\q$ in \cref{mean_direction}, computed using a Moore neighbourhood at each lattice point.}
 \label{fig:model4_ABM}
\end{figure}

In order to quantify the presence of MIPS in the stochastic model, we consider $P_9(\phi,\Pe)$ as the proportion of sites in the lattice whose Moore neighborhood (the central cell plus the eight cells that surround it) is full.\footnote{We note that in a finite lattice, $P_9$ also depends on the number of particles $N$, which we keep fixed at $N=500$ in all our simulations.} If the distribution of particles on the lattice is assumed to be uniform, which is the case for the unbiased symmetric simple exclusion process corresponding to $\Pe = 0$, the expected proportion $P_9(\phi, 0):= P_{9,unif}$ can be computed exactly (see \cref{eq:p9unif} and the black dashed line in \cref{fig:model4_ABM_metrics}(left)). However, we expect that for $\phi$ large enough and as $\Pe$ increases, the uniform distribution is lost in favor of clusters, leading to a sharp increase in $P_9$.
 The coloured lines in \cref{fig:model4_ABM_metrics}(left) correspond to $P_9$ measured at $T=1$ for various values of $(\phi, \Pe)$. The point of departure of each coloured line from the black line (corresponding to $P_9$ in the uniform non-segregated case) identifies the critical $\phi$ for that given $\Pe$ above which MIPS occurs. The value of $P_9$ at $T=1 $ for multiple points in space of parameters $(\phi, \Pe)$ is shown in \cref{fig:model4_ABM_metrics}(right), together with the boundaries between the three regions of stability of the PDE model \cref{model4_c} as shown in \cref{fig:fulldispersion}. We observe a good qualitative agreement between the two. The analogous plot using the difference $\Delta P_9 := P_9 - P_{9,unif}$ are shown in \cref{fig:model4_ABM_metrics_sup}. A similar cluster fraction measure was employed in \cite{Redner.2013} to establish phase separation in a model of soft Brownian active particles \cref{sde_all} with a WCA potential, resulting in a very similar phase diagram (see Fig. 3 in \cite{Redner.2013}).

\begin{figure}
\begin{center}
\begin{minipage}[c]{0.4\linewidth}
\centering
\includegraphics[width = \textwidth]{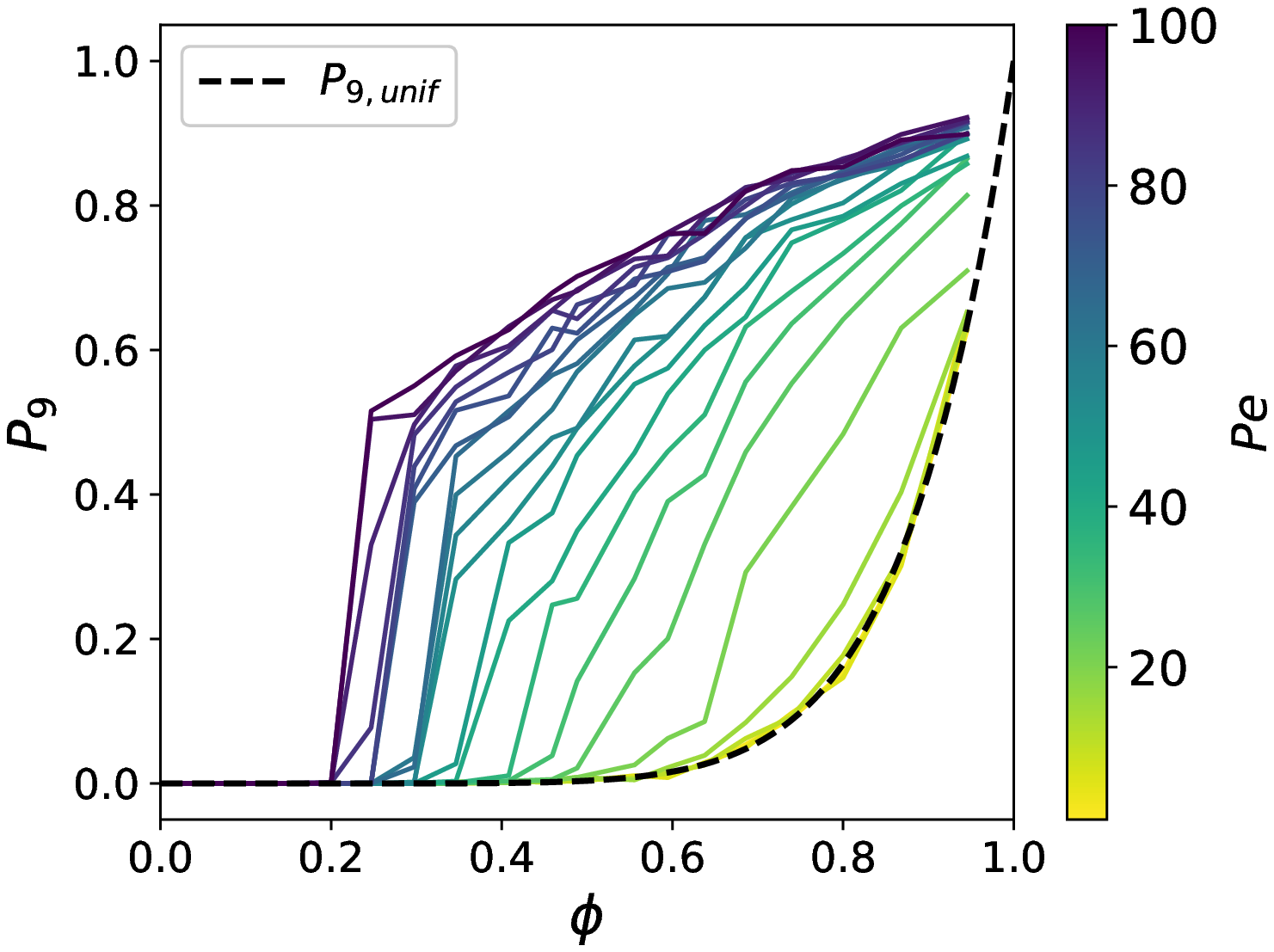}
\end{minipage}
\begin{minipage}[c]{0.4\linewidth}
\centering
\includegraphics[width =\textwidth]{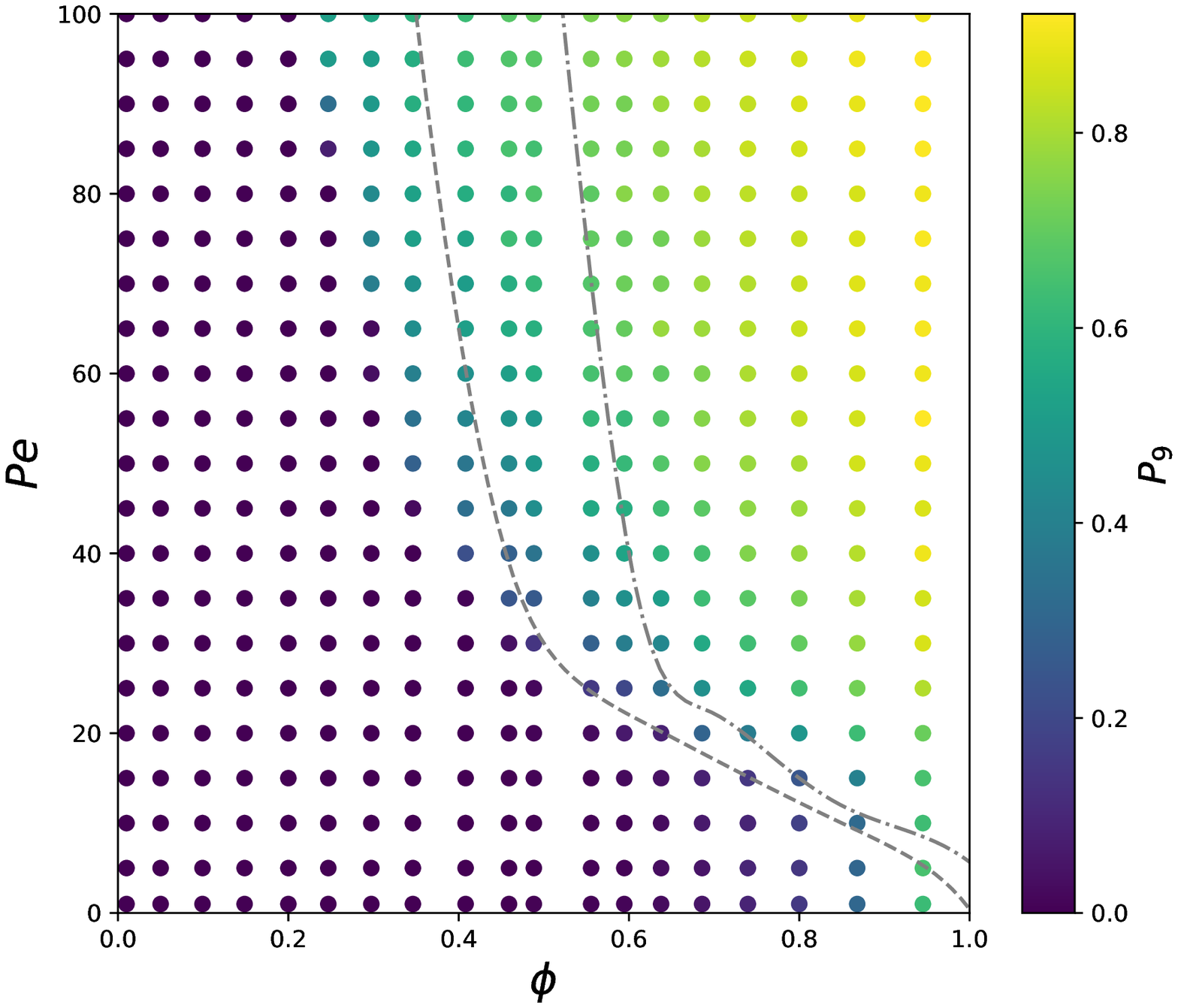}
\end{minipage}
\end{center}
\caption{Cluster fraction $P_9$ in the stochastic Model 4 for varying values of $\phi$ and $\Pe$. Left: curves $P_9(\phi)$ for various $\Pe$ computed from simulations of Model 4. The black dashed line shows $P_9(\phi)$ if particles are uniformly distributed in the lattice. Right: colormap of $P_9$. The two grey lines correspond to the boundaries between linear and Lyapunov stability and Lyapunov stability and instability from the PDE model \cref{model4_c} (as shown in \cref{fig:fulldispersion}).}
 \label{fig:model4_ABM_metrics}
\end{figure}

\section{Discussion} \label{sec:conclusion}

We have considered various macroscopic models for systems of active Brownian particles, which describe a class of self-propelled particles whereby the swimming orientation $\theta$ is governed by a Brownian motion. Despite being arguably the simplest model for active particles, the addition of purely repulsive interactions between particles can lead to striking phase separation, known as motility-induced phase separation (MIPS) \cite{Cates:2014tr}. MIPS is caused by the interplay between repulsive interactions (which are used to model size exclusion) and the active self-propulsion speed, leading to segregation into a dilute phase (where particles can swim almost freely at their desired speed) and a dense phase (where the effective speed is significantly reduced due to crowding). To investigate the nature and the strength of the interactions required to lead to MIPS, in this paper, we obtained four different macroscopic models for the density $f(\x, \theta, t)$ depending only on two non-dimensional parameters $\phi$ (the occupied volume fraction) and $\Pe$ (the P\'eclet number). These differ from slight variations of the underlying microscopic particle-based model and different coarse-graining procedures:
\begin{itemize}
	\item Model 1 \cref{model1_c}: active Brownian particles with long-range repulsive interactions. The associated PDE is obtained via a mean-field approximation.
	\item Model 2 \cref{model2_c}: active Brownian particles with short-range soft repulsive interactions. A PDE is obtained phenomenologically under close-to-equilibrium and homogeneous assumptions \cite{Bialke:2013gw}. 
	\item Model 3 \cref{model3_c}: active Brownian particles with hard-core repulsive interactions. The PDE is obtained via the method of matched asymptotic expansions in the limit of small $\phi$.
	\item Model 4 \cref{model4_c}: active asymmetric simple exclusion process (active Brownian model with the positions constrained to a lattice). The associated PDE is obtained using a mean-field approximation.
\end{itemize}
Having the four models under the same framework, we have then considered a stability analysis to establish the region in the parameter space $(\phi, \Pe)$ in which the homogeneous state $f =$ constant is unstable. We have used numerical simulations of the macroscopic models and the microscopic models to show the patterns emerging in such cases. 

We have found that the homogeneous state is always stable in Model 1, regardless of the values of $\phi$ and $\Pe$, implying that such a model cannot capture MIPS. This is to be expected that MIPS requires local interactions, whereas \cref{model1_c} has nonlocal interactions in the convolution term, and these do not change the advection speed, which remains equal to the free speed $\Pe$. In contrast, the remaining three models \cref{model2_c,model3_c,model4_c} all display MIPS for $\phi$ and $\Pe$ large enough, with qualitatively a similar phase transition boundary (see \cref{fig:dispersion}). The reason for this is an advection term of the form $\Pe \nabla \cdot [f(1-\rho) \e(\theta) ]$, where $\rho(\x,t) = \int f \ud \theta$ is the space density with mass $\phi\in [0,1)$: in dense regions, $\rho$ is large and the effective swim speed is reduced. The systematic derivation of Model 3, whose underlying microscopic model is equivalent to that of Model 2, results in additional terms missed in \cite{Bialke:2013gw, Speck:2015um}, namely nonlinear cross-diffusion terms $\rho \nabla f, f \nabla \rho$. A similar structure (but with different coefficients) is found in Model 4, corresponding to a microscopic lattice-based model. It is quite remarkable that, although the derivation of Model 3 relies on an asymptotic expansion for a small volume fraction $\phi$, the resulting PDE Model 3 can capture the phase transition boundary with values of $\phi$ above 0.5 (see \cref{fig:dispersion}). The presence of these cross-diffusion terms in Models 3 and 4, combined with the nonlinear advection term, makes their rigorous analysis very challenging. As a first step towards this goal, we have considered the well-posedness of Model 2 \cite{BruBurEspSch-an-phen21}. 

We have studied the presence of MIPS in Models 2 to 4 through an analytical linear stability analysis of the associated symmetric differential operators, as well as numerical simulations of the PDE models. In the latter, starting from a perturbation around the homogeneous state as the initial condition, we have performed relatively short time-dependent simulations to determine either the decay to the homogeneous state or the growth and transition to some other patterned state. A natural question is to ask about the nature of such patterns, that is, whether they are stable or metastable, and how many of such patterns exist. Another interesting avenue would be to study the effective collective speed of the dense phase. These problems seem highly related to the analysis in \cite{burger2008asymptotic}, where metastable patterns of a particle system under an attractive force have been analyzed asymptotically, with a similar locking-in the phenomenon of the metastable clusters. Let us also mention
\cite{dolak2005keller}, closely related to the one-dimensional crowded Goldstein--Taylor model. It thus seems reasonable to work out similar results in the asymptotic of small diffusion coefficients, additional complications being the relative scaling of spatial and angular diffusion as well as the periodic boundary conditions, which make it difficult to understand the density value inside the clusters and allow a cluster drift (while in the case of no-flux conditions as considered in  \cite{burger2008asymptotic} the clusters are always fully packed).

\section*{Acknowledgments}
This work was carried out while AE and SMS were postdoctoral researchers at FAU Erlangen-N\"{u}rnberg and the University of Cambridge, respectively. The authors would like to thank Dr. Ulrich Dobramysl (University of Oxford), Prof. Markus Schmidtchen (University of Dresden), and Prof. Joan Sol\`a-Morales (Universitat Polit\`ecnica de Catalunya) for helpful discussions. 

\bibliographystyle{siamplain}
\bibliography{active.bib}

\appendix

\section{Phenomenological models for active Brownian particles} \label{sec:model2_SM}

This section follows loosely the derivation of the macroscopic model in \cite{Bialke:2013gw, Speck:2015um}, while aiming to make explicit their assumptions and use consistent notation with the other models presented in \cref{sec:derivation}. Consider the same scenario as in \S \ref{sec:model1} but with short-range and strong repulsive interactions, namely, $\chi = 1$ and $\ell = \epsilon \ll 1$. The starting point is then \cref{1_eq}-\cref{interaction_G}, which under this subsection's assumptions read:
\begin{align} \label{1_eq_2}
\begin{aligned}
	\partial_t f(\xi_1, t) &= \nabla_{\x_1} \cdot \left[ D_T \nabla_{\x_1} f -v_0 \e(\theta_1) f + {\bf G}(\xi_1,t) \right] + D_R \partial_{\theta_1 \theta_1} f,\\
	{\bf G}(\xi_1,t) &= (N-1) \int_{\Upsilon} F_2(\xi_1,\xi_2,t) \nabla_{\x_1} u(\|\x_1-\x_2\|/\epsilon) \ud \xi_2  \\
	&= (N-1) \int_{\Omega} f_2(\xi_1,\x_2,t) \nabla_{\x_1} u(\|\x_1-\x_2\|/\epsilon) \ud \x_2,
	\end{aligned}
\end{align}
where $f_2(\xi_1,\x_2,t) := \int F_2 ~\ud \theta_2$ is the two-body probability density to find another particle at $\x_2$ (with arbitrary orientation) together with the tagged particle at $\x_1$ with orientation $\theta_1$.

In \cite{Bialke:2013gw, Speck:2015um}, the authors introduce a new coordinate system for $\x_2$ relative to $\x_1$ and $\theta_1$. In particular, let us define $\tr$ and $\tt$ such that
$
\x_2 = \x_1 +  \tr \e(\theta_1 - \tt),
$
so that $\tr$ is the distance between the particles' centres, and $\tt$ is the angle between the orientation of the first particle and the line $\x_2-\x_1$ (see Fig.~1 in \cite{Speck:2015um}; note a slight change of notation). Then they consider a decomposition of $f_2$ involving the pair correlation function $g$, which they take to be independent of $\xi_1$ and $t$ (homogeneous and stationary suspension), see \cite[Eq. $(9)$]{Bialke:2013gw}:
\begin{equation}
\label{P2_decomp}
f_2(\xi_1,\tr,\tt, t) = f_1(\xi_1,t) \rho(\x_1 + \tr \e(\theta_1 - \tt),t) g(\epsilon \tr,  \tt).
\end{equation}
With the change of variables, we have that 
$$\nabla_{\x_1} u(\|\x_1-\x_2\|/\epsilon) = -\nabla_{\tx} u(\|\tx\|/\epsilon) = -  u'(\tr/\epsilon) \e(\theta_1 - \tt).
$$
With this, the interaction term in \cref{1_eq_2} becomes 
\begin{equation}
	{\bf G}(\xi_1,t) = (N-1) f(\xi_1,t) \int_0^\infty \int_0^{2\pi} \rho(\x_1 +  \tr \e(\theta_1 - \tt),t) g(\tr, \tt) u'(\tr/\epsilon) \e(\theta_1 - \tt) \tr \ud \tr \ud \tt,
\end{equation}
The range of integration for $\tr$ is taken to $\infty$ assuming $u(r)$ decays fast enough and $\epsilon \ll 1$. 
They consider a decomposition of the force along $\e(\theta_1)$ and its perpendicular direction,
$\mathbf{G}=G_\e \e(\theta_1) +\delta \mathbf{G}$,
and a balancing of the transport and the interaction terms in the equation for $f$. In particular, inserting the decomposition of  $\mathbf{G}$ into \eqref{1_eq_2}, one obtains
\begin{equation} \label{eqf2_G}
\partial_t f(\xi_1, t) = \nabla_{\x_1} \cdot \left[ D_T \nabla_{\x_1} f + (G_\e -v_0 f)\e(\theta_1) + \delta \mathbf{G} \right] + D_R \partial_{\theta_1 \theta_1} f.
\end{equation}
In \cite{Bialke:2013gw, Speck:2015um} it is assumed that $D_T \sim \|\delta {\bf G}\| \ll G_\e \sim v_0$, such that, at leading order, we have
$\nabla_\x \cdot (\e (\theta_1) f)=0$, and therefore $\e(\theta) \cdot \nabla_\x f = 0$. Hence, in this regime, $\delta { \bf G} \sim G_{\|} \nabla_\x f$ at leading order. At leading order in $\epsilon$, the coefficients $G_\e$ and $G_{\|}$ of the $\bf G$ decomposition are, 
\begin{multline}
G_\e  =(N \! -\! 1) f(\xi_1,t)  \iint \! \rho(\x_1 + \tr \e(\theta_1 - \tt),t) g(\tr, \tt) u'(\tr/\epsilon) \cos(\tt) \tr \, \ud \tr \ud \tt \label{Ge} \\
 \approx (N\!-\!1) f(\xi_1,t) \rho(\x_1,t) \iint \!g(\tr, \tt)  u'(\tr/\epsilon) \cos(\tt) \tr \, \ud \tr \ud \tt =  (N\!-\!1) \zeta f(\xi_1,t) \rho(\x_1,t), 
\end{multline}
using that $\e(\theta_1 - \tt) \cdot \e(\theta_1) = \cos \tt$ and $\zeta := \int_0^\infty \int_0^{2\pi} g(\tr, \tt) u'(\tr/\epsilon) \cos(\tt) \tr \, \ud \tr \ud \tt$, and
\begin{equation}\label{G_par}
	G_{\|} = \frac{[\nabla_\x f - (\e \cdot \nabla f_x) \e] \cdot {\bf G}}{|\nabla f|^2}.
\end{equation}
Finally, inserting \cref{Ge} and \cref{G_par} into \cref{eqf2_G} one obtains \cref{mod2}.

\subsection{The crowded Goldstein--Taylor model}
\label{sec:GTmodel}
Here we show that the one-dimensional version of \cref{mod2} coincides with a crowded version of the Goldstein--Taylor model \cite{goldstein1951diffusion,taylor1922diffusion}. In particular, denote by $\fL(x,t)$ and $\fR(x,t)$ the densities of the left- and right-moving particles, respectively. The starting point is again the $N-$particle Fokker--Planck equation \cref{N_eq} with the only difference that the continuous diffusion in angle (modelled by the $D_R$ term) becomes a discrete jump between the two only possible orientations, corresponding to angles $\theta_i = 0, \pi$. Specifically, it is replaced by $k \sum_{\vec \xi'} F_N(\vec \xi',t) - kN F_N(\vec \xi,t)$, where the sum in $\vec \xi'$ is over configuration that are one jump in angle away from $\vec \xi$.
The right-moving density is obtained as
$$
\fR(x, t) = \int_{\Upsilon^N} F_N(\vec \xi, t) \delta(x - x_1) \delta(\theta_1 - 0) \ud \vec \xi,
$$
and similarly for $\fL$. The equation for $\fR$ is found by integrating that of $F_N$, leading to a slight modification of \cref{1_eq_2}, namely
\begin{equation} \label{eq_2_1d}
\begin{aligned}
		\partial_t \fR(x_1, t) &= \partial_{x_1}  \left[ D_T \partial_{x_1} \fR -v_0  \fR +  G_R(x_1,t) \right] + k (\fL - \fR),\\
	G_R(x_1,t)	&= (N-1) \int_{\Omega} \left[F_{RR}(x_1, x_2, t) + F_{RL}(x_1,x_2,t) \right] \partial_{x_1} u(|x_1-x_2|/\epsilon) \ud x_2,
	\end{aligned}
\end{equation}
where $F_{RR}$ and $F_{RL}$ are the two-particle densities corresponding to having the first particle moving right, and the second particle moving right and left, respectively. Following a similar derivation as for the two-dimensional model above, we introduce $\tx$ such that
$
x_2 = x_1 + \tilde x,
$
and a decomposition
$$
F_{RR}(x_1, x_2, t) + F_{RL}(x_1,x_2,t) \approx \fR(x_1,t) \rho(x_2,t) g(x_2-x_1) = \fR(x_1,t) \rho(x_1 + \tilde x, t) g(\tilde x). 
$$
Substituting this into $G_R$ and expanding in $\epsilon$, we obtain, at leading order,
$$
G_R(x_1,t) = (N-1) \fR(x_1,t)\rho(x_1,t) \int_\Omega [-u'(|\tilde x|/\epsilon)] g( \tilde x) \ud \tilde x \approx v_0\phi f_R(x_1,t) \rho(x_1,t),
$$
where $\phi = N \epsilon$, positive using that $u'(r)\le 0$ (repulsive force) and $g(r) \ge 0$. Repeating the same when the first particle is left-moving (in that case, the change of variables is $x_2 = x_1 -  \tilde x$), we obtain
$$
G_L(x_1,t) = - (N-1) \fL(x_1,t)\rho(x_1,t) \int_\Omega [-u'(|\tilde x|/\epsilon)] g(\tilde x) \ud \tilde x \approx -v_0\phi \fL(x_1,t) \rho(x_1,t).
$$ 
This leads to the system of equations
\begin{align} \label{GT_model_SM}
\begin{aligned}
\partial_t \fR + v_0\partial_x [\fR (1-\phi \rho)] &= D_T\partial_{xx} \fR + k (\fL - \fR), \\
\partial_t \fL - v_0\partial_x [\fL (1-\phi \rho)] &= D_T\partial_{xx} \fL + k (\fR - \fL).
\end{aligned}
\end{align}

\subsection{Linear stability of the crowded Goldstein--Taylor model}
\label{sec:1dinstability}
We consider the one-dimensional model \eqref{GT_model_SM} after rescaling as described in \cref{sec:rescale} (the same rescaling as in the two-dimensional cases applies by replacing $D_R$ by $k$)
\begin{equation}
\label{model21d_c} 	
\partial_t f + \Pe \partial_x[  f (1- \rho) e_\theta] =  \partial_x^2 f - e_\theta p.
\end{equation}
We linearise \cref{model21d_c} around its homogeneous state, $f = \phi/2 + \delta \tilde f, \rho = \phi + \delta \tilde \rho, p = \delta \tilde p$. Keeping terms to order $\delta$ leads to
\begin{equation} \label{eig_1dmod2}
	\partial_t \tilde f = \partial_{xx} \tilde f - e_\theta \tilde p - \Pe \partial_x [(1-\phi) \tilde f e_\theta - (\phi/2) \tilde \rho e_\theta]:= L (\tilde f). 
\end{equation}
For perturbations of the form $\tilde \fR  = \sum_{n\ge 1} \alpha_n e^{\lambda t + i 2\pi n x}$ and $\tilde \fL = \sum_{n\ge 1} \beta_n e^{\lambda t + i 2\pi n x}$, we obtain the following eigenvalue problem
\begin{align} \label{GT_eig}
\begin{aligned}
\lambda \alpha_n & =- \alpha_n (2\pi n)^2  - (\alpha_n - \beta_n) - \Pe (i 2 \pi n) \left[ \left(1- \frac{3\phi}{2} \right) \alpha_n - \frac{\phi}{2} \beta_n \right] , \\
\lambda \beta_n & = - \beta_n (2\pi n)^2   + (\alpha_n - \beta_n) - \Pe (i 2 \pi n) \left[ \left(\frac{3\phi}{2} -1 \right) \beta_n + \frac{\phi}{2} \alpha_n \right].
\end{aligned}
\end{align}
Writing $\gamma_n = \beta_n/\alpha_n$, we arrive at:
\begin{align} \label{GT_fourier2}
\begin{aligned}
\lambda & =-  (2\pi n)^2  - 1 + \gamma_n - 2 \pi  n \Pe i \left( 1- \frac{3\phi}{2} - \frac{\phi}{2} \gamma_n \right) , \\
\lambda \gamma_n & = - \gamma_n (2\pi n)^2   + 1 - \gamma_n - 2 \pi n \Pe i \left[ \left(\frac{3\phi}{2} -1 \right) \gamma_n + \frac{\phi}{2} \right].
\end{aligned}
\end{align}
The solution of \cref{GT_fourier2} is given by
\begin{equation}
\lambda = 	-1 - (2\pi n)^2 \pm \sqrt{1 + (2\pi n  \Pe)^2 r(\phi)}, \qquad r(\phi) = (1-\phi)(2 \phi-1).
\end{equation}
We have the $r(\phi)<0$ for $\phi<1/2$, so $\text{Re}(\lambda)<0$ if $\phi<1/2$, and that $r(\phi)>0$ for $\phi \in (1/2,1)$ with a maximum at $\phi = 3/4$. 
Imposing $\text{Re}(\lambda) >0$ we arrive at the condition $\Pe^2 r(\phi) > 2 + (2\pi n)^2 \ge 2 + 4 \pi^2$.\footnote{We note that this is similar to Eq. (7) in \cite{kourbane2018exact}, with a slight difference in the lower bound due our choice of a periodic domain of length one.} 
We plot the curve $\Pe^2 r(\phi) = 2 + 4 \pi^2$ in  \cref{fig:dispersion1d} (solid blue curve), which separates the stable region (low mass and/or low speed, below the curve) and the unstable region (large mass and speed, above the curve). 
We note that $n=1$ corresponds to the most unstable mode and that $\phi = 3/4$ is the most unstable volume fraction, meaning that these values require the smallest speed $\Pe$ for perturbations to grow.
\def \scl {1.0}
\begin{figure}[htb]
\begin{center}
\psfrag{Pe}[][][\scl][-90]{$\Pe$}
\psfrag{phi}{$\phi$}
\psfrag{L}{$L$}
\psfrag{Ls}{$L^S$}
\includegraphics[width = 0.5\textwidth]{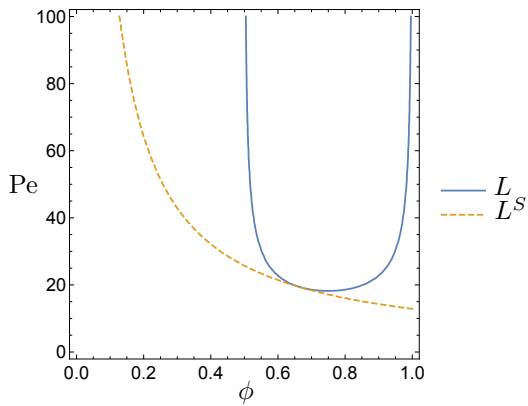}
\end{center}
  \caption{Curves for the onset of instability of the full operator $L$ and the symmetric part $L^S$ of the one-dimensional Model 2.}
 \label{fig:dispersion1d}
\end{figure}

Due to its simpler spectrum, we also consider the eigenvalue problem associated to the symmetric operator $L^S = (L + L^*)/2$, where $L$ is given in \cref{eig_1dmod2} and  $L^*$ is its adjoint. Using that 
$
L^*(\tilde f) = \partial_{xx} \tilde f - e_\theta \tilde p + \Pe \partial_x [(1-\phi) \tilde f e_\theta - (\phi/2) \tilde p],
$
we arrive at the symmetric operator
\begin{equation} \label{1dmod2_S}
L^S (\tilde f) = \partial_{xx} \tilde f - e_\theta \tilde p + \frac{\Pe \phi}{4} \partial_x [  \tilde \rho e_\theta - \tilde p].
\end{equation}
Solving the eigenvalue problem $\lambda \tilde f = L^S(\tilde f)$ we obtain
\begin{equation}
	\label{eig_1dsym}
	\lambda = -1 - (2 \pi n)^2 \pm \sqrt{1 + (n \pi \phi \Pe)^2}.
\end{equation}
Imposing that $\lambda > 0$ we obtain $(\phi \Pe)^2 > 8(1 + 2 (n \pi)^2)\ge 8 (1+ 2 \pi^2)$. We plot the curve $\Pe = \sqrt{8 (1+ 2 \pi^2)}/\phi$, separating the regions of stability instability (below and above the curve, respectively) in  \cref{fig:dispersion1d} (orange dash line) and \cref{fig:dispersion} (yellow short dash line). We observe that the symmetric operator curve is always below the full operator curve, meaning that the instability region of the full model is contained in the instability region of the symmetric part. This is to be expected, as per \cref{rem:antisymmetric}.

\section{Supplementary PDE simulations} \label{sec:A_PDEsims}
In this section we provide more examples of phase separation in Models 2, 3 and 4. As in \S\ref{sec:numerics}, we plot the space density $\rho$, the mean angular direction $\q = \p/\rho$ and the mobility in $x$ integrated over $\theta$, $m_{11} = \int \tilde M_{11} \ud \theta$. Unless otherwise stated, the simulations here use a grid with $N_x = N_y = N_\theta = 21$. The parameters used are shown in the corresponding caption. \cref{fig:model2_Tf=1.0_phi=0.6_v0=20.0,fig:model2_Tf=0.2_phi=0.6_v0=40.0,fig:model2_Tf=1.0_phi=0.6_v0=40.0,fig:model2_Tf=0.2_phi=0.6_v0=60.0} show examples of Model 2 \cref{model2_c}, Model 3 \cref{model3_c}, and Model 4 \cref{model4_c}.


\begin{figure}[htb]
\begin{center}
\includegraphics[width = \textwidth]{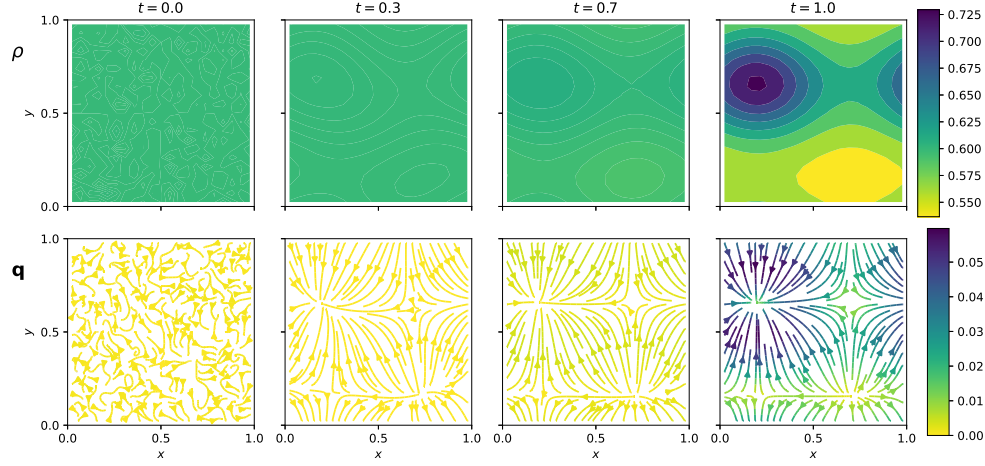}
\end{center}
  \caption{Time-evolution of Model 2 \eqref{model2_c} with $\phi = 0.6$, $\Pe = 20$ and final time $T = 1$.}
 \label{fig:model2_Tf=1.0_phi=0.6_v0=20.0}
\end{figure}

\begin{figure}[htb]
\begin{center}
\includegraphics[width = \textwidth]{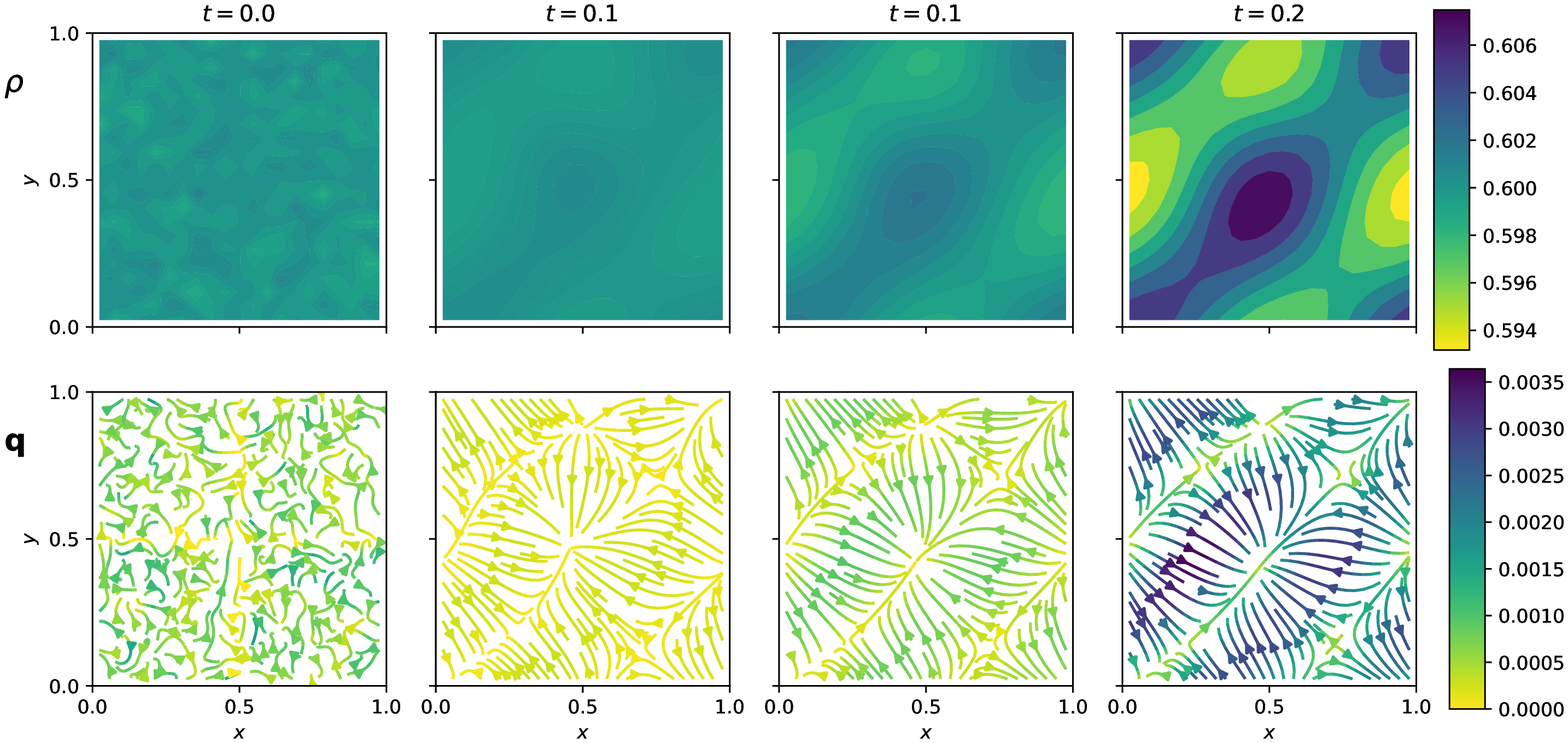}
\end{center}
  \caption{Time-evolution of Model 2 \eqref{model2_c} with $\phi = 0.6$, $\Pe = 40$ and final time $T = 0.2$.}
 \label{fig:model2_Tf=0.2_phi=0.6_v0=40.0}
\end{figure}

\begin{figure}[htb]
\begin{center}
\includegraphics[width = \textwidth]{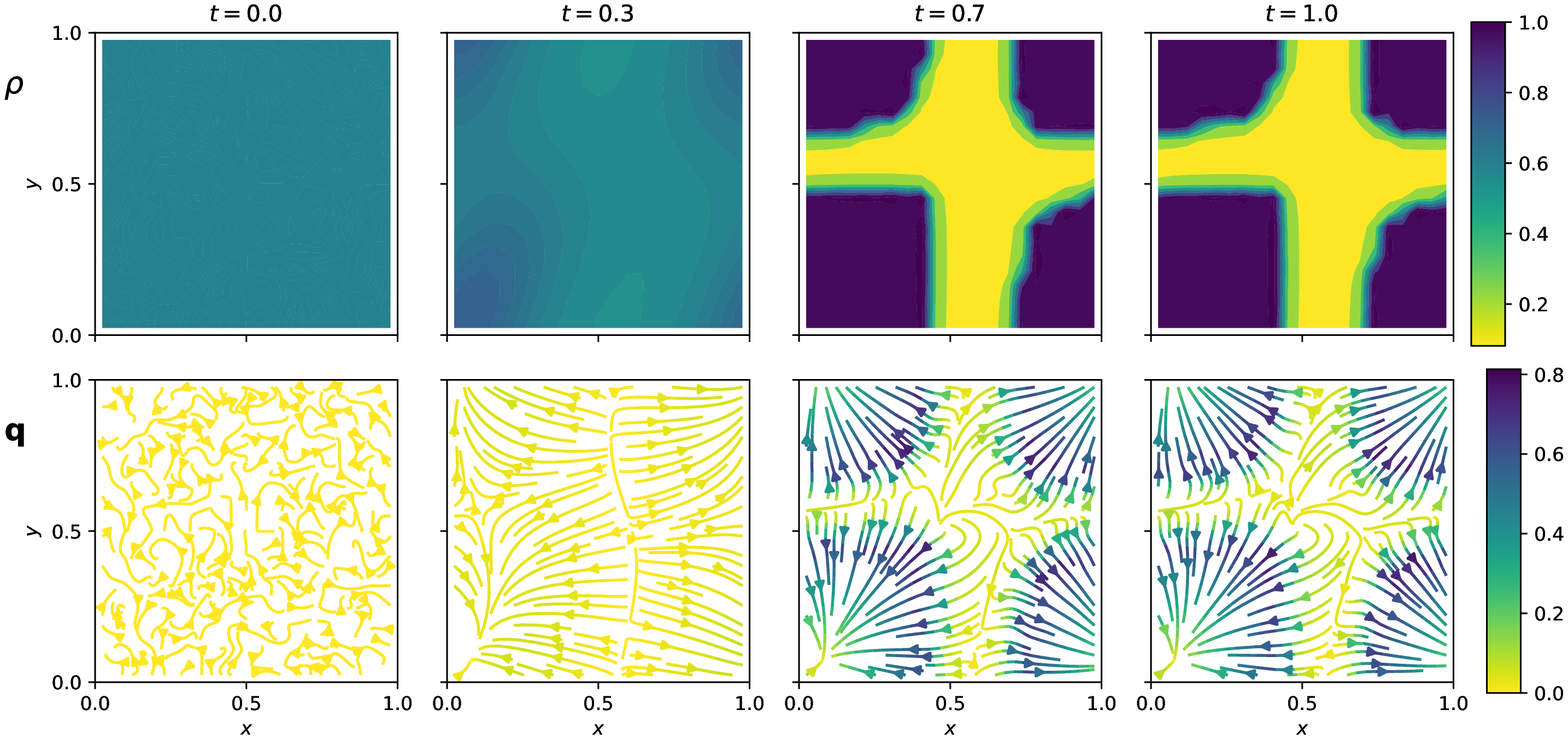}
\end{center}
  \caption{Time-evolution of Model 2 \eqref{model2_c} with $\phi = 0.6$, $\Pe = 40$ and final time $T = 1$.}
 \label{fig:model2_Tf=1.0_phi=0.6_v0=40.0}
\end{figure}

\begin{figure}[htb]
\begin{center}
\includegraphics[width = \textwidth]{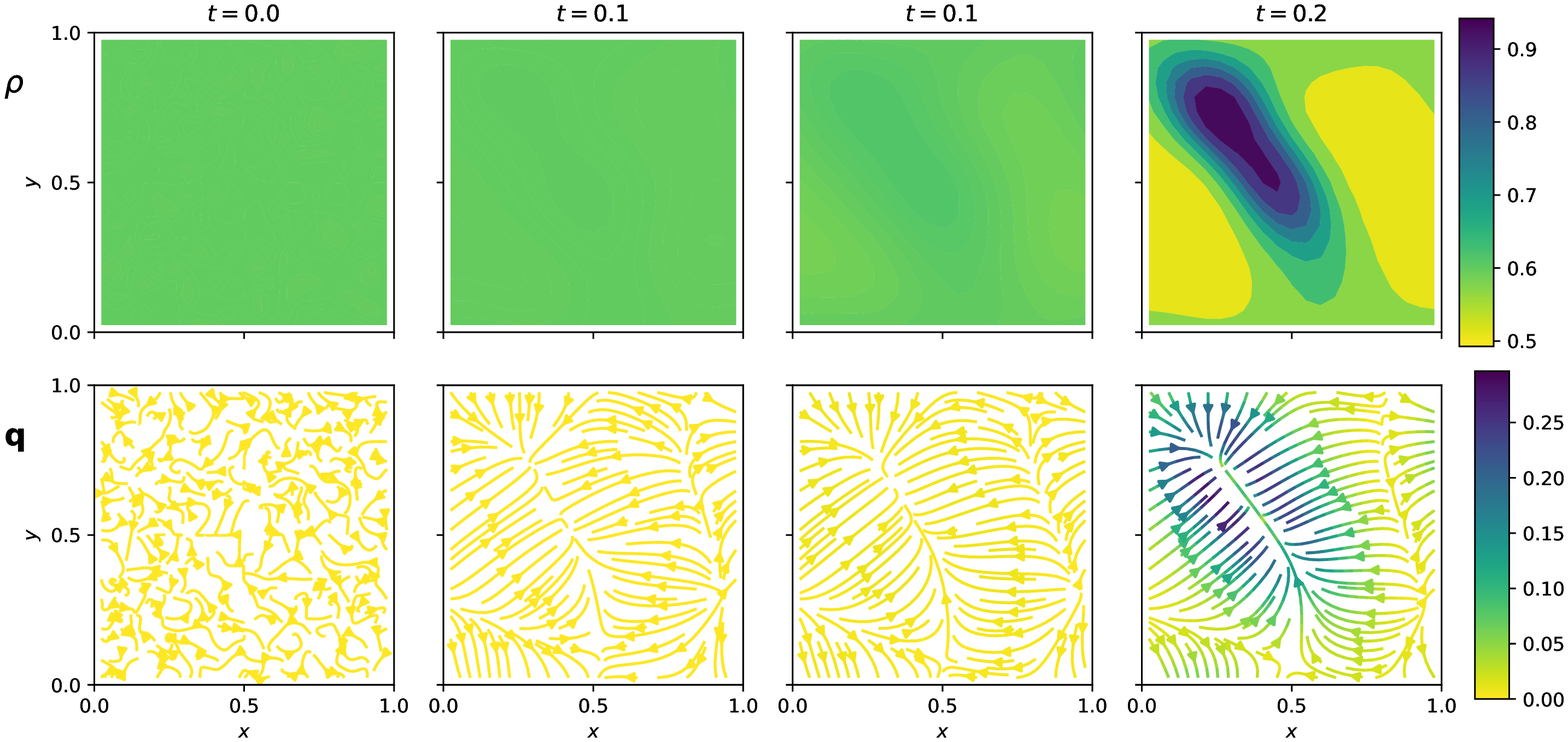}
\end{center}
  \caption{Time-evolution of Model 2 \eqref{model2_c} with $\phi = 0.6$, $\Pe = 60$ and final time $T = 0.2$.}
 \label{fig:model2_Tf=0.2_phi=0.6_v0=60.0}
\end{figure}


\begin{figure}[htb]
\begin{center}
\includegraphics[width = \textwidth]{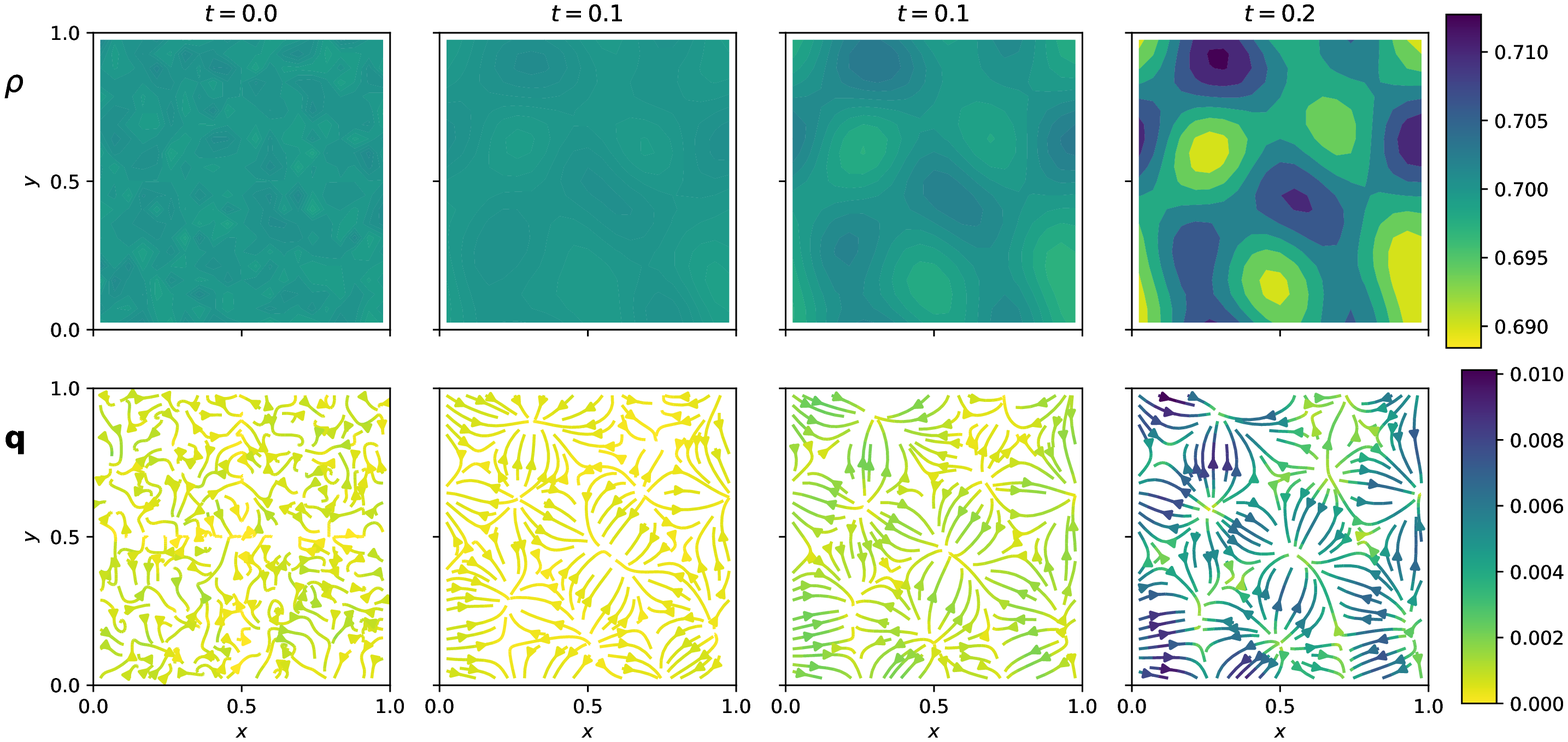}
\end{center}
  \caption{Time-evolution of Model 2 \eqref{model2_c} with $\phi = 0.7$, $\Pe = 20$ and final time $T = 0.2$.}
 \label{fig:model2_Tf=0.2_phi=0.7_v0=20.0}
\end{figure}

\begin{figure}[htb]
\begin{center}
\includegraphics[width = \textwidth]{IC=Pert_general_Model=model2_Nx=21_Tf=1.0_phi=0.7_v0=20.0_d=0.01.eps}
\end{center}
  \caption{Time-evolution of Model 2 \eqref{model2_c} with $\phi = 0.7$, $\Pe = 20$ and final time $T = 1$.}
 \label{fig:model2_Tf=1.0_phi=0.7_v0=20.0}
\end{figure}



\begin{figure}[htb]
\begin{center}
\includegraphics[width = \textwidth]{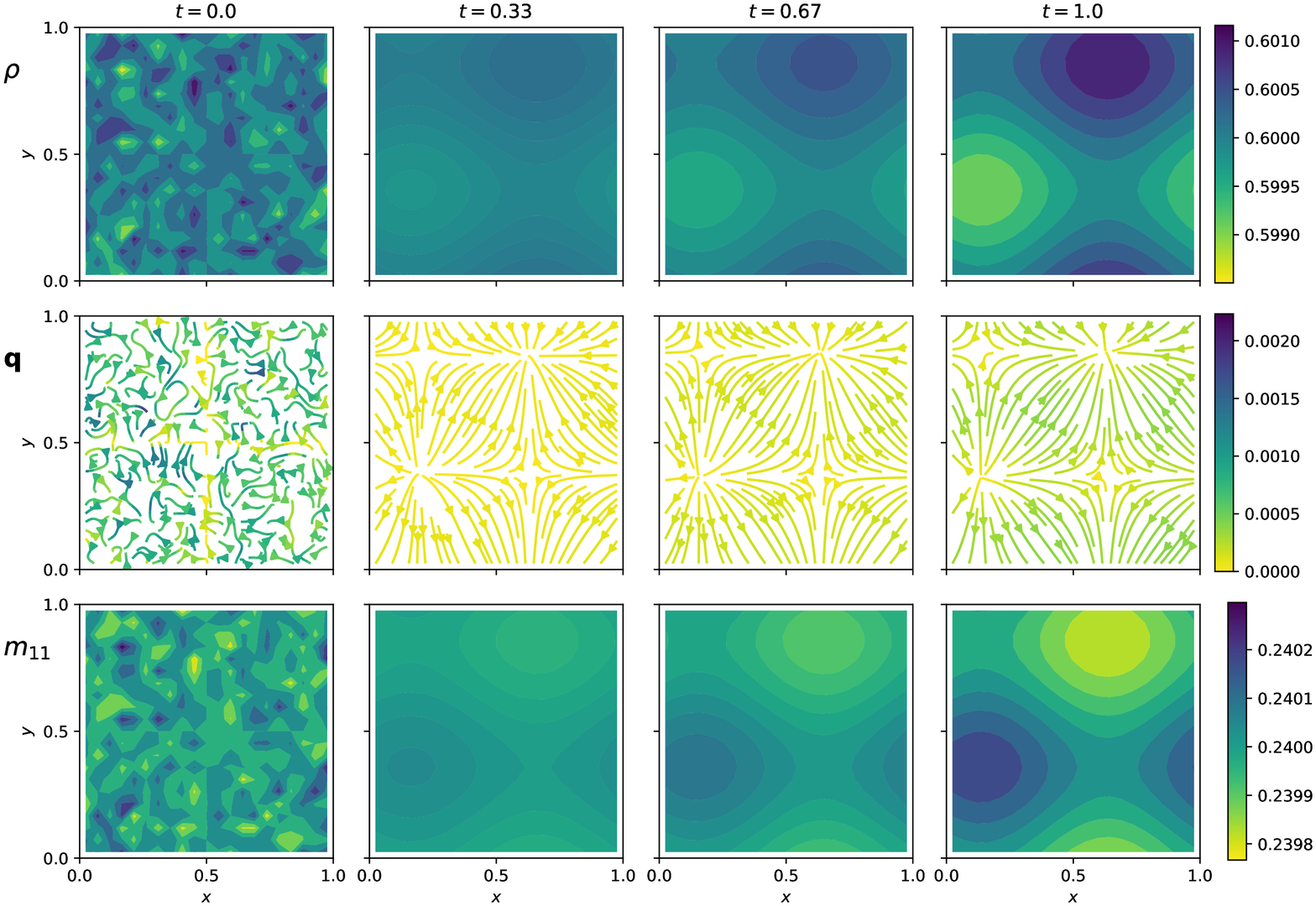}
\end{center}
  \caption{Time-evolution of Model 3 \eqref{model3_c} with $\phi = 0.6$, $\Pe = 40$ and final time $T = 1$.}
 \label{fig:model3_Tf=1.0_phi=0.6_v0=40.0}
\end{figure}

\begin{figure}[htb]
\begin{center}
\includegraphics[width = \textwidth]{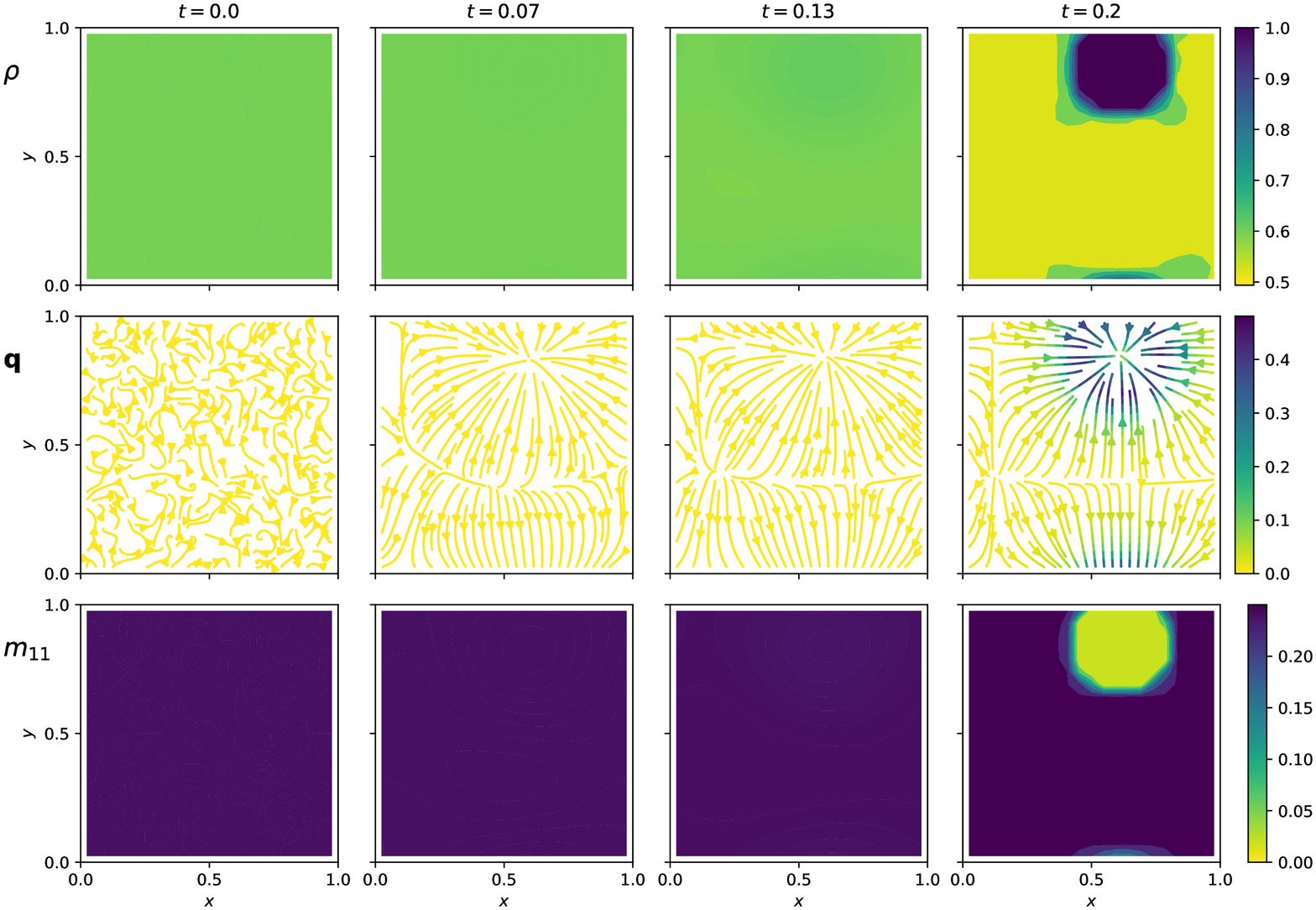}
\end{center}
  \caption{Time-evolution of Model 3 \eqref{model3_c} with $\phi = 0.6$, $\Pe = 60$ and final time $T = 0.2$.}
 \label{fig:model3_Tf=0.2_phi=0.6_v0=60.0}
\end{figure}

\begin{figure}[htb]
\begin{center}
\includegraphics[width = \textwidth]{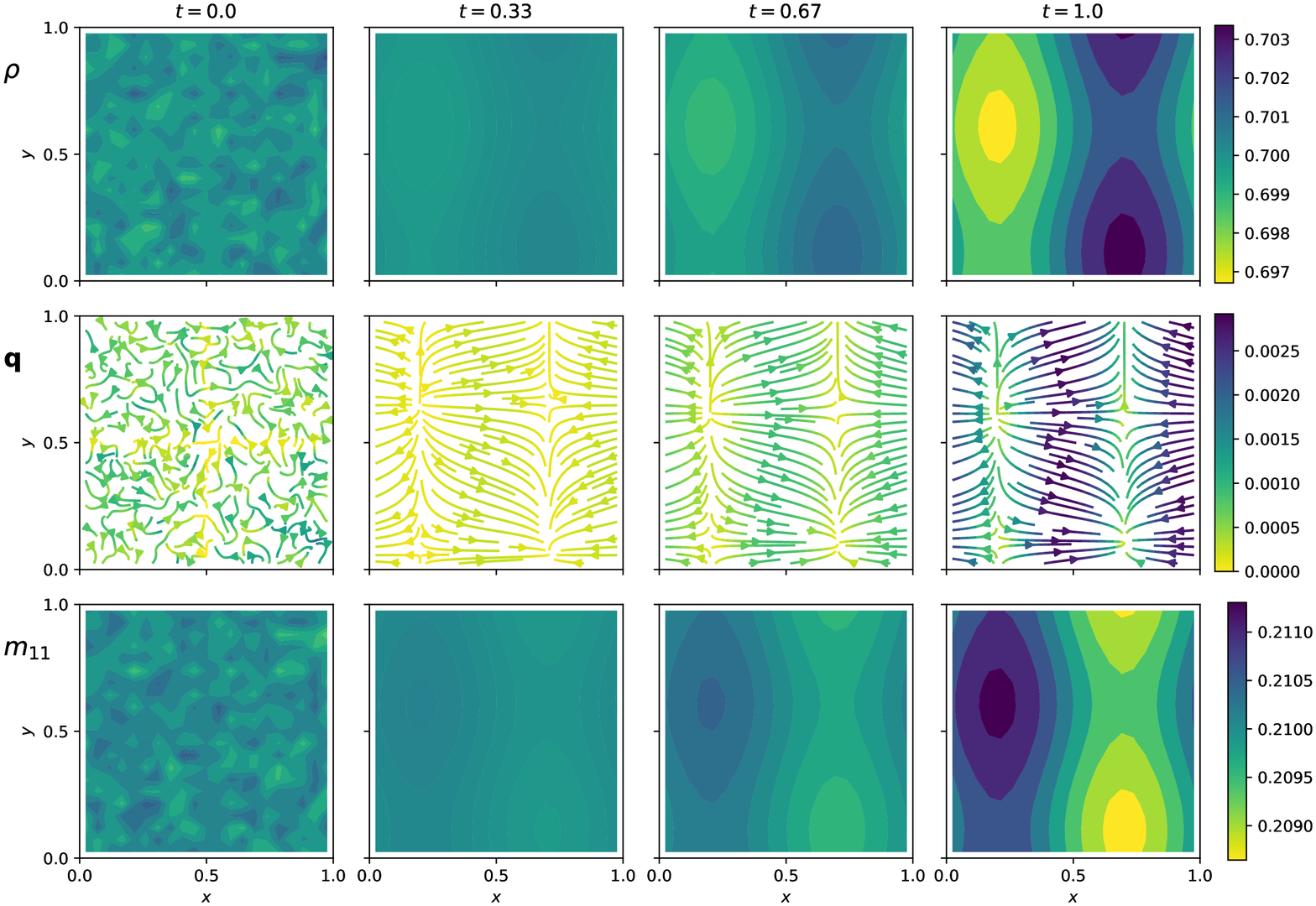}
\end{center}
  \caption{Time-evolution of Model 3 \eqref{model3_c} with $\phi = 0.7$, $\Pe = 20$ and final time $T = 1$.}
 \label{fig:model3_Tf=1.0_phi=0.7_v0=20.0}
\end{figure}

\begin{figure}[htb]
\begin{center}
\includegraphics[width = \textwidth]{IC=Pert_general_Model=model3_Nx=21_Tf=0.2_phi=0.7_v0=40.0_d=0.01.eps}
\end{center}
  \caption{Time-evolution of Model 3 \eqref{model3_c} with $\phi = 0.7$, $\Pe = 40$ and final time $T = 0.2$.}
 \label{fig:model3_Tf=0.2_phi=0.7_v0=40.0}
\end{figure}

\begin{figure}[htb]
\begin{center}
\includegraphics[width = \textwidth]{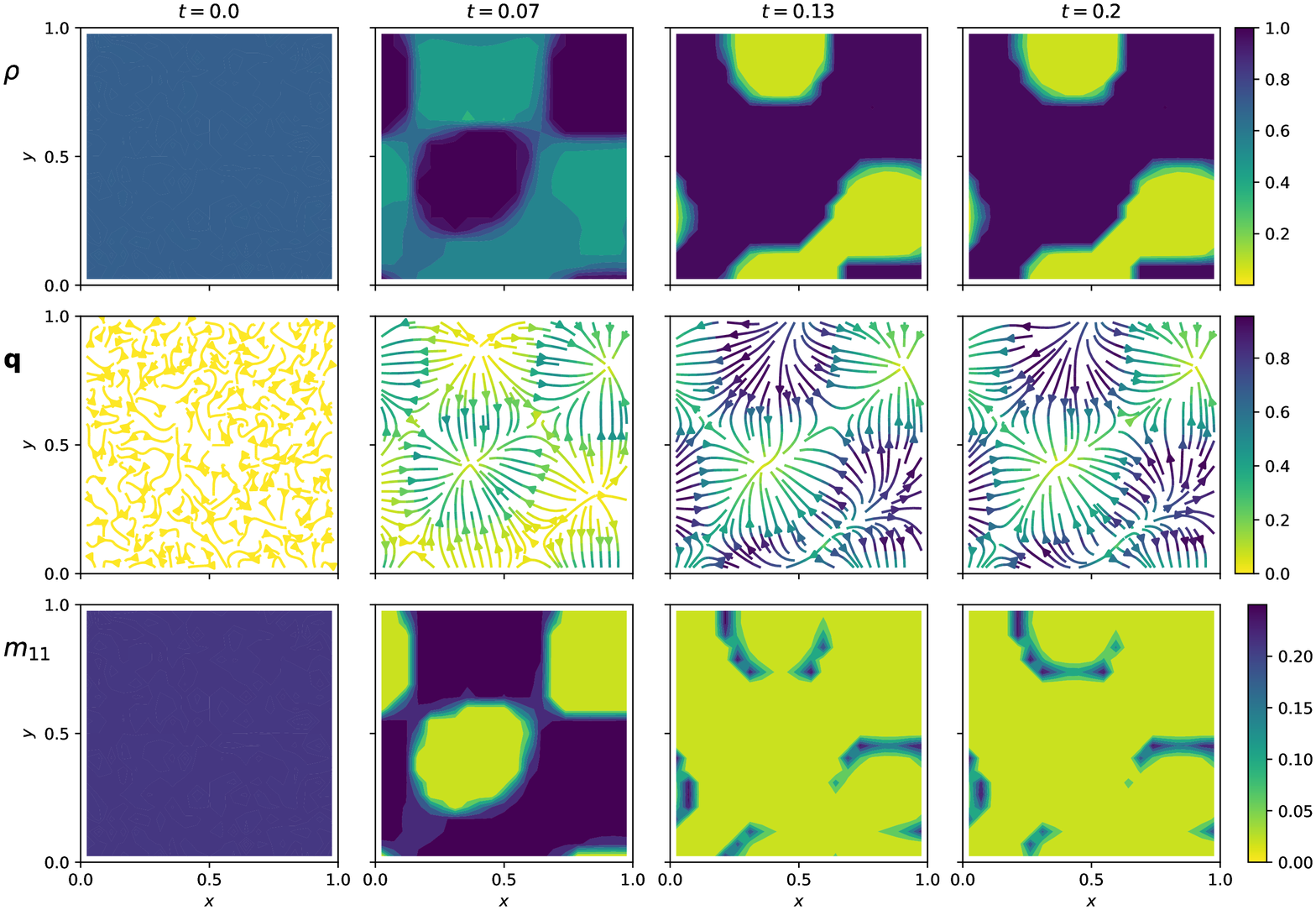}
\end{center}
  \caption{Time-evolution of Model 3 \eqref{model3_c} with $\phi = 0.7$, $\Pe = 60$ and final time $T = 0.2$.}
 \label{fig:model3_Tf=0.2_phi=0.7_v0=60.0}
\end{figure}


\begin{figure}[htb]
\begin{center}
\includegraphics[width = \textwidth]{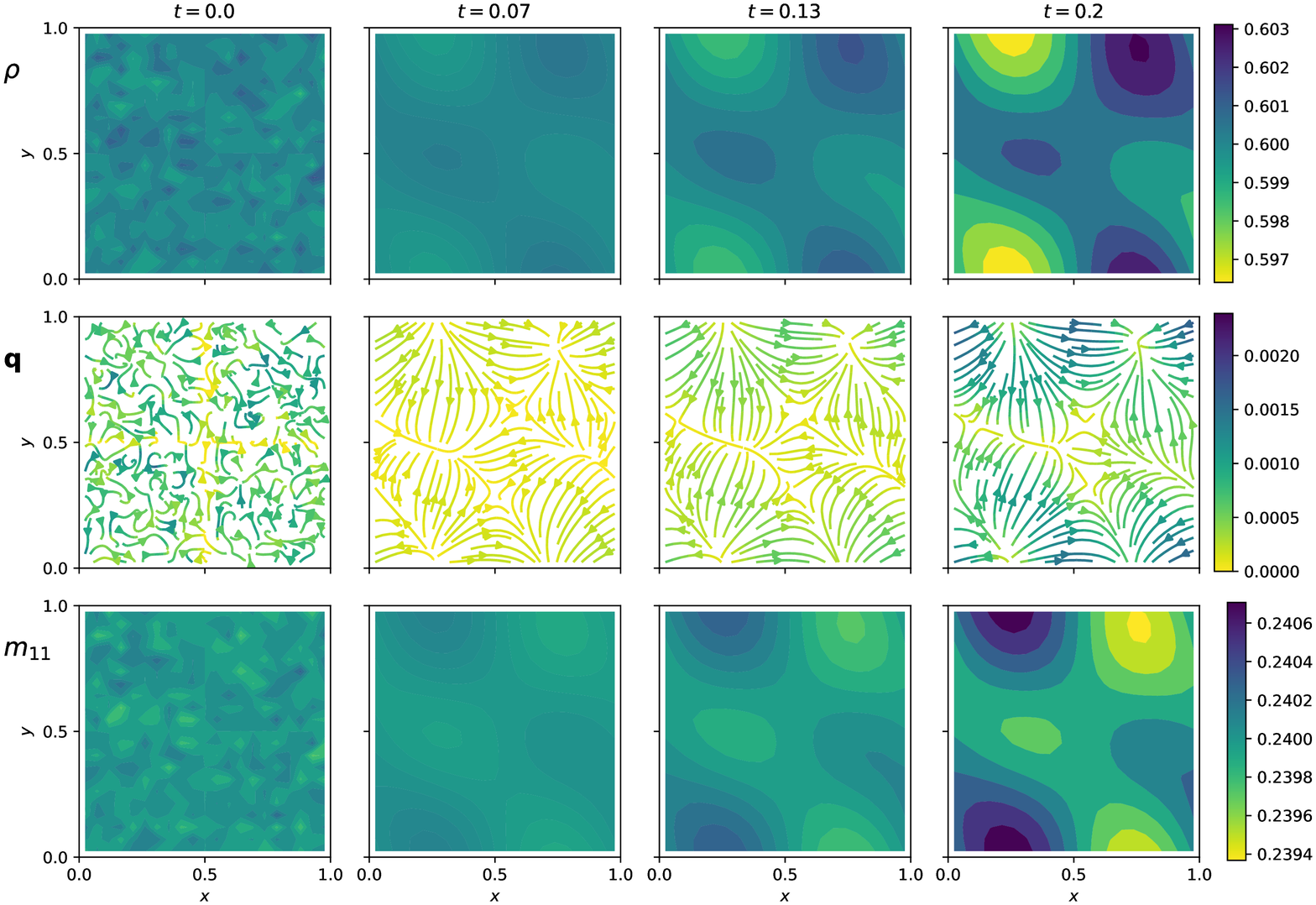}
\end{center}
  \caption{Time-evolution of Model 4 \eqref{model4_c} with $\phi = 0.6$, $\Pe = 60$ and final time $T = 0.2$.}
 \label{fig:model4_Tf=0.2_phi=0.6_v0=60.0}
\end{figure}

\begin{figure}[htb]
\begin{center}
\includegraphics[width = \textwidth]{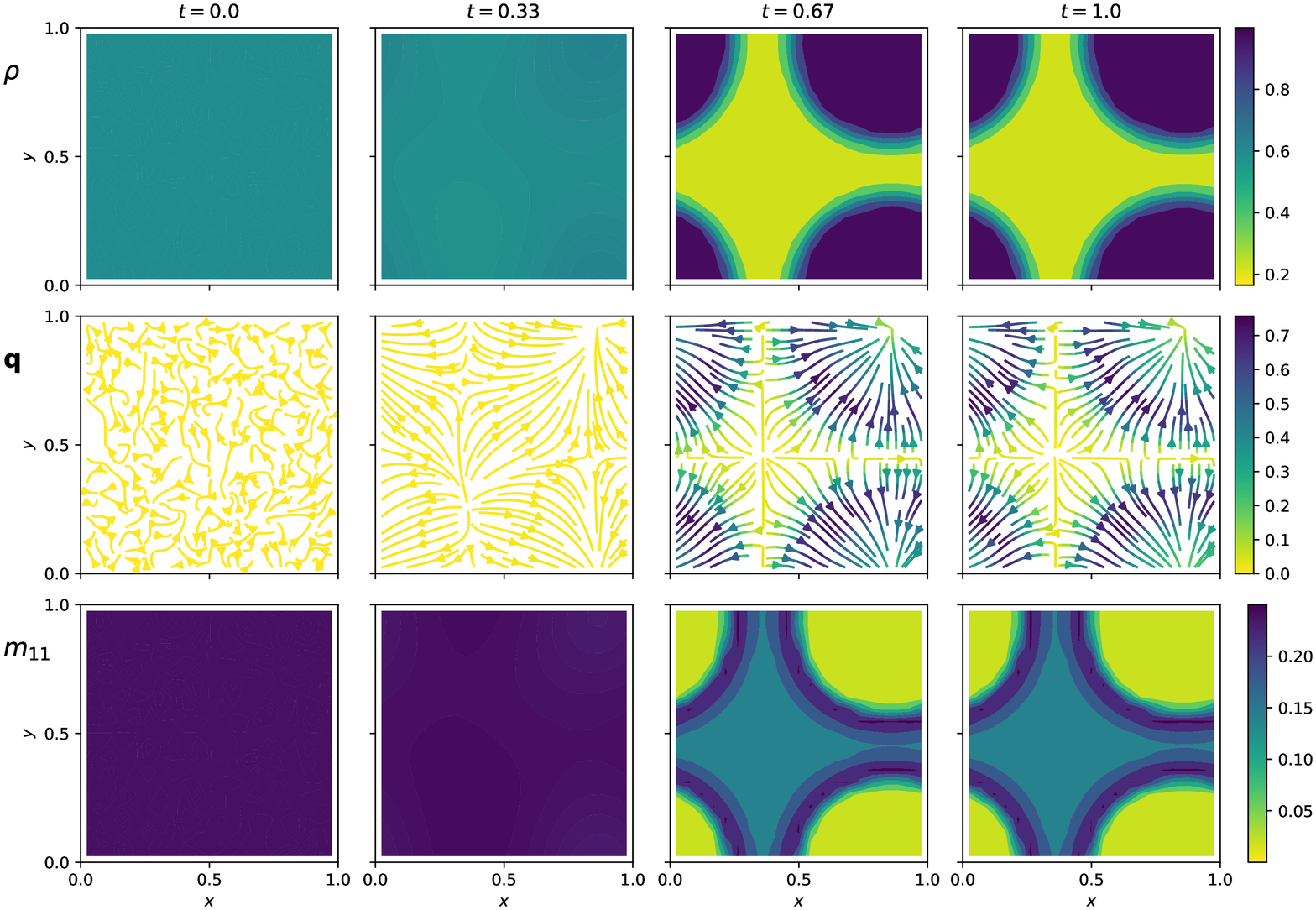}
\end{center}
  \caption{Time-evolution of Model 4 \eqref{model4_c} with $\phi = 0.6$, $\Pe = 60$ and final time $T = 1.0$.}
 \label{fig:model4_Tf=1.0_phi=0.6_v0=60.0}
\end{figure}

\begin{figure}[htb]
\begin{center}
\includegraphics[width = \textwidth]{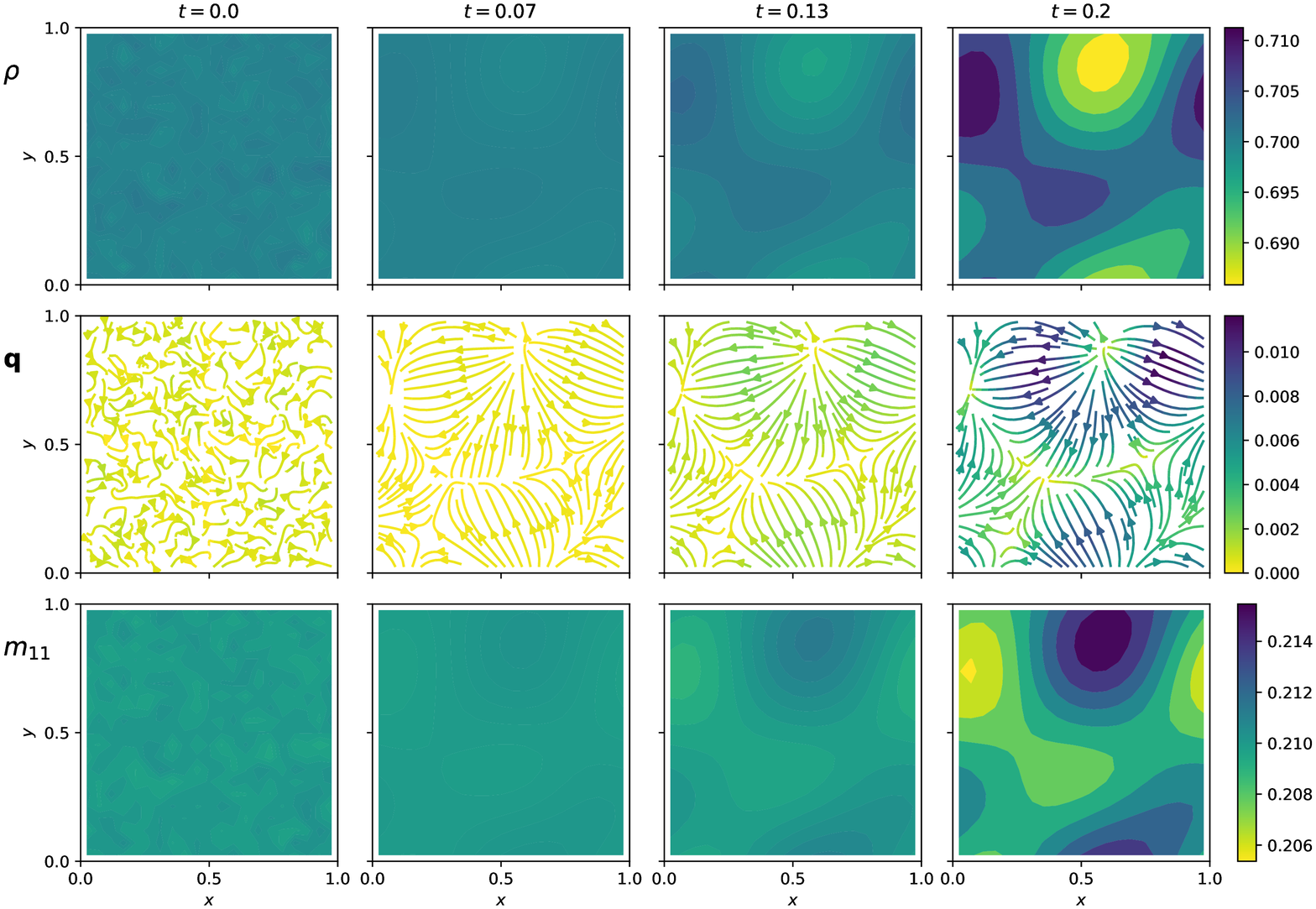}
\end{center}
  \caption{Time-evolution of Model 4 \eqref{model4_c} with $\phi = 0.7$, $\Pe = 40$ and final time $T = 0.2$.}
 \label{fig:model4_Tf=0.2_phi=0.7_v0=40.0}
\end{figure}

\begin{figure}[htb]
\begin{center}
\includegraphics[width = \textwidth]{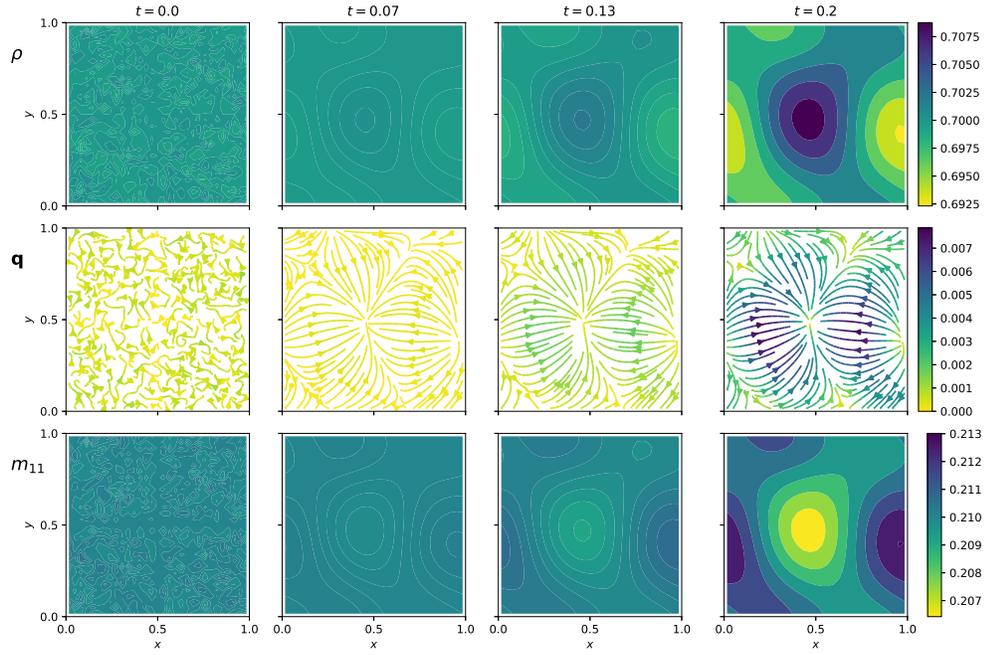}
\end{center}
  \caption{Time-evolution of Model 4 \eqref{model4_c} with $\phi = 0.7$, $\Pe = 40$ and final time $T = 0.2$. Finer grid $N_x = N_y = N_\theta = 31$.}
 \label{fig:model4_Tf=0.2_phi=0.7_v0=40.0_fine}
\end{figure}

\begin{figure}[htb]
\begin{center}
\includegraphics[width = \textwidth]{IC=Pert_general_Model=model4_Nx=21_Tf=1.0_phi=0.7_v0=40.0_d=0.01.eps}
\end{center}
  \caption{Time-evolution of Model 4 \eqref{model4_c} with $\phi = 0.7$, $\Pe = 40$ and final time $T = 1.0$.}
 \label{fig:model4_Tf=1.0_phi=0.7_v0=40.0}
\end{figure}

\begin{figure}[htb]
\begin{center}
\includegraphics[width = \textwidth]{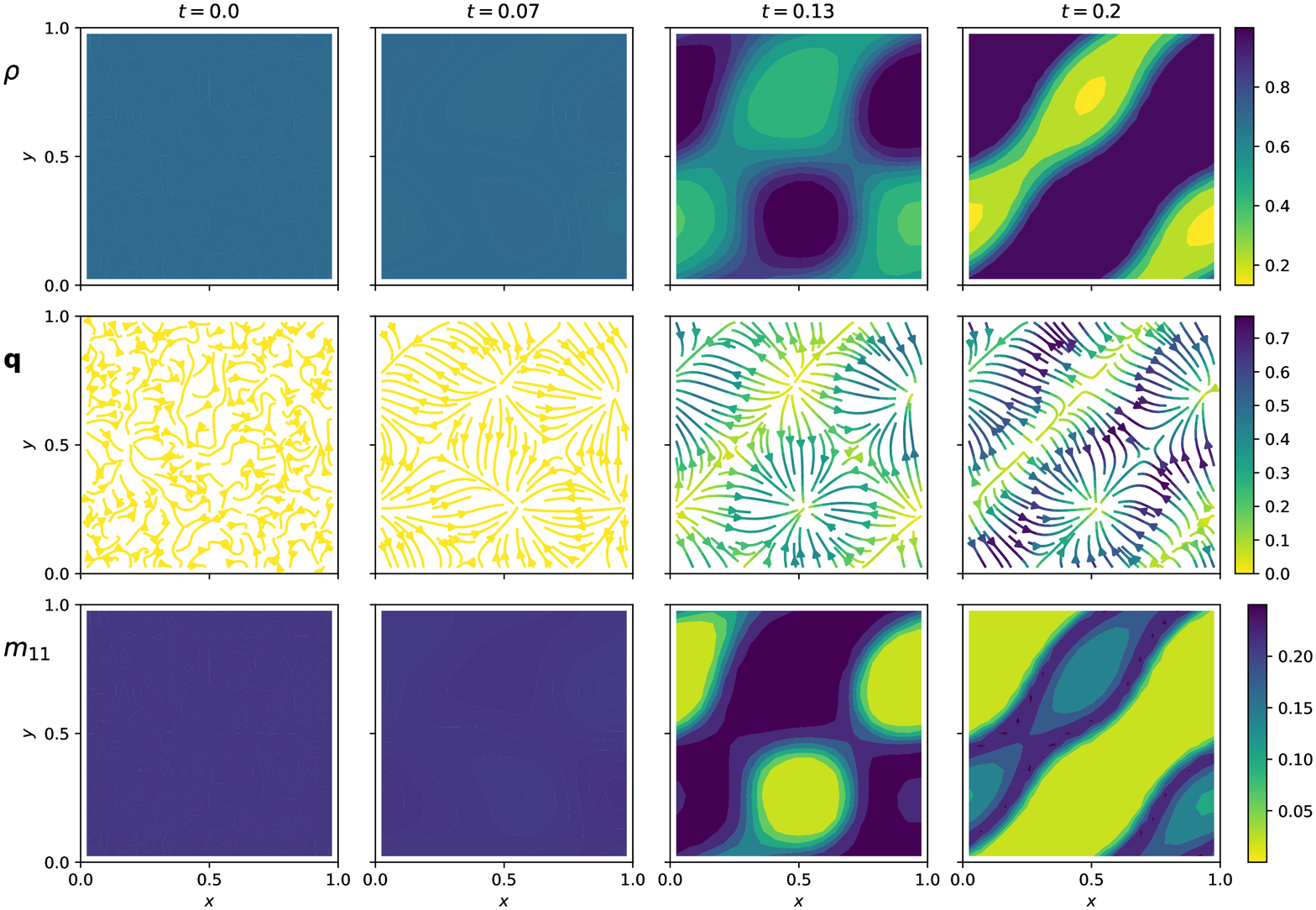}
\end{center}
  \caption{Time-evolution of Model 4 \eqref{model4_c} with $\phi = 0.7$, $\Pe = 60$ and final time $T = 0.2$.}
 \label{fig:model4_Tf=0.2_phi=0.7_v0=60.0}
\end{figure}

\begin{figure}[htb]
\begin{center}
\includegraphics[width = \textwidth]{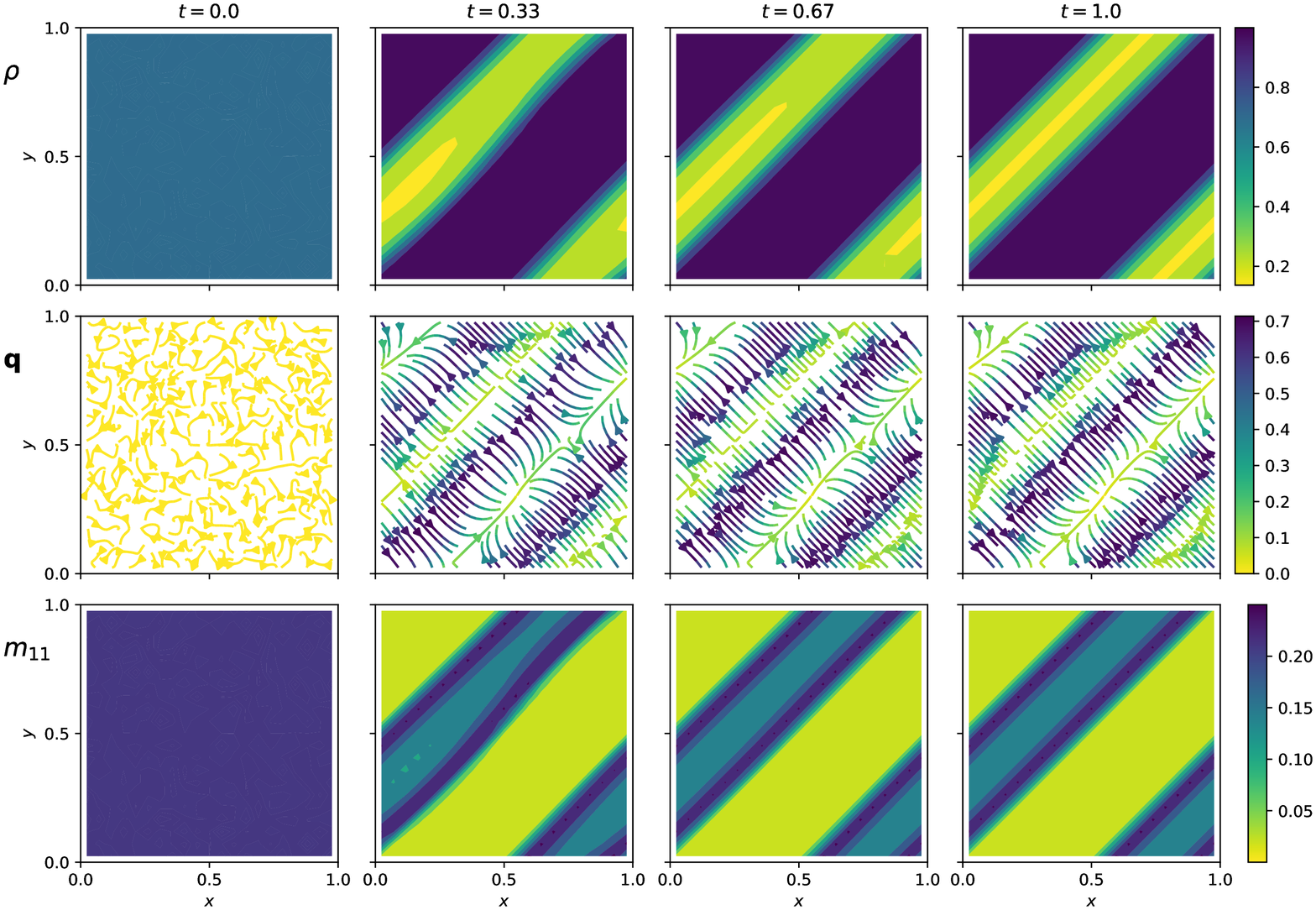}
\end{center}
  \caption{Time-evolution of Model 4 \eqref{model4_c} with $\phi = 0.7$, $\Pe = 60$ and final time $T = 1$.}
 \label{fig:model4_Tf=1.0_phi=0.7_v0=60.0}
\end{figure}

\section{Supplementary stochastic simulations of Model 4}
In this section, we show examples of the microscopic Model 4. In particular, we generate several runs of varying occupied fraction $\phi$ and effective speed $\Pe$, all with $N= 500$ particles starting that are initially independently and uniformly distributed in space and orientation (while satisfying the constraint of one particle per lattice site at most. The total number of lattice sites $M$ is chosen such that $M = N/\phi$. The system is evolved according to the active ASEP described in \cref{sec:model3} until $T=1$ using a fixed time-step $\Delta t = 10^{-4}$. \cref{fig:ABM4_Tf=1.0_v0=100.0} shows eight runs at fixed $\Pe = 100$ and increasing values of $\phi$. We observe no phase segregation for the lowest densities ($\phi = 0.1$ and $\phi=0.2$), and a sharp change for $\phi\ge 0.3$ with a dense and dilute phases forming. All the particles are contained in the dense phase for the three largest values of $\phi$. 
We define the local polarisation at a lattice point $\x$ by $\q(\x) = \sum_{\X_j \in \mathcal N_9(\x)} \exp(i \Theta_j)$, where the sum runs over all the particles with orientation $\Theta_j$ whose position $\X_j$ lies within the Moore neighbourhood of $\x$. The absolute value $|\q|$ is shown as a colormap in \cref{fig:ABM4_Tf=1.0_v0=100.0}, with white corresponding to $|\q| = 0$ and dark green to $|\q| = 1$. 
Another example, this time with fixed occupied fraction $\phi = 0.59$ and increasing values of $\Pe$ is shown in \cref{fig:ABM4_Tf=1.0_phi=0.6}.

\begin{figure}
\begin{center}
	\includegraphics[width = 0.40\textwidth]{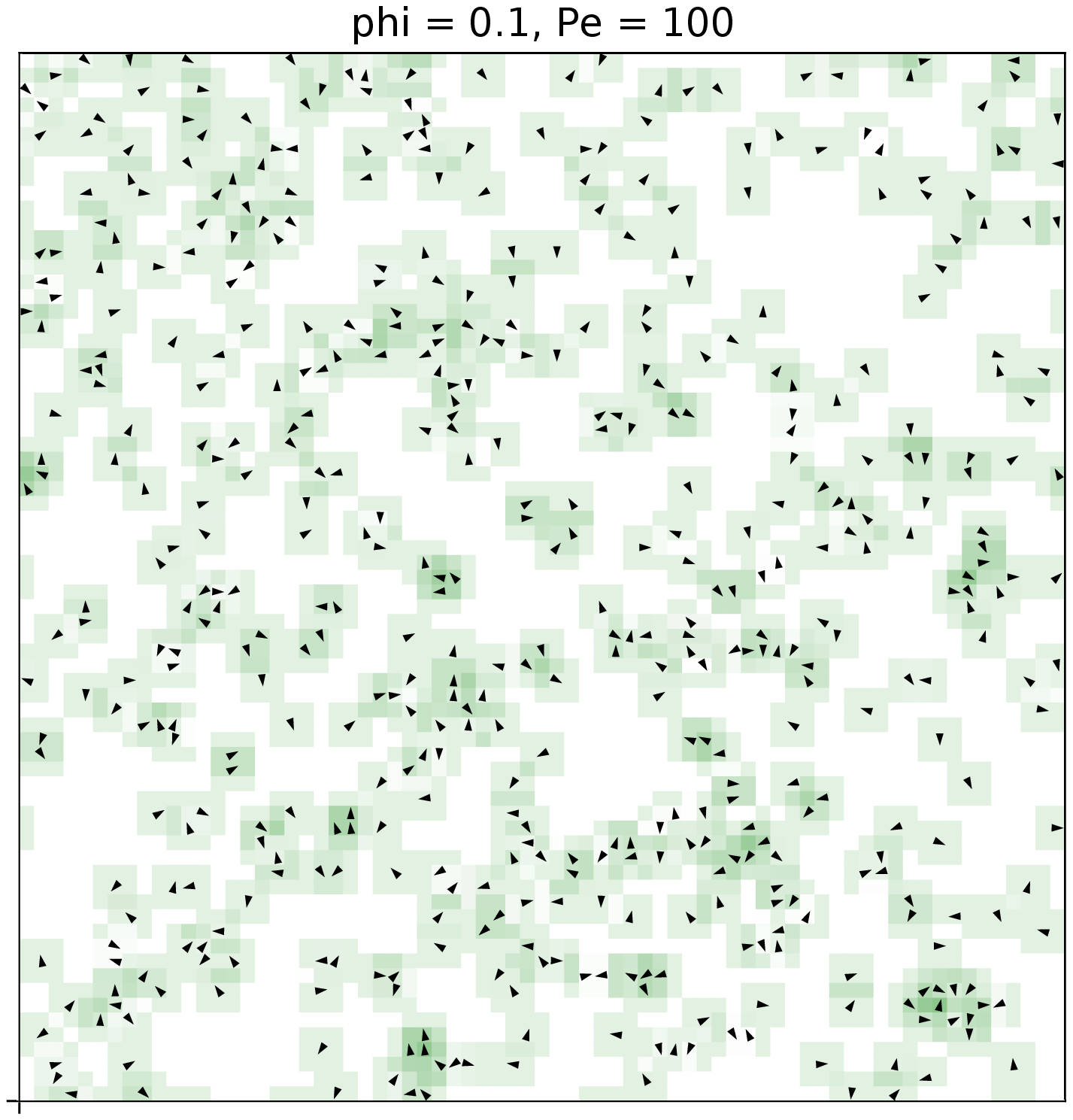}\includegraphics[width = 0.40\textwidth]{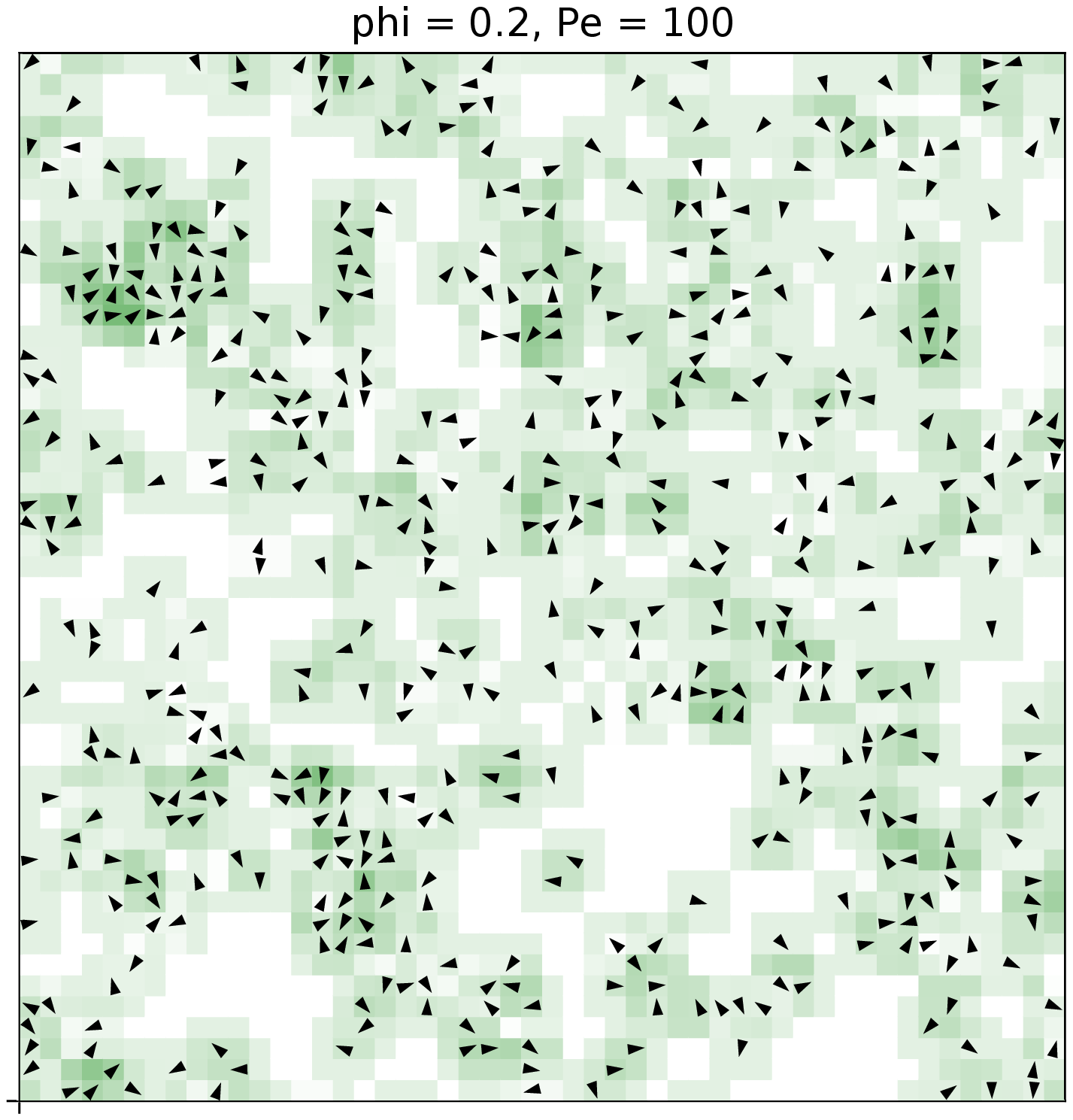} \\
	\includegraphics[width = 0.40\textwidth]{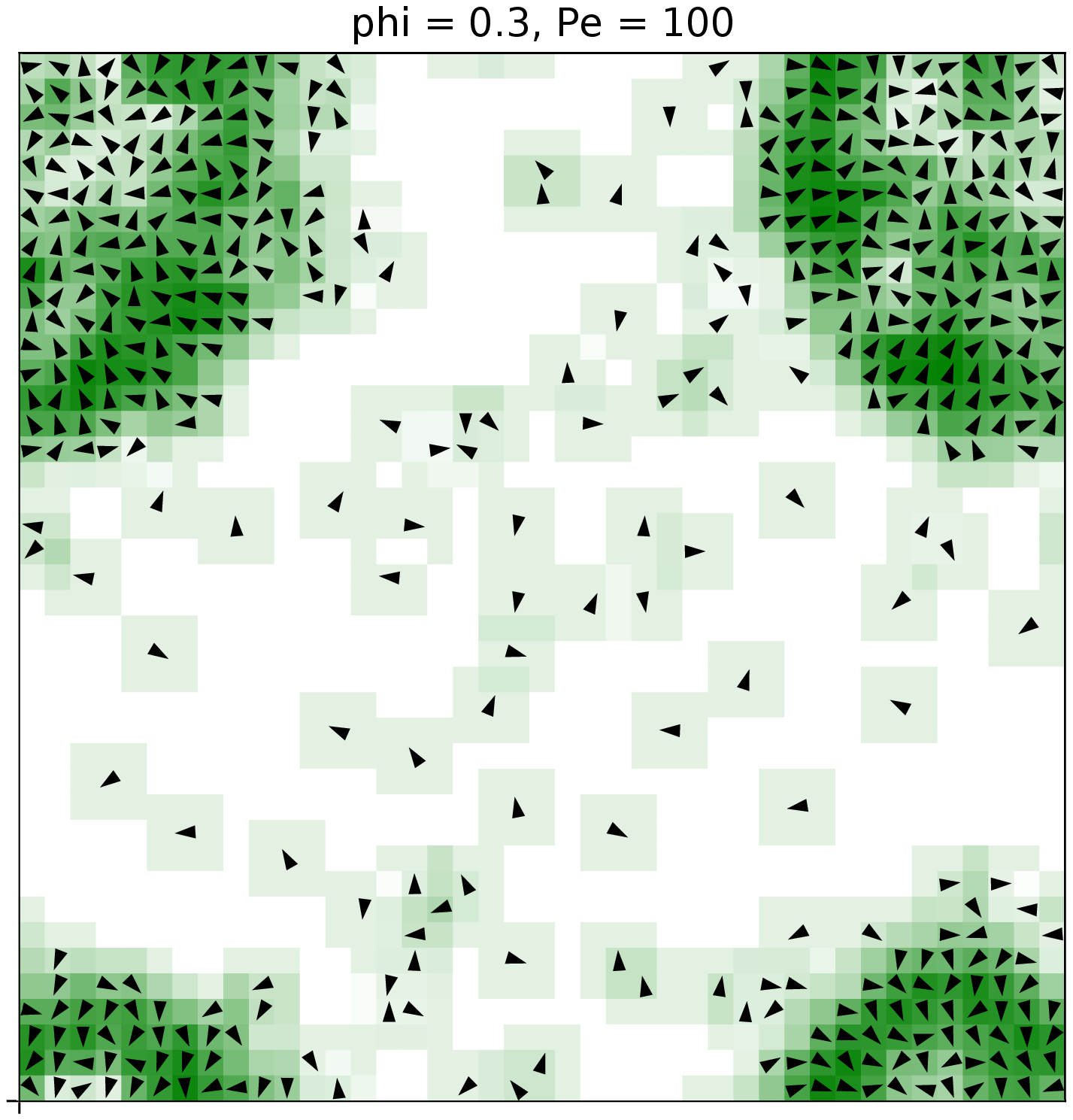} 
	\includegraphics[width = 0.40\textwidth]{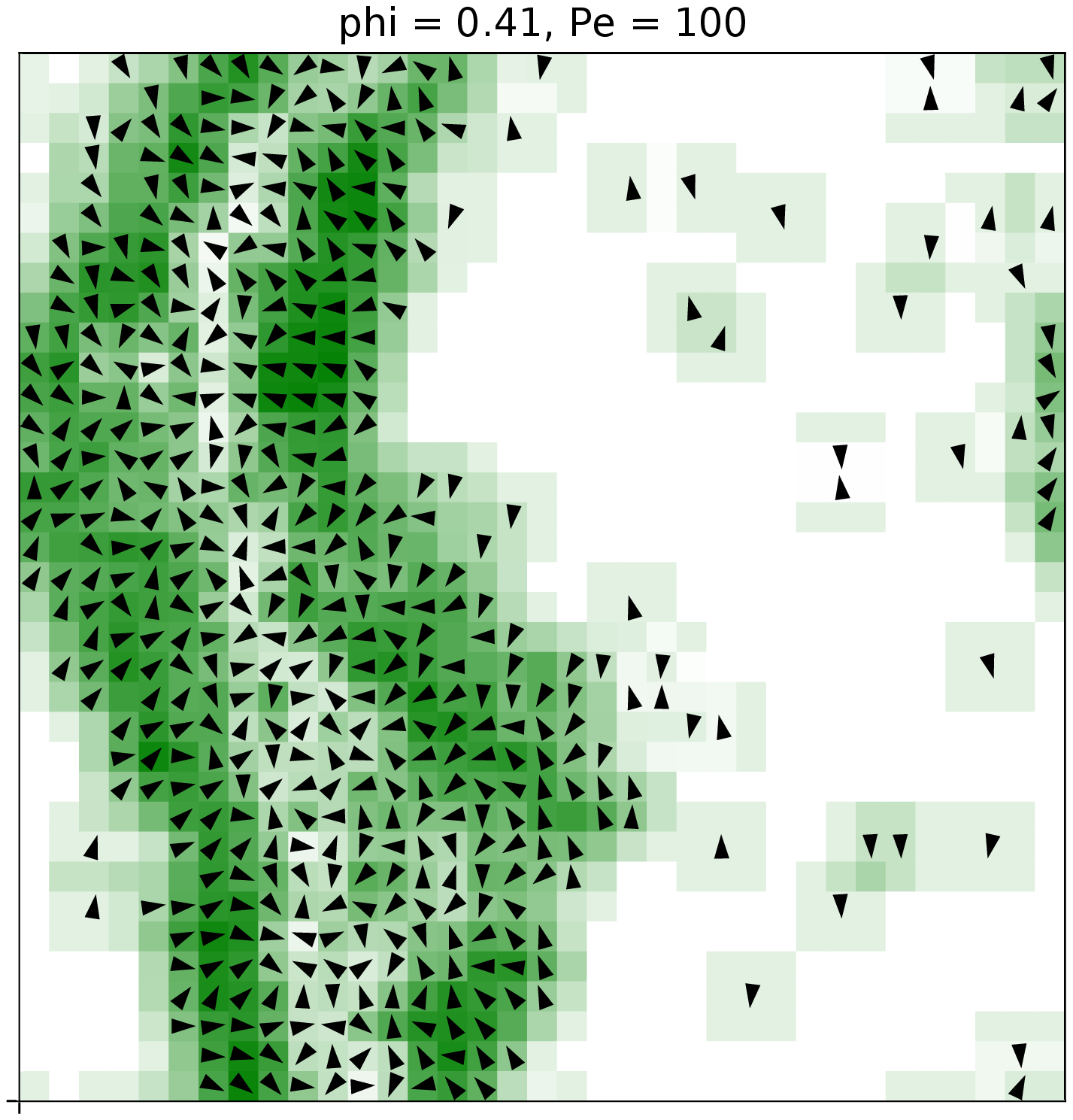} \\
	\includegraphics[width = 0.40\textwidth]{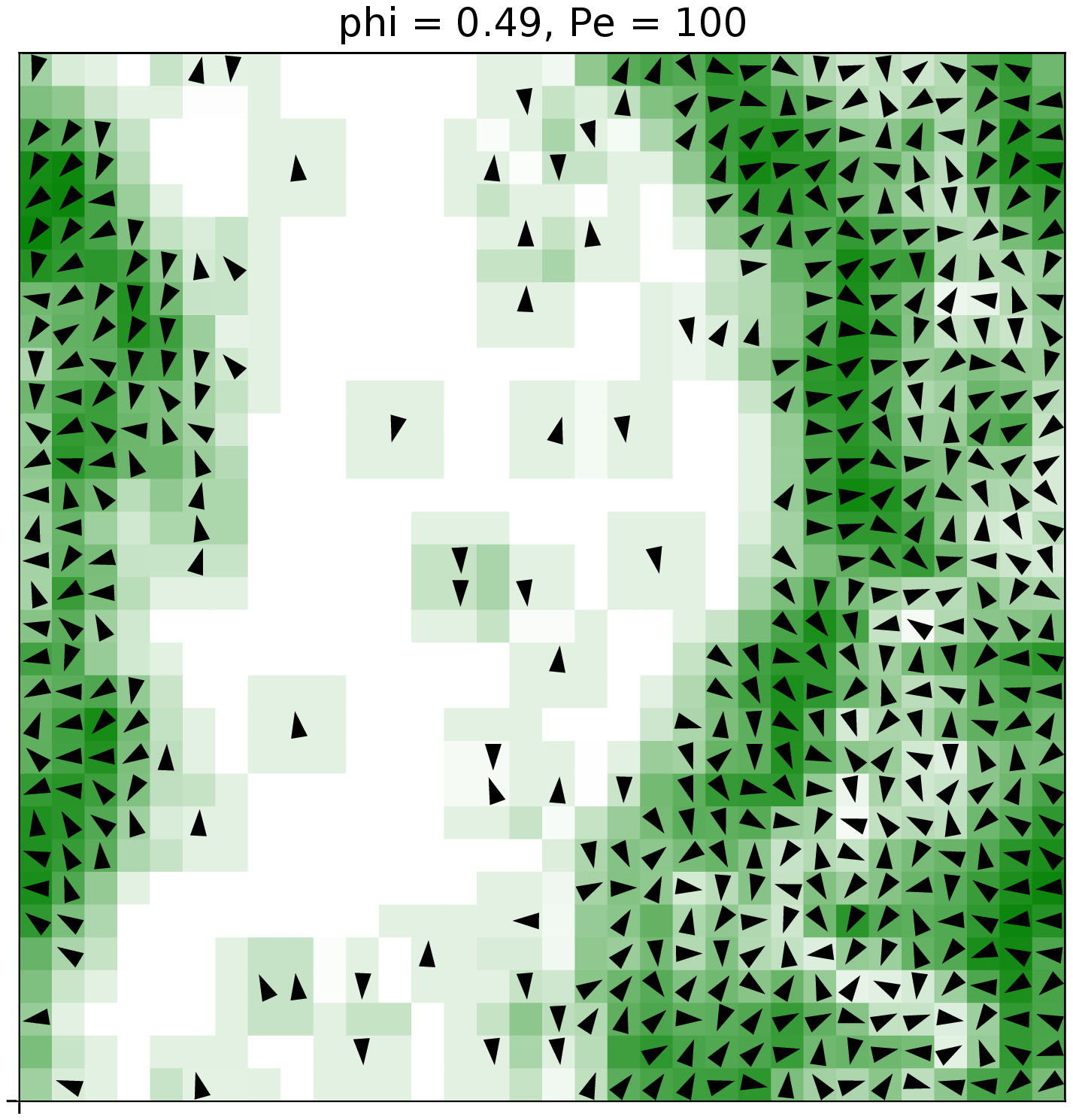}\includegraphics[width = 0.40\textwidth]{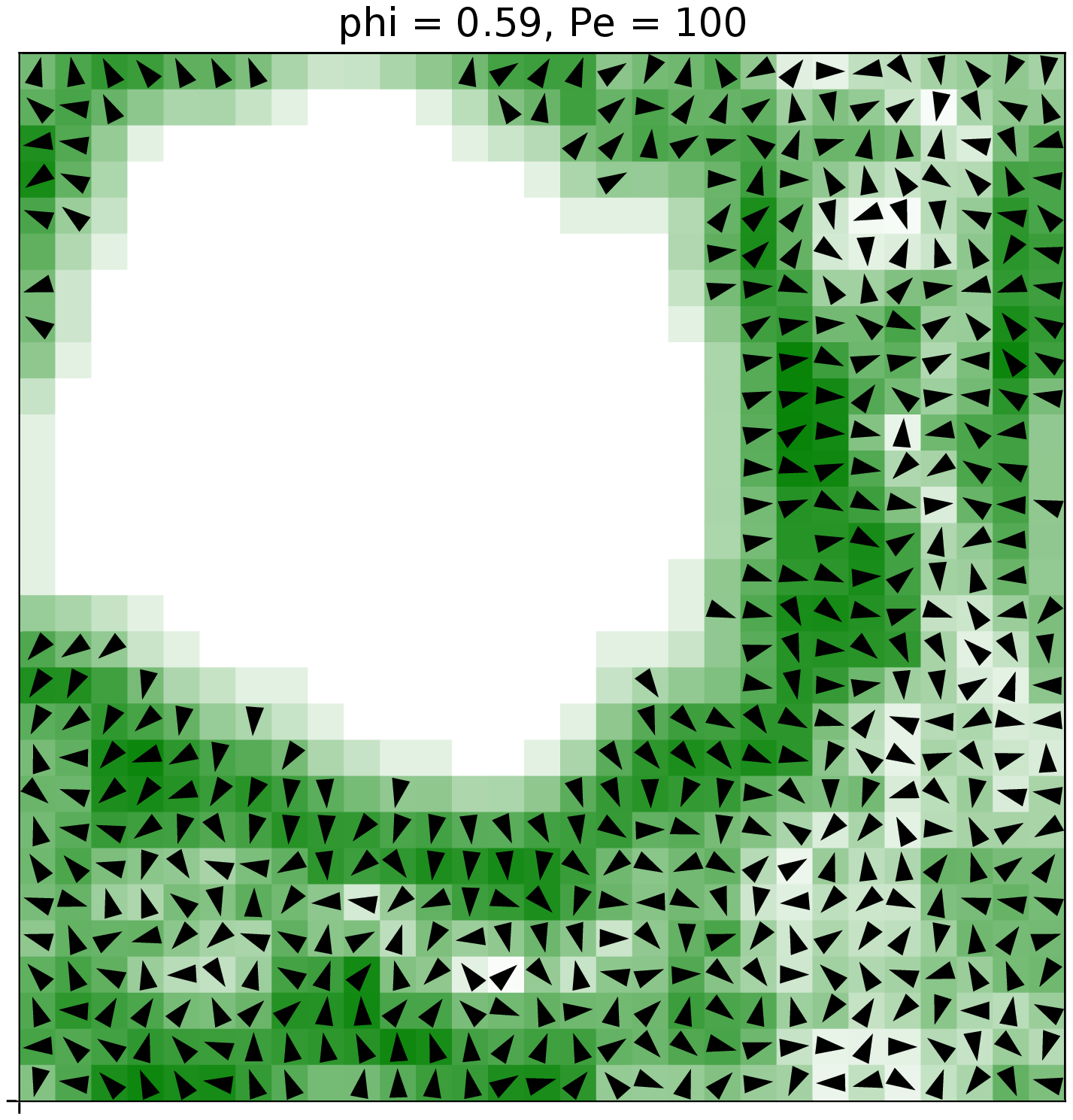} \\
	\includegraphics[width = 0.40\textwidth]{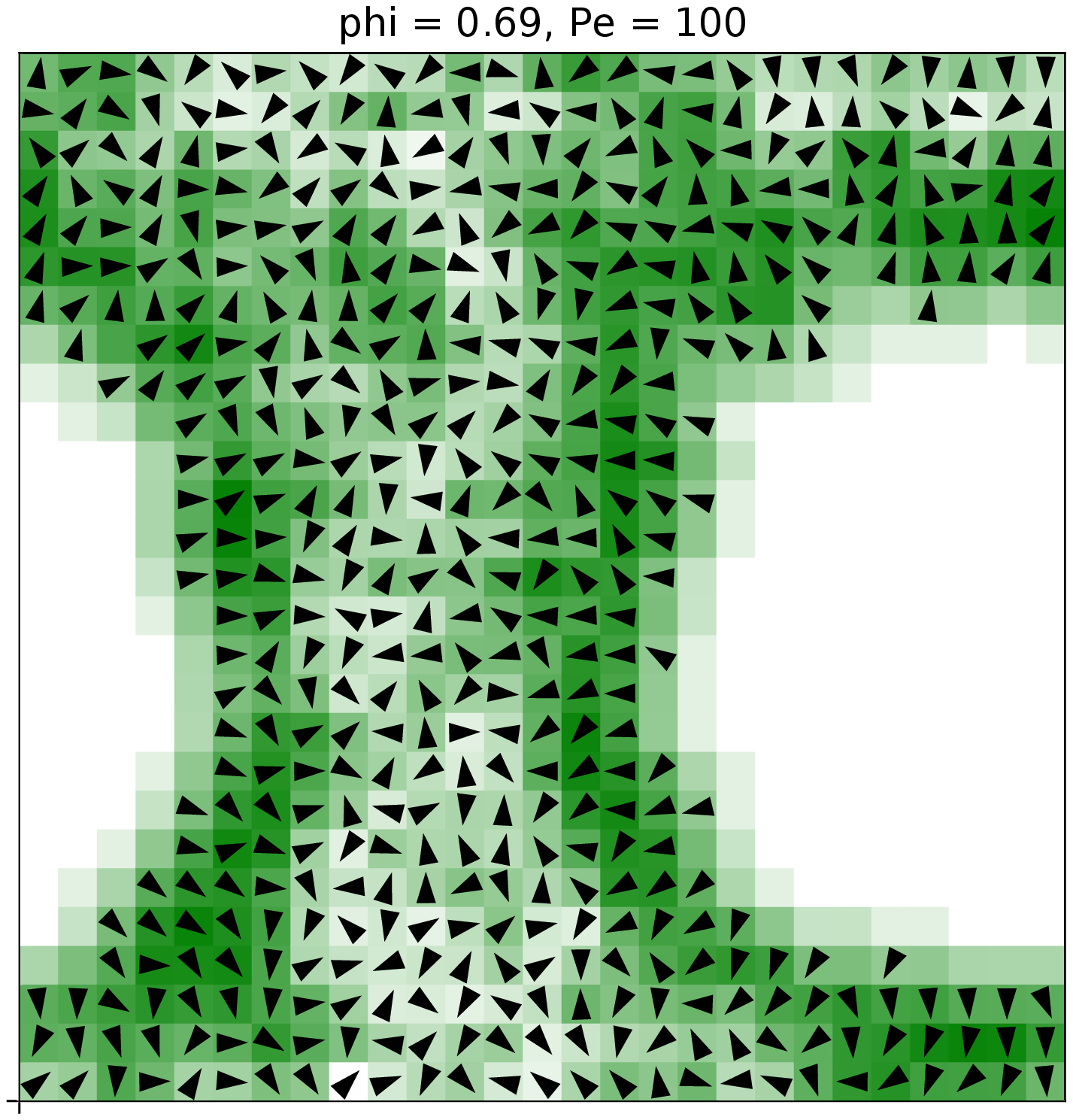}\includegraphics[width = 0.40\textwidth]{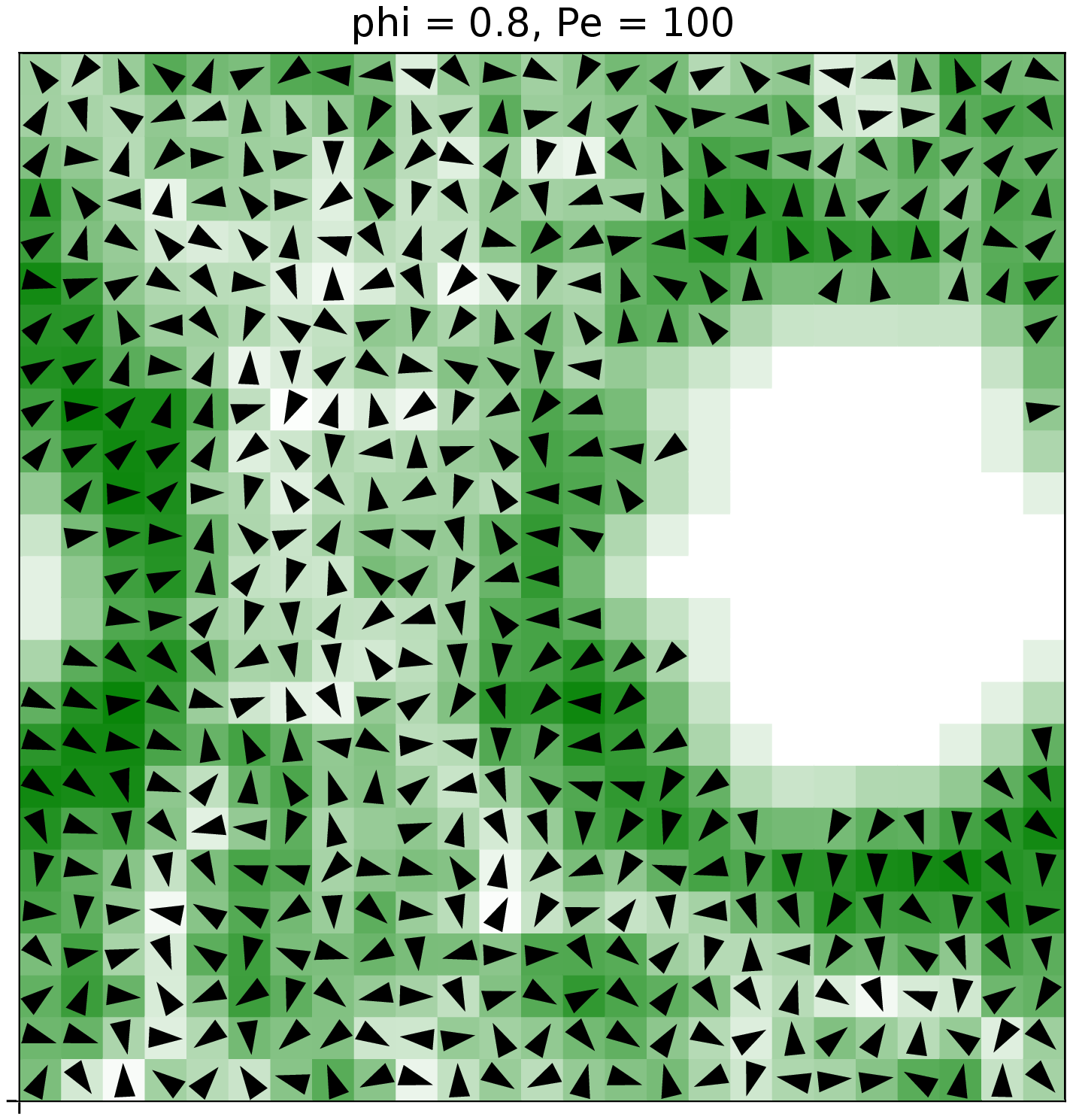} 
\end{center}	
  \caption{Snapshot of the microscopic lattice Model 4 at time $T = 1$ with $\Pe = 100$ and increasing values of $\phi = 0.1, 0.2, \dots, 0.8$.}
 \label{fig:ABM4_Tf=1.0_v0=100.0}
\end{figure}

\begin{figure}
\begin{center}
	\includegraphics[width = 0.40\textwidth]{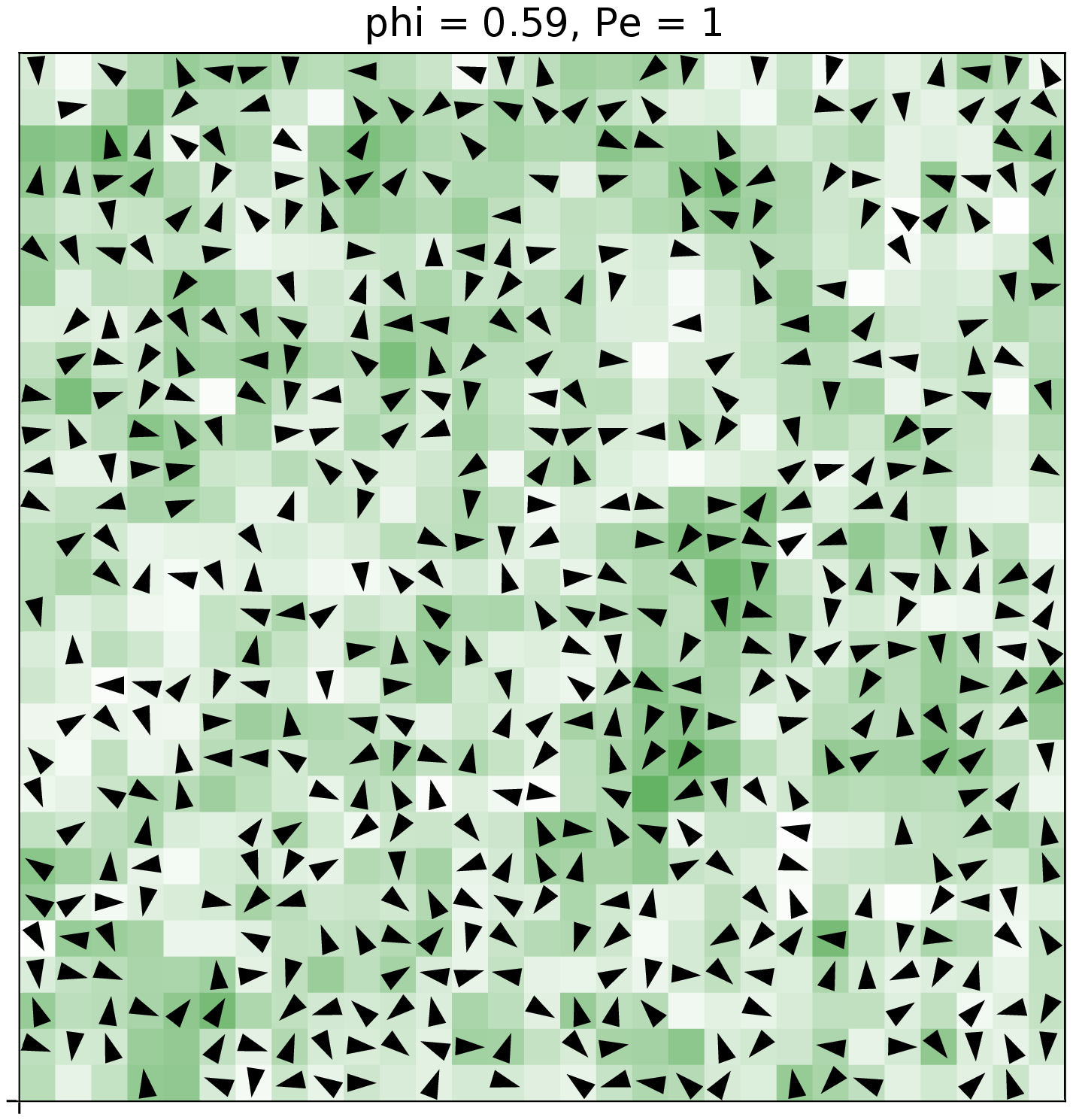}
	\includegraphics[width = 0.40\textwidth]{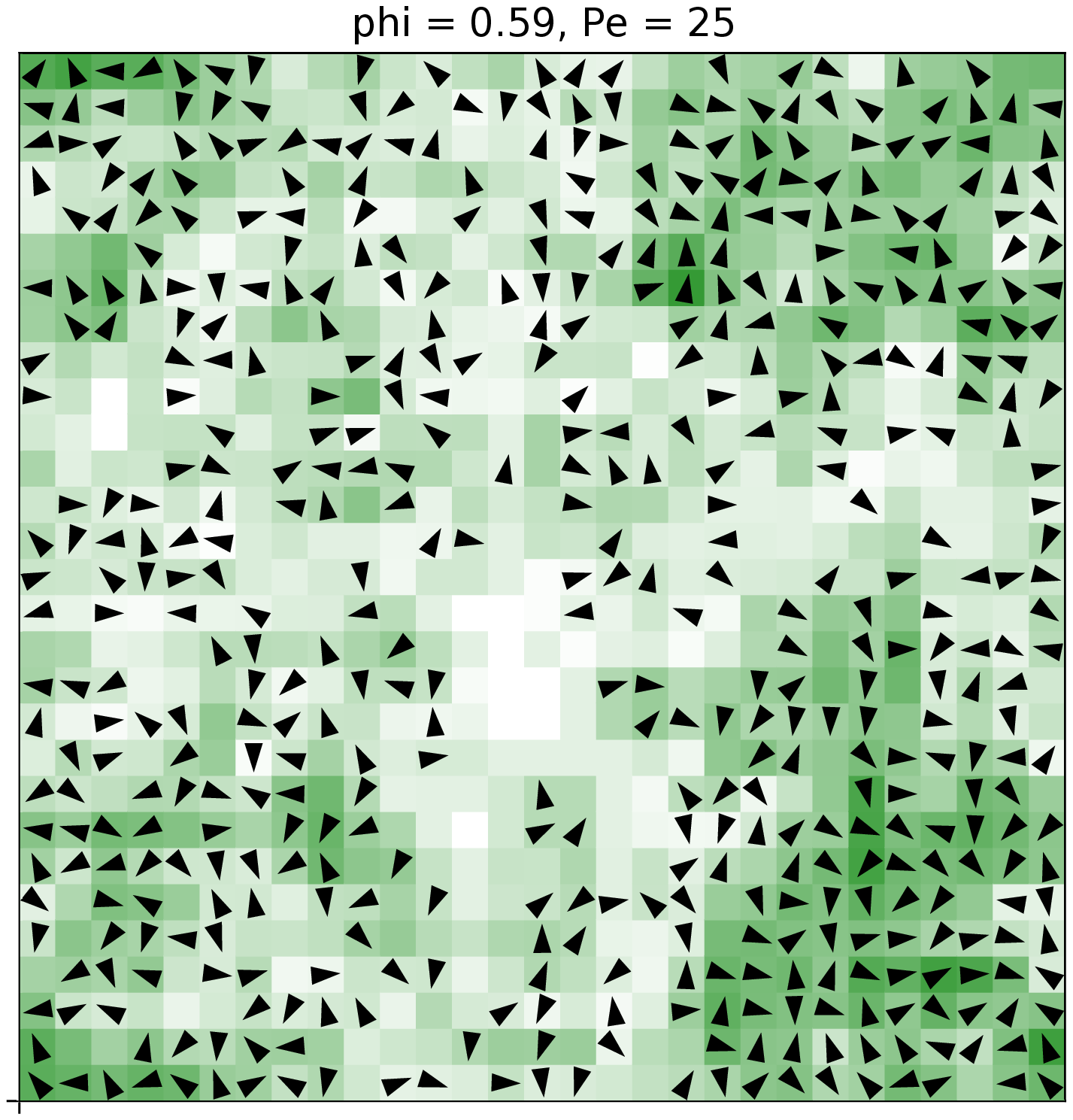} \\
	\includegraphics[width = 0.40\textwidth]{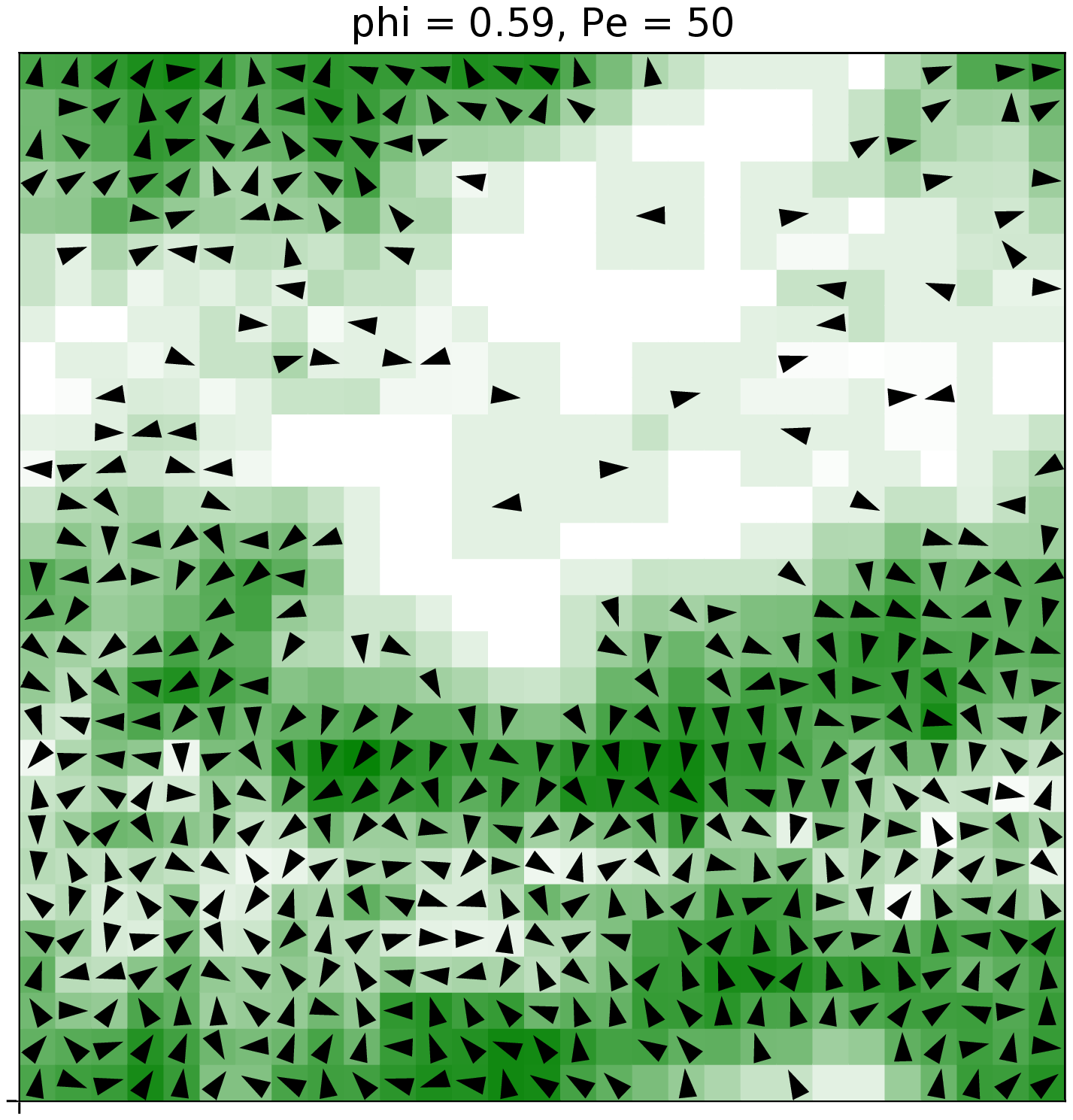}
	\includegraphics[width = 0.40\textwidth]{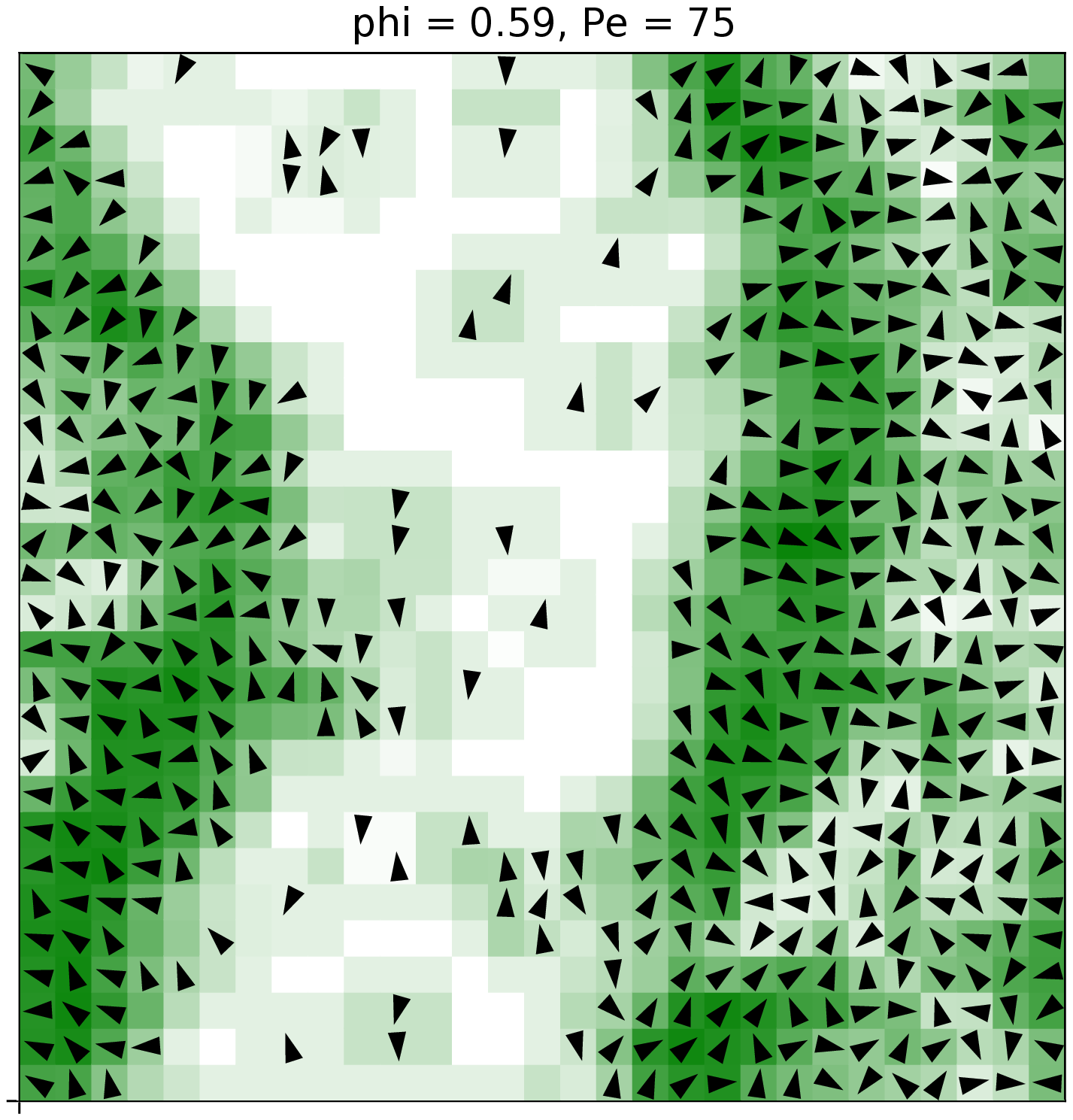}
\end{center}	
  \caption{Snapshot of the microscopic lattice Model 4 at time $T = 1$ with $\phi = 0.59$  and increasing values of $\Pe = 1, 25, 50, 75$.}
 \label{fig:ABM4_Tf=1.0_phi=0.6}
\end{figure}

We introduce $p_9(N, M)$ as the probability that, given $N$ uniformly distributed particles in a lattice with $M$ sites, a particle has its eight Moore  neighbouring sites occupied. For $M > N \ge 9$, this is given by
\begin{equation}
	\label{eq:p9unif}
	p_9(N, M) = \frac{(N-1)(N-2) \dots (N-8)}{(M-1)(M-2) \dots (M-8)}.
\end{equation}
In the main text, we use $\phi = N/M$ to write $P_{9,unif}(\phi) = P_{9}(\phi, 0) := p_9(N, N/\phi)$ while keeping the dependency on $N$ implicit (as we keep $N= 500$ fixed throughout the experiments). 
In the simulations, we measure $P_9$ by adding all the particles $\X_i$ have full Moore neighbourhoods $\mathcal N_9(\X_i)$ and dividing by the total number of particles $N$. We use the difference between the measured value of $P_9(\phi, \Pe)$ for given values of $\phi$ and $\Pe$ at time $T=1$ and the theoretical value for $P_9(\phi, 0)$ using \cref{eq:p9unif} to quantify the degree of non-uniformity in the system. \cref{fig:model4_ABM_metrics_sup}(left) shows the difference $\Delta P_9 := P_9 - P_{9,unif}$ for a range of values of $\phi, \Pe$, whereas \cref{fig:model4_ABM_metrics_sup}(right) shows the same data in a different format and using the relative difference $\delta P_9:= \Delta P_9/P_9$ (which we set to zero whenever $P_9 = 0$), so that the boundary between close to uniform and phase-segregated cases is very clear (especially for low values of $\phi$). 
\begin{figure}
\begin{minipage}[c]{0.45\linewidth}
\centering
\includegraphics[width = \textwidth]{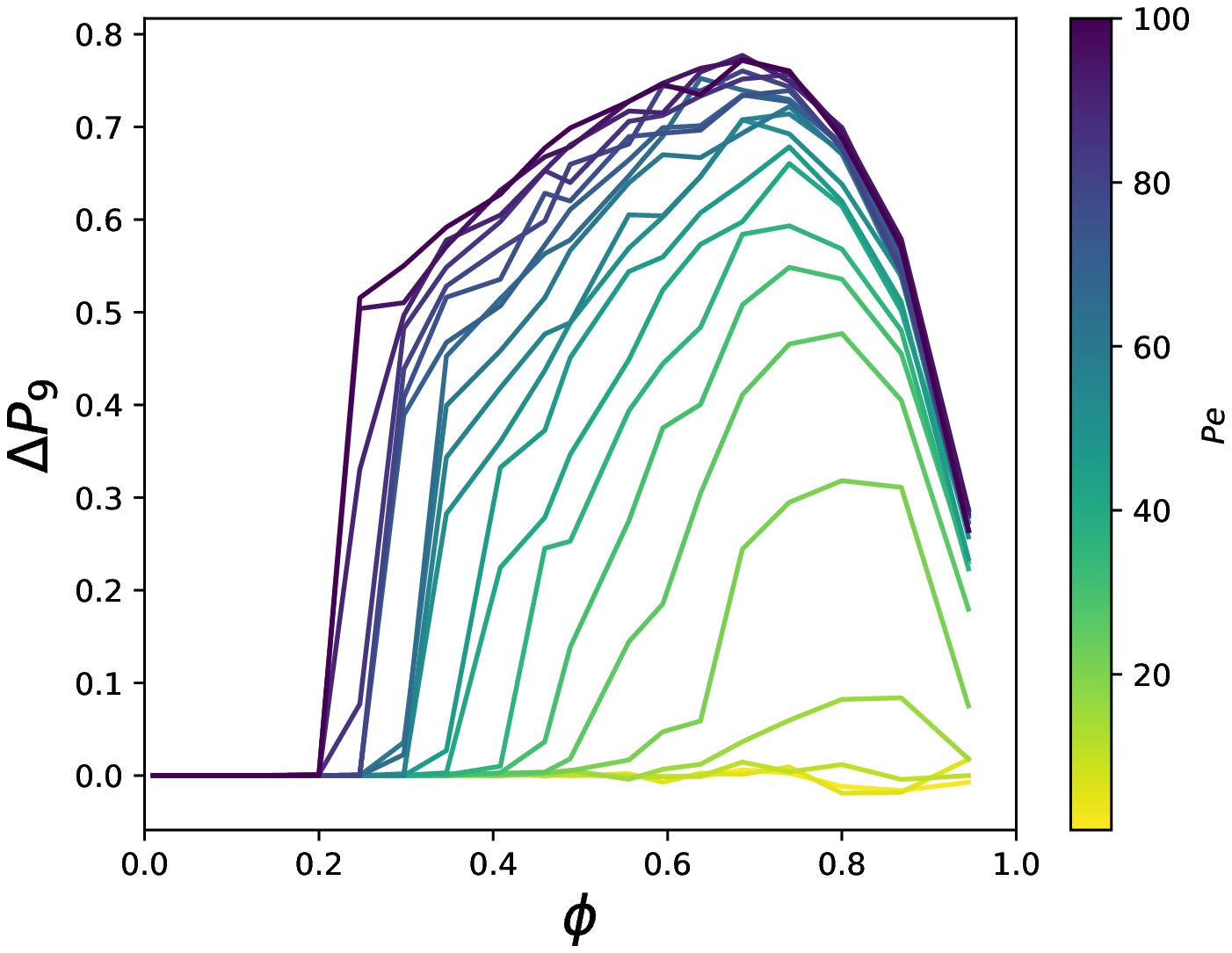}
\end{minipage}
\begin{minipage}[c]{0.55\linewidth}
\centering
\includegraphics[width =\textwidth]{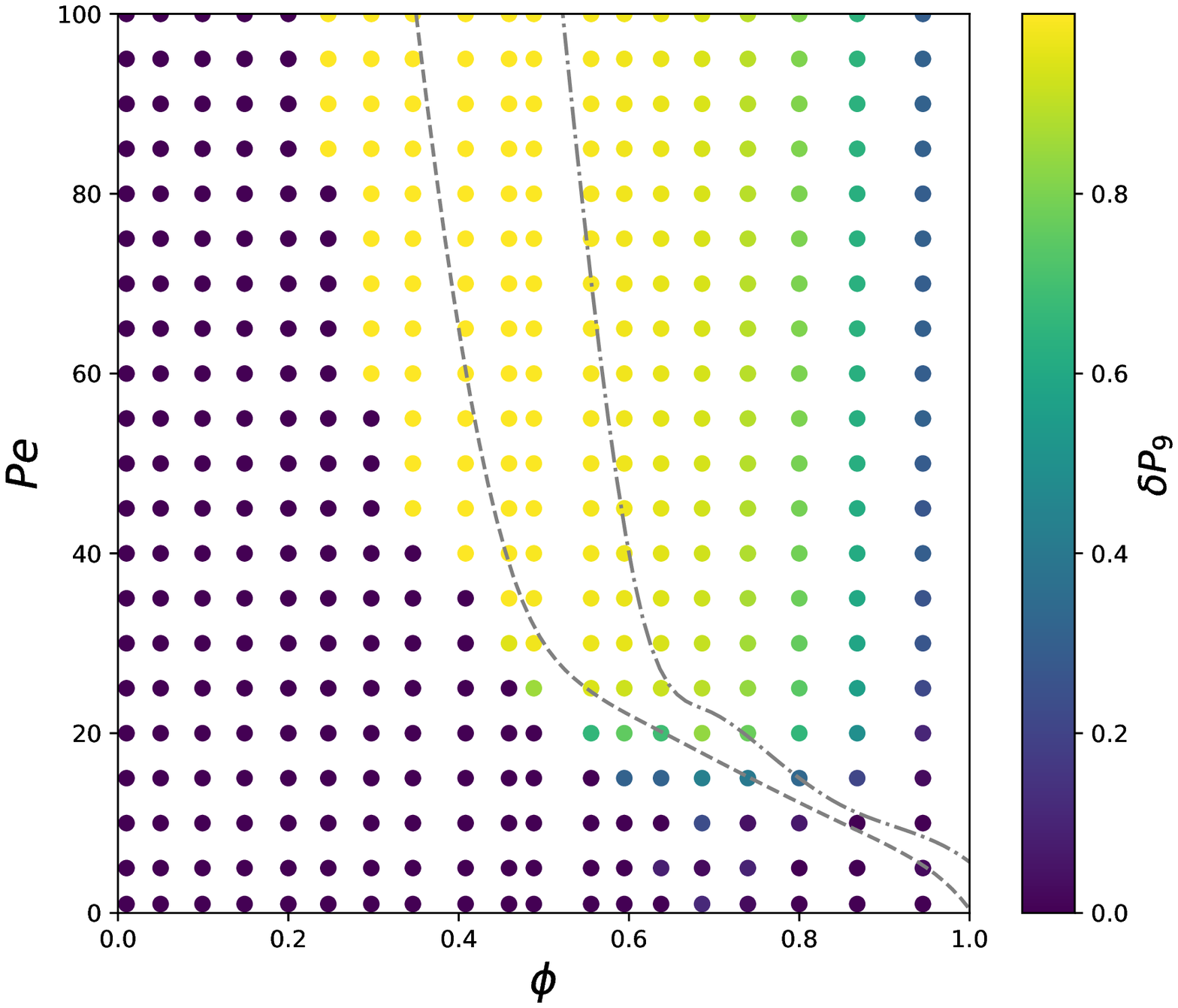}
\end{minipage}
\caption{Cluster fraction $P_{9}(\phi, \Pe)$ in the stochastic Model 4 for varying values of $\phi$ and $\Pe$ relative to $P_{9,unif}$, which is the cluster fraction when particles are uniformly distributed \cref{eq:p9unif}. Left: curves $\Delta P_9 := P_9 - P_{9,unif}$ for various $\Pe$ computed from simulations of Model 4. Right: colormap of the relative difference $\delta P_9:= \Delta P_9/P_9$. The two grey lines correspond to the boundaries between linear and Lyapunov stability and Lyapunov stability and instability from the PDE model \cref{model4_c} (as shown in \cref{fig:fulldispersion}). }
 \label{fig:model4_ABM_metrics_sup}
\end{figure}

\end{document}